\def\mindex#1{\index{#1}}
\def\sq{\hbox{\rlap{$\sqcap$}$\sqcup$}}
\def\qed{\ifmmode\sq\else{\unskip\nobreak\hfil
\penalty50\hskip1em\null\nobreak\hfil\sq
\parfillskip=0pt\finalhyphendemerits=0\endgraf}\fi\medskip}
\long\def\defbox#1{\framebox[.9\hsize][c]{\parbox{.85\hsize}{%
\parindent=0pt
\baselineskip=12pt plus .1pt      
\parskip=6pt plus 1.5pt minus 1pt 
 #1}}}
\long\def\beginbox#1\endbox{\subsection*{}%
\hbox{\hspace{.05\hsize}\defbox{\medskip#1\bigskip}}%
\subsection*{}}
\def\endbox{}
\def\diag{{\text{diag}}}
\def\tr{\mathsf{tr}}
\newsavebox{\junk}
\savebox{\junk}[1.6mm]{\hbox{$|\!|\!|$}}
\def\bC{{\mathbb C}}
\def\bE{{\mathbb E}}
\def\bR{{\mathbb R}}
\def\sfH{{\sf H}}
\def\bfmath#1{{\mathchoice{\mbox{\boldmath$#1$}}%
{\mbox{\boldmath$#1$}}%
{\mbox{\boldmath$\scriptstyle#1$}}%
{\mbox{\boldmath$\scriptscriptstyle#1$}}}}
\def\bfmY{\bfmath{Y}}
\def\bfmhhaY{\bfmath{\hhaY}} 
\def\bfmhhaY{\hbox to 0pt{$\widehat{\bfmY}$\hss}\widehat{\phantom{\raise 1.25pt\hbox{$\bfmY$}}}}
\def\til={{\widetilde =}}
 \def\FRAC#1#2#3{\genfrac{}{}{}{#1}{#2}{#3}}
\def\ddtp{{\mathchoice{\FRAC{1}{d^{\hbox to 2pt{\rm\tiny +\hss}}}{dt}}%
{\FRAC{1}{d^{\hbox to 2pt{\rm\tiny +\hss}}}{dt}}%
{\FRAC{3}{d^{\hbox to 2pt{\rm\tiny +\hss}}}{dt}}%
{\FRAC{3}{d^{\hbox to 2pt{\rm\tiny +\hss}}}{dt}}}}
\def\average#1,#2,{{1\over #2} \sum_{#1}^{#2}}
\def\eye(#1){{\bf(#1)}\quad}
\newtheorem{theorem}{{\bf Theorem}}
\newtheorem{remark}{{\bf Remark}}
\newtheorem{lemma}{{\bf Lemma}}
\def\eq#1/{(\ref{e:#1})}
\newcommand{\beqn}[1]{\notes{#1}%
\begin{eqnarray} \elabel{#1}}
\newcommand{\eeqn}{\end{eqnarray} }
\newcommand{\beq}[1]{\notes{#1}%
\begin{equation}\elabel{#1}}
\newcommand{\eeq}{\end{equation}}
\def\bdes{\begin{description}}
\def\edes{\end{description}}
\newcounter{rmnum}
\newcounter{anum}
\def\ass(#1:#2){(#1\ref{#1:#2})}
\def\ritem#1{
\item[{\sf \ass(\current_model:#1)}]
}
\newenvironment{recall-ass}[1]{%
\begin{description}
\def\current_model{#1}}{
\end{description}
}
\long\def\comment#1{}
\newfont{\bb}{msbm10 scaled 1100}
\newcommand{\CC}{\mbox{\bb C}}
\newcommand{\av}{{\bf a}}
\newcommand{\bv}{{\bf b}}
\newcommand{\dv}{{\bf d}}
\newcommand{\ev}{{\bf e}}
\newcommand{\hv}{{\bf h}}
\newcommand{\nv}{{\bf n}}
\newcommand{\qv}{{\bf q}}
\newcommand{\sv}{{\bf s}}
\newcommand{\uv}{{\bf u}}
\newcommand{\wv}{{\bf w}}
\newcommand{\vv}{{\bf v}}
\newcommand{\xv}{{\bf x}}
\newcommand{\yv}{{\bf y}}
\newcommand{\zv}{{\bf z}}
\newcommand{\Cm}{{\bf C}}
\newcommand{\Fm}{{\bf F}}
\newcommand{\Gm}{{\bf G}}
\newcommand{\Hm}{{\bf H}}
\newcommand{\Id}{{\bf I}}
\newcommand{\Qm}{{\bf Q}}
\newcommand{\Sm}{{\bf S}}
\newcommand{\Um}{{\bf U}}
\newcommand{\Wm}{{\bf W}}
\newcommand{\Vm}{{\bf V}}
\newcommand{\Xm}{{\bf X}}
\newcommand{\Ac}{{\cal A}}
\newcommand{\Cc}{{\cal C}}
\newcommand{\Kc}{{\cal K}}
\newcommand{\Nc}{{\cal N}}
\newcommand{\alphav}{\hbox{\boldmath$\alpha$}}
\newcommand{\Lambdam}{\hbox{\boldmath$\Lambda$}}
\newcommand{\Deltam}{\boldsymbol{\Delta}}
\newcommand{\Sigmam}{\hbox{\boldmath$\Sigma$}}
\newcommand{\Psim}{\hbox{\boldmath$\Psi$}}
\newcommand{\trace}{{\hbox{tr}}}
\newcommand{\transp}{{\sf T}}
\def\herm{{\sfH}}
\newcommand{\snrul}{{\sf snr}_{\rm ul}}
\newcommand{\snrdl}{{\sf snr}_{\rm dl}}
\newcommand{\betatr}{\beta_{\rm tr}}
\newcommand{\betafb}{\beta_{\rm fb}}
\pgfplotsset{compat=newest}
\newcommand{\normd}[1]{{\left\vert\kern-0.25ex\left\vert\kern-0.25ex\left\vert #1 
		\right\vert\kern-0.25ex\right\vert\kern-0.25ex\right\vert}}
\title{Downlink CSIT under Compressed Feedback: Joint vs. Separate Source-Channel Coding}
 \author{
	\IEEEauthorblockN{Yi Song,
        Tianyu Yang, \emph{Member, IEEE}, Mahdi Barzegar Khalilsarai,
	and Giuseppe Caire, \emph{Fellow, IEEE}
	}
	\thanks{
The work of Y. Song and G. Caire was supported by BMBF Germany in the program of ``Souverän. Digital. Vernetzt.'' Joint Project 6G-RIC (Project IDs 16KISK030).

Y. Song, T. Yang and G. Caire are with the Faculty of Electrical Engineering and Computer Science at the Technical University of Berlin, 10587 Berlin, Germany (e-mail: yi.song@tu-berlin.de; tianyu.yang@tu-berlin.de; caire@tu-berlin.de).
    
M. Khalilsarai is with the department of Wireless Research and Standards at Lenovo Research, Stuttgart, Germany (e-mail: mkhalilsarai@lenovo.com).
	}
}
\begin{document}
\maketitle


\begin{abstract}
     The acquisition of Downlink (DL) channel state information at the transmitter (CSIT) is known to be a challenging task in multiuser massive MIMO systems when uplink/downlink channel reciprocity does not hold (e.g., in frequency division duplexing systems). From a coding viewpoint, the DL channel state acquired at the users via DL training can be seen as an information source that must be conveyed to the base station via the UL communication channels. The transmission of a source through a channel can be accomplished either by separate or joint source-channel coding (SSCC or JSCC). In this work, using classical remote distortion-rate (DR) theory, we first provide a theoretical lower bound on the channel estimation mean-square-error (MSE) of both JSCC and SSCC-based feedback schemes, which however requires encoding of large blocks of successive channel states and thus cannot be used in practice since it would incur in an extremely large feedback delay. We then focus on the relevant case of minimal (one slot) feedback delay
     and propose a practical JSCC-based feedback scheme that fully exploits the channel second-order statistics to optimize the dimension projection in the eigenspace. We analyze the large SNR behavior of the proposed JSCC-based scheme in terms of the quality scaling exponent (QSE). Given the second-order statistics of channel estimation of any feedback scheme, we further derive the closed-form of the lower bound to the ergodic sum-rate for DL data transmission under maximum ratio transmission and zero-forcing precoding.    
     Via extensive numerical results, we show that our proposed JSCC-based scheme outperforms known JSCC, SSCC baseline and {deep learning-based} schemes and is able to approach the performance of the optimal DR scheme in the range of practical SNR.         
\end{abstract}

\begin{keywords}
Joint Source-Channel Coding, Quality Scaling Exponent, CSI feedback, Distortion-Rate Theory, Use-and-then-forget (UatF) bound
\end{keywords}	

\section{Introduction}
\label{sec:intro}
Accurate and timely channel state information at the transmitter (CSIT) is crucial for massive MIMO systems to achieve the benefits of the large dimensional antenna array, such as high beamforming gain and large spectral efficiency \cite{caire2010multiuser, jindal2006mimo}. The uplink (UL) CSI can be estimated at the base station (BS) by receiving the pilot signals from the users in UL. However, the CSI acquisition in downlink (DL) is generally more difficult to obtain, especially in systems where
UL/DL channel reciprocity does not hold. Examples of such systems are frequency division duplexing (FDD) systems, systems with non-reciprocal/calibrated radios, and systems such as IEEE 802.11ax, where the UL and DL scheduler make independent and possibly random/access decisions, such that it is not guaranteed that each DL frame is preceded by a UL frame carrying timely CSI. For such systems, the DL CSI at the BS is obtained by explicit DL probing (via pilots) and closed-loop feedback.

\subsection{Related Work}
In the literature, most works consider codebook-based quantized CSI feedback where the CSI is encoded into bits by the users and transmitted back and decoded at the BS \cite{jindal2006mimo, jiang2015achievable}.  We refer to this ``classical codebook-based feedback'' separated source-channel coding feedback (SSCC-feedback) scheme, which is also used in the current 3rd Generation Partnership Project (3GPP) standard and industry release, see e.g., \cite{3gpp2020technical,huawei2019discussion}. Nevertheless, these works avoid discussing the difficulty of SSCC-feedback with respect to channel coding, which is essential since the codebook-based feedback scheme requires sending bits through the UL, {which must be received with high reliability and low delay before the channel changes significantly}. As an alternative to SSCC-feedback, the joint source-channel coding feedback (JSCC-feedback) scheme (also known as the ``analog'' feedback that was first introduced in \cite{fast2006marzetta} and recently studied in \cite{khalilsarai2022channel, khalilsarai2023fdd}) is based on the direct mapping of the received training signal onto a UL modulated signal and direct mapping of the noisy UL signal into the channel estimate at the BS, which can be implemented directly in the ``low physical layer'' without the need of piggybacking the CSI bits into a UL codeword \cite{samardzija2006unquantized, thomas2005obtaining}, resulting in much lower complexity and delay for the JSCC feedback. 

{
A particular challenge of CSI feedback in modern wideband massive MIMO systems is the prohibitively large channel dimension due to the large antenna array and large number of subcarriers. In order to guarantee the UL data transmission rate, the DL CSI feedback signal should not cost too much in terms of the UL channel resources. This typically requires some form of compression (aka, lossy source coding). Designing the feedback strategies under limited feedback resources has received significant attention in recent years, where the proposed solutions typically depend on the deep learning technology (e.g., \cite{wang2021compressive}) to overcome the high computational complexity of traditional CSI feedback methods, such as codebook-based (e.g., \cite{jindal2006mimo}) and compression sensing-based (e.g., \cite{han2017compressed}) methods, see more references in \cite{guo2022overview}. 
A deep neural network (DNN)-based JSCC feedback scheme was first proposed in \cite{mashhadi2020cnn} using a convolutional neural network (CNN). 
{More recently, a state-of-the-art deep learning-based approach has incorporated attention mechanisms in both the encoder and decoder functions \cite{cui2022transnet}. }
A more general deep JSCC (DJSCC) framework with signal-to-noise ratio (SNR) adaption for CSI feedback is proposed in \cite{xu2022deep}. In \cite{jang2019deep}, JSCC combined with DNN is also considered to solve the channel aging problem due to feedback delay.
Nevertheless, the DNN-based approaches generally have severe disadvantages and limitations. First, the DNN-based works often assume perfect CSI at the user for compression and feedback, which is an unrealistic assumption in massive MIMO systems with large channel dimensions. Second, DNN models are typically trained on limited environmental scenarios and they are highly dependent on training data sets, resulting in weak generalization ability. 
In addition, DNN-based works do not explicitly incorporate second-order statistics.
To the best of our knowledge, the only DNN-based work that exploits the second-order statistics of the channel is our recent work \cite{song2023deep}, where it has been shown that even the Linear JSCC scheme outperforms the seminal deep-learning-based work in \cite{sohrabi2021deep} when second-order statistics are exploited only at the BS side.
 
}

\subsection{Contribution}
In this paper, we consider the most general CSI feedback problem in multi-user, multicarrier massive MIMO systems, where DL common training, UL feedback, and channel recovery are jointly considered.
For a rigorous analysis of the studied CSI feedback problem, we first provide a lower bound on the mean square error (MSE) of the channel estimate to any JSCC or SSCC feedback scheme by applying the results of the remote distortion rate theory \cite{eswaran2019remote}. Then, for the interesting case with compressed feedback dimension, we propose a JSCC-feedback scheme that exploits the knowledge of the channel second-order statistics at both the BS and users\footnote{It should be noted that the great majority of works considering MMSE channel estimation assume that the channel statistics, i.e., the channel covariance matrix for zero-mean Gaussian (Rayleigh fading) channels with antenna correlation, is known at the estimator (e.g., see {\cite{bjornson2020scalable}}). {A more detailed elaboration of the availability of channel statistics is given in Remark~\ref{remark:second-order}.}}
and derive its asymptotic performance in terms of the MSE decay in high SNR.
JSCC is generally better than SSCC in terms of end-to-end distortion error (in our case the MSE of channel estimate) within a practical range of SNR, especially for low-delay short-block communication, which is also verified in our numerical results.

The main contributions are listed as follows:
\begin{itemize}
    \item From the perspective of information theory, we formulate the feedback problem under the JSCC and SSCC schemes. We provide a distortion-rate (DR) based result that gives a lower bound of the CSI estimation error to any (JSCC and SSCC) feedback scheme.
    \item {Focusing on compressed feedback dimension and minimal feedback delay (i.e., block-by-block feedback), we propose the truncated Karhunen-Lo\`eve (TKL) scheme, which is a simple JSCC feedback scheme that exploits the second-order statistics of the channel by sequentially truncating the signal in the eigenspace under the Karhunen-Lo\`eve (KL) expansion and optimizing the power allocation to minimize the estimation error using a water filling solution.} We further provide the quality scaling exponent (QSE) of the proposed JSCC feedback scheme to analyze the large SNR behavior of the resulting channel estimate MSE ({{Theorem}}~\ref{theorem:QSE}).  
    \item To easily evaluate the performance of the proposed feedback scheme in terms of the practically considered DL ergodic sum rate, we derive closed-form expressions of the use-and-then-forget (UatF) lower bound of the ergodic rate under both commonly used maximum ratio transmission (MRT) and zero-forcing (ZF) precoding based on the obtained channel estimates of different feedback schemes (Lemma~\ref{lemma: MRT} and \ref{lemma: ZF}).
\end{itemize}
Via extensive numerical results, we first demonstrate the validity of the QSE derivation of the proposed JSCC scheme. We then show that under the interesting case with practical SNR ($-10$ dB to $10$ dB) and limited feedback resources, the DL achievable ergodic rate of the proposed JSCC scheme not only outperforms the results of practical SSCC scheme but also approaches the information-theoretic bound (DR scheme). 
{In addition, the proposed JSCC scheme outperforms the deep learning-based scheme under the idealized assumption.}

\section{System Model}
\begin{figure}[!t]
        \centering
        \includegraphics[width=1\columnwidth]{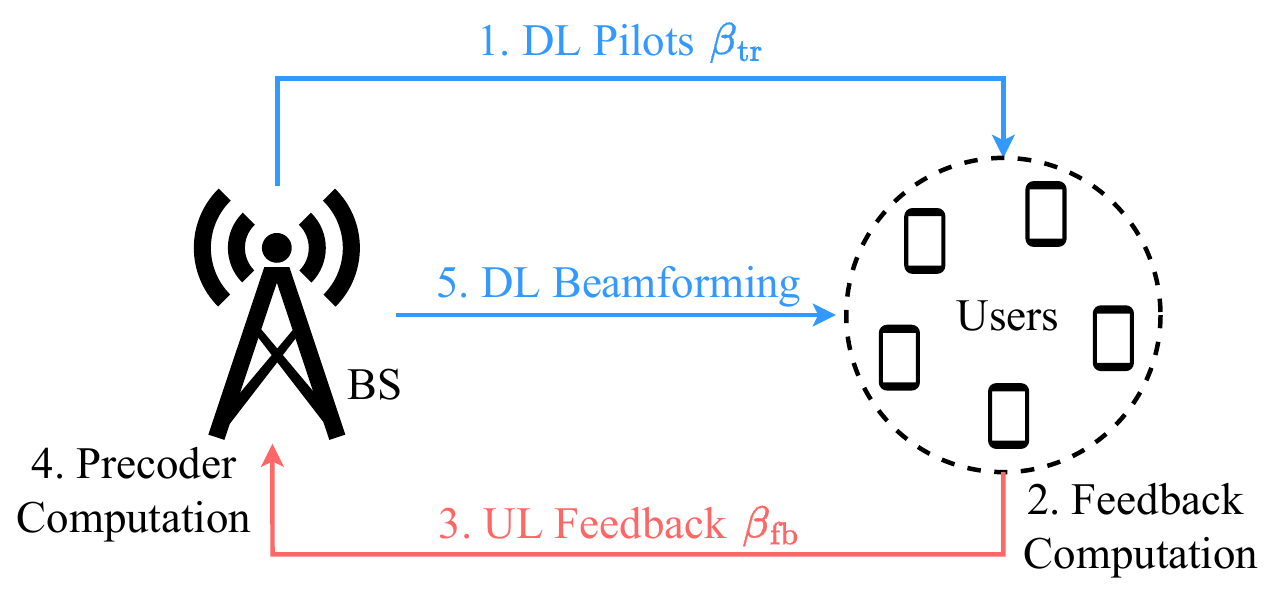}
        \caption{{Schematic for closed-loop FDD DL CSIT acquisition with $\betatr$ training dimension and $\betafb$ feedback dimension.}}
        \label{fig:fdd_traning}
        \vspace{-6mm}
\end{figure}
We consider a multicarrier massive MIMO system, where a BS equipped with $M$ antennas operates in OFDM mode over $N$ subcarriers and serves $K \leq M$ single-antenna users simultaneously. We assume that UL/DL channel reciprocity cannot be exploited (see motivation in Section~\ref{sec:intro}) and that the CSI at the BS for DL precoding must be acquired in a closed loop, via DL common training and CSI feedback from the users to the BS, {see Fig.~\ref{fig:fdd_traning} for the intuitive view of the considered DL CSIT acquisition problem}.  We also assume the block-fading channel model where the channel state information is constant over a single frame with $T$ OFDM symbols (corresponding to the channel coherence time) and changes frame by frame according to an i.i.d. process.
During the DL common training phase, the BS broadcasts a pilot sequence over $T_p$ out of  $T$ symbols and over a subset of $N_p \leq N$ subcarriers to all users, which results in a total pilot dimension of $\betatr = T_p N_p$. The set of subcarrier indices for transmitting pilots is denoted as $\Nc_p \subseteq \Nc = [N]$.\footnote{For a positive integer $N$, we define $[N] \triangleq \{1, \dots, N\} $.
}  The received noisy pilot signal at the $k$-th user over subcarrier $n$ is denoted as $\yv_k^{\rm tr}[n]$ and given by
\begin{equation}
    \yv_k^{\rm tr}[n] = \Xm[n] \hv_k[n] + \nv_k^{\rm tr}[n], ~\forall k \in \Kc, \, n\in\Nc_p,
\end{equation}
where $\Kc = [K]$ and $\hv_k[n] \in \CC^M$ is DL channel vector of the $k$-th user at the subcarrier $n$, and where $\Xm[n]\in\bC^{T_p \times M}$ is the pilot matrix at the subcarrier $n$, whose each element follows complex Gaussian distribution, i.e., $\Xm_{i, j}[n] \sim \Cc\Nc(0, \frac{\mathsf{snr}_{\rm dl}}{M})$, $\forall i \in [T_p], ~\forall j \in [M]$, where $\mathsf{snr}_{\rm dl}$ is the DL transmit power that coincides with DL ``pre-beamforming'' SNR, under the normalization of the noise variance such that the DL additive white Gaussian noise (AWGN) is $\nv^{\rm tr}_k[n] \sim \mathcal{CN}(\mathbf{0}, \mathbf{I})$.

Considering all $\betatr$ training measurements, the received pilot signal at the $k$-th user $\yv_k^{\rm tr} = [\yv_k^{\rm tr}[n_1]^\transp, \dots, \yv_k^{\rm tr}[n_{N_p}]^\transp]^\transp \in \bC^{\betatr}$ can be written as\footnote{Note that we use $\{n_1, n_2,\dots, n_{N_p}\}$ to indicate the subcarrier indices of the probed subcarriers.}
\begin{equation}
    \yv_k^{\rm tr} = \Xm \hv_k + \nv_k^{\rm tr}, ~\forall k \in \Kc,
\end{equation}
where $\hv_k := [\hv_k[1]^\transp, \dots, \hv_k[N]^\transp]^\transp \in \bC^{MN} \sim \mathcal{CN}(\mathbf{0}, \Cm_{\hv_k})$ with its covariance matrix $\Cm_{\hv_k} = \bE[\hv_k\hv_k^\herm]$, and where $\Xm \in \bC^{\betatr \times M N}$ consisting of all pilot matrices of $N_p$ subcarriers. Specifically, $\Xm = (\boldsymbol{\chi}_{\ell,n})^{n\in[N]}_{\ell\in[N_p]}$, where $\boldsymbol{\chi}_{\ell,n} = \Xm[n_{\ell}]$ if $n_\ell \in \Nc_p$ and $\boldsymbol{\chi}_{\ell,n}=\mathbf{0}$ otherwise.  
An intuitive illustration of the structure of $\Xm$ is given as  
\begin{equation}
    \Xm  = \begin{bmatrix}
                \cdots & \Xm[n_1] & \cdots   & \cdots   & \cdots & \cdots & \cdots\\
                \cdots & \cdots   & \cdots   & \Xm[n_2] & \cdots & \cdots & \cdots\\
                \vdots & \vdots   & \vdots   & \vdots   & \vdots & \vdots & \vdots \\
                \cdots & \cdots   & \cdots   & \cdots   & \cdots & \Xm[n_{N_p}] & \cdots  \\
            \end{bmatrix},
\end{equation}
where all blocks other than $\{\Xm[n_1],\dots,\Xm[n_{N_p}]\}$ are $\mathbf{0}$. 
 
Upon receiving the pilots, the user computes a feedback message containing information about the DL CSI, and sends it back to the BS via $\beta_{\rm fb}$ UL channel uses, i.e., time-frequency symbols. We consider the UL channel as the MIMO multiple-access channel (MAC), where all users send their feedback to the BS simultaneously under the same frequency band. We assume that the BS has perfect knowledge of the UL CSI, since the UL CSI can be estimated from UL pilots inserted in each UL slot. We denote the user transmit power as $\mathsf{snr}_{\rm ul}$ and note that $\mathsf{snr}_{\rm ul}$ corresponds also the UL SNR under the normalized UL noise power. 
Since normally the transmit power of the user is much less than that of the BS, i.e., $\mathsf{snr}_{\rm ul} < \mathsf{snr}_{\rm dl}$, in this work, we assume that the transmit power of the BS is equal to the total transmit power of all users, i.e., $\mathsf{snr}_{\rm dl} = K\mathsf{snr}_{\rm ul}$.
{Then, under the UL MIMO-MAC assumption and UL SNR defined above as well as a diversity-multiplexing trade-off factor (pre-log factor) of one \cite{caire2010multiuser}, the capacity of the simplified conceptual point-to-point UL channel per user is given as
\begin{equation}\label{eq:C_ul}
    C_{\rm ul} = \log (1 + \kappa M \mathsf{snr}_{\rm ul}) = \log (1 +  \kappa M \mathsf{snr}_{\rm dl} / K),
\end{equation}
where $\kappa \in [0,1]$ is the multiuser efficiency that plays as a penalty factor incurred by the presence of the other users in the MIMO-MAC with respect to the ideal performance of the full beam-forming gain \cite{verdu1998multiuser}. The exact value of $\kappa$ depends on the detector and can be computed depending on the channel statistics and the large-system limits as in the standard large-system MIMO analysis (e.g., see \cite{marzetta2010noncooperative, huh2012achieving}).  For instance, under zero-forcing detection, $\kappa=1- K/M$ 
 \cite{verdu1999spectral}. }

After receiving the pilot measurements, the user has two options for feeding back information to the BS, namely the JSCC feedback and DSCC feedback. 
\subsection{JSCC Feedback Scheme}
\begin{figure}[!ht]
        \centering
        \begin{tikzpicture}[node distance = 0.01\textwidth, auto]
        \tikzstyle{neuron} = [circle, draw=black, fill=white, minimum height=0.02\textwidth, inner sep=0pt]
        \tikzstyle{rect} = [rectangle, minimum width=0.02\textwidth, minimum height=0.02\textwidth, draw=black, fill=white]
            \node [rect] (x){\small Observation};
            \node [below of=x, yshift=-0.02\textwidth] (x1){$\yv_k^{\rm tr}$};
            \node [left of=x, xshift=0.12\textwidth, rect] (z) {\small Encoder};
             \node [below of=z, yshift=-0.02\textwidth] (z1) {$f_{\rm enc}(\cdot; \Cm_{\hv_k})$};
             \node [left of=z, xshift=0.1\textwidth, neuron] (sample) {+};
                 \node [below of=sample,yshift=-0.05\textwidth] (noise) {$ \nv_k^{\rm ul} \sim \mathcal \Cc\Nc(\mathbf{0}, \Id)$};
         \node [left of=sample, xshift=0.09\textwidth, rect] (dec) {\small Decoder};
         \node [below of=dec, yshift=-0.02\textwidth] (dec1) {$f_{\rm dec}(\cdot; \Cm_{\hv_k})$} ;
                  \node [left of=dec, xshift=0.12\textwidth, rect] (recon) {\small Estimate CSI};
                \node [below of=recon, yshift=-0.02\textwidth] (recon1) {$\widehat{\hv}_k^{\rm jscc}$} ;
                   \draw [->,line width=1pt] (x) -- (z);
                     \draw [->,line width=1pt] (z) -- node[above]{$\zv_k$}(sample);
                       \draw [->,line width=1pt] (sample) --node[above]{$\yv_k^{\rm fb}$} (dec);
                        \draw [->,line width=1pt] (dec) -- (recon);
                          \draw [->,line width=1pt] (noise) -- (sample);
        \end{tikzpicture}  
        \caption{The block diagram of the JSCC feedback scheme.}
        \label{system_block_AF}
\end{figure}
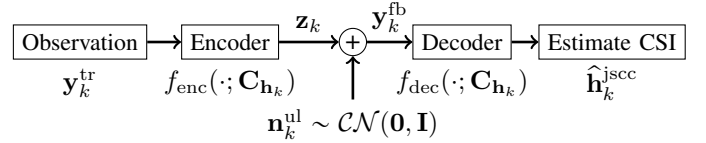

In the JSCC feedback scheme, as shown in Fig. \ref{system_block_AF}, assuming the availability of channel second-order statistics at both BS and user sides, the user applies a direct mapping via a deterministic function to obtain the feedback codewords, and then the BS estimates the DL CSI from the noisy UL feedback. Specifically, each user obtains the feedback codewords as quadrature amplitude modulation (QAM) symbols $\zv_k \in\bC^{\beta_{\rm fb}} $ with unquantized I and Q components\footnote{Modern communication systems make use of very large QAM alphabets, e.g., $4096$ QAM ($802.11$ax). This corresponds to quantizing the I and Q components of the signal in the frequency domain with 6 bits. {This corresponds to a $6 \times 6 = 36 ~$dB signal to quantization noise per component (6 dB per bit scaling), which is at all effects equivalent to unquantized symbol transmission.}} by applying a deterministic mapping as the encoder $f_{\rm enc}(\cdot; \Cm_{\hv_k}): \mathbb{C}^{\beta_{\rm tr}} \rightarrow \mathbb{C}^{\beta_{\rm fb}} $ from the pilot dimension $\beta_{\rm tr}$ to the feedback dimension $\beta_{\rm fb}$, which is given as
\begin{align}
    \zv_k = f_{\rm enc}(\yv_k^{\rm tr}; \Cm_{\hv_k}).
\end{align}

The received noisy feedback signal at the BS is given by 
\begin{align}\label{eq: feedback_AF}
    \yv_k^{\rm fb} = \zv_k + \nv_k^{\rm ul},
\end{align}
where $\nv^{\rm ul}_k \sim \Cc\Nc(\mathbf{0}, \Id)$ is the UL AWGN with normalized noise power, and the feedback symbols should satisfy the transmit power constraint due to the UL MIMO-MAC channel
\begin{align}\label{eq:fb_power}
    \mathbb{E}[\| \zv_k\|^2]  \leq P_{\rm ul},
\end{align}
where $P_{\rm ul} \triangleq \betafb \kappa M \snrul$ is the effective UL transmit power.

After receiving the feedback signal $\yv_k^{\rm fb}$, the BS obtains the DL channel estimate as 
\begin{align}
    \widehat{\hv}_k^{\rm jscc} = f_{\rm dec}(\yv_k^{\rm fb}; \Cm_{\hv_k}),
\end{align}
where the decoder function $f_{\rm dec}(\cdot; \Cm_{\hv_k})$ transforms the received feedback symbols with dimension $\beta_{\rm fb}$ to the channel estimate with dimension $MN$ based on the second-order statistics of the DL channels.

Overall, we aim to minimize the MSE of the channel estimate by jointly designing the encoder and decoder functions with the knowledge of the channel covariance at both the user and the BS.\footnote{It can be shown that minimizing the MSE of the estimated channel maximizes the use-and-then-forget (UatF) rate bound. Therefore, the channel estimation MSE is precisely the correct metric to assess the quality of the CSI estimation/feedback loop. 
} Hence, the optimization problem of the JSCC scheme for each user is formulated as   
\begin{subequations}\label{eq: AF}
    \begin{align}
        \underset{{f_{\rm enc}(\cdot;  \cdot), \; f_{\rm dec}(\cdot; \cdot)}}{\text{minimize}} \; \;\;\quad \quad&  \bE\left[\|\hv_k - \widehat{\hv}_k^{\rm jscc}\|^2\right] \\
 	 \text{subject to} \;\;\; \quad \qquad&    \eqref{eq:fb_power}.  
     \end{align}
    \end{subequations}

\subsection{SSCC Feedback Scheme}
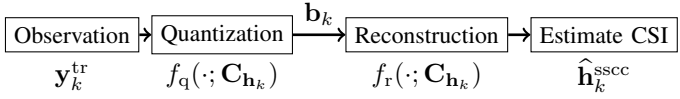
\begin{figure}[!ht]
        \centering
        \begin{tikzpicture}[node distance = 0.01\textwidth, auto]
        \tikzstyle{neuron} = [circle, draw=black, fill=white, minimum height=0.02\textwidth, inner sep=0pt]
        \tikzstyle{rect} = [rectangle, minimum width=0.02\textwidth, minimum height=0.02\textwidth, draw=black, fill=white]
            \node [rect] (x){\small Observation};
            \node [below of=x, yshift=-0.02\textwidth] (x1){$\yv_k^{\rm tr}$};
            \node [left of=x, xshift=0.12\textwidth, rect] (z) {\small Quantization};
             \node [below of=z, yshift=-0.02\textwidth] (z1) {$f_{\rm q}(\cdot; \Cm_{\hv_k})$};
         \node [left of=z, xshift=0.16\textwidth, rect] (dec) {\small Reconstruction};
         \node [below of=dec, yshift=-0.02\textwidth] (dec1) {$f_{\rm r}(\cdot; \Cm_{\hv_k})$} ;
                  \node [left of=dec, xshift=0.14\textwidth, rect] (recon) {\small Estimate CSI};
                \node [below of=recon, yshift=-0.02\textwidth] (recon1) {$\widehat{\hv}_k^{\rm sscc}$} ;
                   \draw [->,line width=1pt] (x) -- (z);
                     \draw [->,line width=1pt] (z) -- node[above]{${\bv}_k$}(dec);
                        \draw [->,line width=1pt] (dec) -- (recon);
        \end{tikzpicture}  
        \caption{The block diagram of the SSCC feedback scheme.}
        \label{system_block_DF}
\end{figure}
In the SSCC feedback scheme, as shown in Fig. \ref{system_block_DF}, the user first quantizes the received noisy pilot signal into binary sequences, and then sends the bits with UL channel coding, and finally the BS decodes the DL channel based on the quantization bits. 
Same as the JSCC case, we also assume that the channel second-order statistics are known at both the user and BS. Given the number of bits for encoding as $R_k$, the $k$-th user generates the quantization sequences ${\bv}_k \in \{0, 1\}^{2^{R_k}} $ via the quantization function $f_{\rm q}(\cdot;\cdot): \mathbb{C}^{\beta_{\rm tr}} \rightarrow \{0, 1\}^{2^{R_k}}$ based on the DL channel covariance matrix, which is given as
\begin{align}
    {\bv}_k &= f_{\rm q}(\yv_k^{\rm tr}; \Cm_{\hv_k}).
\end{align}
We assume that the quantization sequences can be transmitted error-free from the user to the BS through UL channels. 
Upon receiving the bit stream, the DL channel estimate can be obtained via the reconstruction function $f_{\rm r}(\cdot; \cdot)$ as 
\begin{align}
    \widehat{\hv}_k^{\rm sscc} &= f_{\rm r}({\bv}_k; \Cm_{\hv_k}), 
\end{align}
where $f_{\rm r}(\cdot; \cdot)$ transforms the binary sequences of dimension $2^{R_k}$ into the vector of dimension $MN$ based on the second-order statistics of the DL channels.
Again, we aim to minimize the MSE of the channel estimation by jointly designing the quantization and de-quantization functions with the knowledge of the channel covariance at both the user and the BS. Hence, the optimization problem of the SSCC scheme for each user is formulated as   
\begin{subequations}\label{eq: DF}
    \begin{align}
        \underset{{f_{\rm q}(\cdot; \cdot), \; f_{\rm r}(\cdot; \cdot)}}{\text{minimize}} \; \;\;\quad \quad&  \bE\left[\|\hv_k - \widehat{\hv}_k^{\rm sscc}\|^2\right] \\
 	 \text{subject to}  \quad \qquad&   R_k \leq R,   
     \end{align}
    \end{subequations}
where $R$ denotes the rate constraint in UL channels.


\subsection{Relation between the UL Channels in JSCC and SSCC}
We discuss the choice of rate constraint $R$ to relate and compare the JSCC and the SSCC feedback schemes in a fair manner. In the JSCC, $\beta_{\rm fb}$ channel uses of the UL channel are allocated for feeding back the codewords. In order to make a fair comparison between JSCC and SSCC, the rate constraint in the SSCC feedback scheme is defined as $R = \beta_{\rm fb} C_{\rm ul}$, where $C_{\rm ul}$ is given in \eqref{eq:C_ul}.

\begin{remark}
    Note that from the latency perspective, the JSCC scheme sends the CSI estimates back to BS immediately, as soon as they are obtained by the UE, and \textbf{does not} require channel coding and piggybacking in the UL payload. In contrast, the SSCC scheme \textbf{does} require channel coding to send digital information without bit errors, which involves non-trivial coding and decoding delays for piggybacking the CSI quantization bits into a UL-coded payload and decoding at the BS. We emphasize that our comparison is very generous with the SSCC feedback since it is assumed that the quantization bits can be transmitted at a rate equal to the UL channel capacity and error-free without coding delay.  Therefore, our results are somehow already biased in favor of the SSCC feedback scheme.  \hfill $\lozenge$
\end{remark}

\section{Lower Bound to the CSI Estimation Error: the Distortion-Rate (DR) Scheme}
\label{ch: optimal_bound}
The problem in \eqref{eq: DF} is a long-standing and unsolved problem due to the difficulty in designing quantization and reconstruction functions for finite block-length. A lower bound on the achievable distortion can be obtained by relaxing the problem and considering the compression/reconstruction of an arbitrarily large sequence of channel vectors. We assume that the channel is a stationary sequence of i.i.d. vector symbols $\lim_{n \rightarrow \infty}\{\hv_k^{(i)}\}_{i=1}^n$, where $\hv_k^{(i)}$ represents the $i$-th channel realization and $n$ denotes the source encoding block length. Let ${\yv_k^{\rm tr}}^{(i)}$ denote the DL measurements on channel $\hv_k^{(i)}$. Our focus is on the design of the quantization function $f_{\rm q}^n(\cdot)$, which maps the $n$-sequence DL training signals into binary sequences, and the reconstruction function $f^n_{\rm r}(\cdot)$, which decodes the binary streams into sequences of channel estimates $\{\widehat{\hv}_k^{(i)}\}_{i=1}^n$. The problem then becomes 
\begin{subequations}\label{eq: DF_LB}
    \begin{align}
        \underset{f^n_{\rm q}(\cdot), f^n_{\rm r}(\cdot)}{\text{minimize}} \quad &\lim_{n \rightarrow \infty} \frac{1}{n} \sum_{i=1}^n \bE \left[\|\hv_k^{(i)} - \widehat{\hv}_k^{(i)}\|^2\right] \\
 	    \text{subject to} \quad &\lim_{n \rightarrow \infty} \frac{1}{n} H\left( f_{\rm q}^n({\yv_k^{\rm tr}}^{(1)}, \cdots,{\yv_k^{\rm tr}}^{(n)})\right) \leq R,    
    \end{align}
\end{subequations}
which gives a lower bound to \eqref{eq: AF} and \eqref{eq: DF} due to the relaxation to the infinite block-length.
Information theory \cite{gamal2012network} proves the existence of quantization and reconstruction functions for sufficiently large $n$ achieving the minimum possible MSE, namely, the remote distortion rate function $D_k^{\rm r}(R)$ given by the solution of the following problem: 
\begin{subequations}\label{eq: DF_LB_2}
    \begin{align}
        \underset{p({\widehat{\hv}_k| \hv_k})}{\text{minimize}} \quad &\bE \left[\|\hv_k - \widehat{\hv}_k\|^2\right] \label{eq:DF_LB_obj} \\
 	    \text{subject to} \quad &I(\yv^{\rm tr}_k; \widehat{\hv}_k) \leq R, 
    \end{align}
\end{subequations}
where $p({\widehat{\hv}_k| \hv_k})$ is the conditional probability density function of $\widehat{\hv}_k$ given $\hv_k$.
It can be shown that, based on \cite[(42) in Appendix A]{eswaran2019remote}, the remote distortion-rate function in our case can be divided into two components: 
\begin{equation}
    D_k^{\rm r}(R) = D^{\rm mmse}_k + D_k(R),
\end{equation}
where $D^{\rm mmse}_k$ is the estimation error of the MMSE estimation at the user side, and where 
\begin{equation}
    D_k(R) \triangleq \min_{p(\widehat{\hv}_k| \uv_k): I(\uv_k; \widehat{\hv}_k) \leq R} \mathbb{E}[\|\uv_k - \widehat{\hv}_k\|^2]
\end{equation}
is the distortion-rate function for the MMSE estimate $\uv_k$. Specifically, the user can first obtain the MMSE estimation of DL CSI from the noisy pilot observation, given as
\begin{align}
    \uv_k &= \mathbb{E} [\hv_k|\yv_k^{\rm tr}] = \Cm_{\hv_k} \Xm^{\herm} (\Xm \Cm_{\hv_k} \Xm^{\herm} + \Id_{\beta_{\rm tr}})^{-1} \yv_k^{\rm tr},
\end{align}
where the covariance matrix of $\uv_k$ is given by 
\begin{align}
    \Cm_{\uv_k}  = \Cm_{\hv_k} \Xm^{\herm} (\Xm \Cm_{\hv_k} \Xm^{\herm} + \Id_{\beta_{\rm tr}})^{-1} \Xm \Cm_{\hv_k},
\end{align}
and the estimation error $D^{\rm mmse}_k$  is obtained as $D^{\rm mmse}_k = \tr(\Cm_{\hv_k}-\Cm_{\uv_k})$. 
Next, denoting the $i$-th eigenvalue of $\Cm_{\uv_k}$ as $\lambda_{i}^{\uv_k}$, we can derive a closed-form solution to the remote distortion-rate function $D_k^{\rm r}(R)$ in the following lemma.  
\begin{lemma}[remote distortion-rate function]
    For a fixed pilot matrix $\Xm$ and a given rate $R$, the remote distortion-rate function is given by 
    \begin{align}
        D_k^{\rm r}(R) = D_k^{\rm mmse} + \sum_{i=1}^{MN} \min\{\gamma, \lambda_{i}^{\uv_k}\},
    \end{align}
    where $\gamma$ is chosen to make full use of the UL rate constraint $R$, i.e.,\footnote{$[x]_+$ denotes the function $\max(x, 0)$.}
    \begin{align} \label{eq: gamma}
        \sum_{i=1}^{MN} \left[\log_2\left(\frac{\lambda_i^{\uv_k}}{\gamma}\right)\right]_+  \triangleq  R.
    \end{align}
    Similarly, given a target distortion $D$ the remote rate-distortion function $R_k^{\rm r}(D)$ is given by 
    \begin{align}
       R_k^{\rm r}(D) = \sum_{i=1}^{MN} \left[\log_2\left(\frac{\lambda_i^{\uv_k}}{\widetilde{\gamma}}\right)\right]_+, 
    \end{align}
    where $\widetilde{\gamma}$ is chosen such that $D_k^{\rm mmse} + \sum_{i=1}^{MN} \min\{\widetilde{\gamma}, \lambda_{i}^{\uv_k}\} \triangleq D$.
\end{lemma}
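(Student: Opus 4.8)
The plan is to recognize the two displayed identities as the classical reverse water-filling solution to a Gaussian vector quadratic distortion-rate problem, building directly on the decomposition $D_k^{\rm r}(R) = D_k^{\rm mmse} + D_k(R)$ that is already in place via \cite{eswaran2019remote}. Since $D_k^{\rm mmse}=\tr(\Cm_{\hv_k}-\Cm_{\uv_k})$ is a fixed constant independent of $R$, the entire content of the lemma reduces to evaluating the ordinary (non-remote) distortion-rate function
\[
D_k(R) = \min_{p(\widehat{\hv}_k\mid \uv_k):\, I(\uv_k;\widehat{\hv}_k)\le R} \mathbb{E}\big[\|\uv_k - \widehat{\hv}_k\|^2\big]
\]
of the source $\uv_k$. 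The first step I would take is to observe that $\uv_k=\mathbb{E}[\hv_k\mid\yv_k^{\rm tr}]$ is a linear image of the jointly circularly-symmetric complex Gaussian pair $(\hv_k,\nv_k^{\rm tr})$, hence itself a zero-mean complex Gaussian vector with covariance $\Cm_{\uv_k}$. Thus $D_k(R)$ is precisely the quadratic distortion-rate function of a Gaussian vector source, for which reverse water-filling is the known optimizer.

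The second step is to decouple the problem across eigen-coordinates. Writing the eigendecomposition $\Cm_{\uv_k}=\Um\Lambdam\Um^\herm$ with $\Lambdam=\diag(\lambda_1^{\uv_k},\dots,\lambda_{MN}^{\uv_k})$, I would replace $\uv_k$ by $\Um^\herm\uv_k$ and $\widehat{\hv}_k$ by $\Um^\herm\widehat{\hv}_k$. Because the quadratic cost $\|\cdot\|^2$ and the mutual information $I(\uv_k;\widehat{\hv}_k)$ are both invariant under the unitary (hence bijective) map $\Um^\herm$, this transformation is without loss of optimality, and in the new coordinates the source has diagonal covariance $\Lambdam$, i.e.\ it is a collection of $MN$ \emph{independent} scalar complex Gaussian components with variances $\lambda_i^{\uv_k}$. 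Standard single-letterization then shows the vector distortion-rate function is the optimal rate-split of these independent scalar problems.

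The third step is the scalar calculation and the rate allocation. For a single $\mathcal{CN}(0,\lambda)$ source the distortion-rate function under MSE is $D(r)=\lambda\, 2^{-r}$ (equivalently $R(D)=[\log_2(\lambda/D)]_+$, the complex convention that matches the $\log_2$ without a $1/2$ factor in the statement). I would then solve
\[
\min_{r_1,\dots,r_{MN}\ge 0}\ \sum_{i=1}^{MN} \lambda_i^{\uv_k} 2^{-r_i}
\quad\text{s.t.}\quad \sum_{i=1}^{MN} r_i \le R
\]
via its Lagrangian/KKT conditions. The stationarity condition forces every active component to the common distortion level $\gamma$, while components with $\lambda_i^{\uv_k}\le\gamma$ receive zero rate; this yields $D_i=\min\{\gamma,\lambda_i^{\uv_k}\}$ and $r_i=[\log_2(\lambda_i^{\uv_k}/\gamma)]_+$. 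Summing the rate constraint gives the defining equation \eqref{eq: gamma} for $\gamma$, and summing the distortions gives $D_k(R)=\sum_i \min\{\gamma,\lambda_i^{\uv_k}\}$, so that $D_k^{\rm r}(R)=D_k^{\rm mmse}+\sum_i\min\{\gamma,\lambda_i^{\uv_k}\}$. The rate-distortion form $R_k^{\rm r}(D)$ follows immediately by inverting the strictly monotone $R\mapsto D_k^{\rm r}(R)$ relation: one keeps the same water-filling structure but now selects the level $\widetilde{\gamma}$ so that $D_k^{\rm mmse}+\sum_i\min\{\widetilde{\gamma},\lambda_i^{\uv_k}\}=D$.

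The main obstacle I anticipate is not the achievability direction (which is the explicit test-channel construction above) but the \emph{converse}: justifying that no coding strategy beats the separate-eigencomponent reverse water-filling. This rests on the standard Gaussian rate-distortion converse, i.e.\ the maximum-entropy property that a Gaussian of a given covariance is the hardest source to compress at a given fidelity, together with the sufficiency of $\uv_k$ that underlies the additive decomposition $D_k^{\rm r}(R)=D_k^{\rm mmse}+D_k(R)$. Since that decomposition and the sufficiency of the MMSE estimate are furnished by \cite{eswaran2019remote}, I would cite them and keep the proof focused on the diagonalization and the KKT solution of the allocation problem.
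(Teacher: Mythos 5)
Your proposal is correct and is precisely the classical argument the paper invokes: the paper's ``proof'' is a one-line citation to \cite[Lemma 1]{khalilsarai2023fdd}, and what you write out --- the remote-source decomposition $D_k^{\rm r}(R)=D_k^{\rm mmse}+D_k(R)$, unitary diagonalization of the Gaussian MMSE estimate $\uv_k$, and reverse water-filling via KKT over the independent $\mathcal{CN}(0,\lambda_i^{\uv_k})$ components (with the complex-Gaussian convention $D(r)=\lambda 2^{-r}$ matching the missing $1/2$ in the $\log_2$) --- is exactly that standard derivation, including the correct identification that the converse rests on the Gaussian rate-distortion converse plus the sufficiency of $\uv_k$. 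No gap; this simply makes explicit what the paper leaves to the reference.
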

\begin{proof}
   The proof is quite classical. Readers can refer to e.g., \cite[Lemma 1]{khalilsarai2023fdd}.  
\end{proof}

\subsection{Channel Estimate for the DR Scheme}
For numerically evaluating the DR feedback scheme in terms of e.g., precoding design or ergodic rate, it is necessary to have the channel estimates of the DR feedback scheme, denoted as $\widehat{\hv}_k^{\rm dr}$. In the following, we show the way to generate channel estimates $\widehat{\hv}_k^{\rm dr}$. For a fixed pilot matrix $\Xm$ and a given rate constraint $R$, the threshold parameter $\gamma$ can be obtained based on \eqref{eq: gamma}, and thus the distortion vector is given as $\dv_k =[\min\{\gamma,\lambda_1^{\uv_k}\},\dots, \min\{\gamma,\lambda_{MN}^{\uv_k}\}]^\transp$. Then, the channel estimate at user $\uv_k$ can be written as \cite[Theorem 10.3.2]{Cover2006elements}
\begin{align} 
    \uv_k = \widehat{\hv}_k^{\rm dr} + \qv_k,
\end{align}
where $\qv_k \sim \Cc\Nc(\mathbf{0}, \Cm_{\qv_k})$ with $\Cm_{\qv_k}= \Fm_k\diag(\dv_k)\Fm_k^{\herm}$ being the quantization error vector that is independent of the channel estimate $\widehat{\hv}_k^{\rm dr} \sim \Cc\Nc(\mathbf{0}, \Cm_{\widehat{\hv}_k^{\rm dr}})$ with the covariance matrix $\Cm_{\widehat{\hv}_k^{\rm dr}} = \Cm_{{\uv}_k} - \Cm_{\qv_k}$, where $\Fm_k \in \mathbb{C}^{MN \times MN}$ denotes the orthonormal eigenvectors of $\Cm_{\uv_k}$. Therefore, the channel estimate $\widehat{\hv}_k^{\rm dr}$ can be obtained from the MMSE estimation based on $\uv_k$ as 
\begin{align}\label{eq: hat_h_DR}
    \widehat{\hv}_k^{\rm dr} = \bE[\widehat{\hv}_k^{\rm dr} | \uv_k] + \widetilde{\qv}_k, 
    \end{align}
where $\bE[\widehat{\hv}_k^{\rm dr} | \hv_k] = \Cm_{\widehat{\hv}_k^{\rm dr}}\Cm_{{\uv}_k}^{-1} \uv_k $ and $\widetilde{\qv}_k \sim \Cc\Nc(\mathbf{0}, \Cm_{\widetilde{\qv}_k})$ with $\Cm_{\widetilde{\qv}_k} = \Cm_{\widehat{\hv}_k^{\rm dr}}\Cm_{{\uv}_k}^{-1}\Cm_{\widehat{\hv}_k^{\rm dr}}$.

\section{Practical JSCC and SSCC Feedback Schemes}
\label{ch: achievable_schemes}
Although the remote distortion-rate function $D_k^{\rm r}(R)$ is a lower bound for any feedback schemes in \eqref{eq: AF} and \eqref{eq: DF}, it requires encoding blocks of $n$ DL pilot observations together via high-dimensional vector quantizers. 
Beyond its high complexity, the main problem that makes such an approach inapplicable for the CSI feedback is that a block-feedback message can be transmitted only after a very large number $n$ of DL training signals is received. At this point, even though the joint compression and transmission of the block results in a low channel estimation MSE at the BS, these estimates are completely useless since they are stale. As a matter of fact, the CSI estimation and feedback loop must have a low overall delay, such that the BS receives an updated CSI estimate at each coherence interval of $T$ OFDM symbols. 
We refer to such schemes as block-by-block (minimum delay) estimation and feedback schemes.  
Focusing on such schemes, in this section, we study the practical JSCC and SSCC schemes aiming at encoding only {short block-length symbols}. We first introduce the classical SSCC and JSCC schemes, and then propose a novel JSCC scheme.  

\subsection{Entropy-Coded Scalar Quantization (ECSQ) SSCC Scheme}
Similar to the DR scheme, the ECSQ scheme also first applies MMSE estimation to obtain the channel estimates $\uv_k$ and then quantize and encodes the channel estimates in a simpler way. 
Concretely, we can express the estimation $\uv_k$ in its eigenspace by applying Karhunen-Lo\`eve (KL) expansion, which is given by 
\begin{align}\label{eq:w_k}
    \uv_k = \Fm_k \wv_k, 
\end{align}
where each element of $\wv_k$ follows complex Gaussian distribution, i.e., $w_{k, i} \sim \Cc\Nc(0, \lambda_{i}^{\uv_k}), ~\forall~ i \in [MN]$, where $\lambda_{i}^{\uv_k}$ is the $i$-th eigenvalue of $\Cm_{\uv_k}$.
Since we assume that the user and BS both have the perfect knowledge of the channel covariance, $\Fm_k$ is known at the BS side, and thus only the coefficients $\wv_k$ need to be quantized and encoded. The ECSQ scheme encodes the coefficients $\wv_k$ in \eqref{eq:w_k} using scalar quantization followed by entropy encoding based on the results in \cite{ziv1985universal}. Because this scheme is introduced in our previous work, we omit the design details here and the reader can refer to \cite[Section IV.A]{khalilsarai2023fdd}. In the following, we directly give the result of ECSQ.

According to \cite[Theorem 1]{ziv1985universal}, given a certain distortion $D$, the entropy coding of $\wv_k$ after scalar quantization with dithering is not larger than $1.508$ bits per encoded coefficient compared to remote rate-distortion function.
Therefore, the ECSQ scheme requires a rate of\footnote{$\mathbb{1}\{\Ac\}$ denotes the indicator function of the condition $\Ac$, i.e., it gives $1$ when $\Ac$ is satisfied, and $0$ otherwise.}
\begin{align}\label{eq: R_k^ecsq}
    R_k^{\rm ecsq}(D) = R_k^{\rm r}(D) + 1.508 \sum_{i=1}^{MN} \mathbb{1}\left\{\lambda_i^{\uv_k}> \widehat{\gamma}\right\}.
\end{align}
Correspondingly, for a fixed pilot matrix $\Xm$ and a given rate constraint $R$, the minimum achievable channel estimation error in the ECSQ scheme is given by 
\begin{align}
    D_k^{\rm ecsq}(R) = D_k^{\rm mmse} + \sum_{i=1}^{MN} \min\{\widehat{\gamma}, \lambda_{i}^{\uv_k}\}, 
\end{align}
where $\widehat{\gamma}$ is chosen such that the rate constraint of the ECSQ scheme is satisfied, i.e., 
\begin{align}
   \sum_{i=1}^{MN} \left[\log_2\left(\frac{\lambda_i^{\uv_k}}{\widehat{\gamma}}\right)\right]_+ + 1.508 \sum_{i=1}^{MN} \mathbb{1}\left\{\lambda_i^{\uv_k}> \widehat{\gamma}\right\} 
   = R.
\end{align}
The channel estimates $\widehat{\hv}_k^{\rm ecsq}$ based on the ECSQ scheme can be obtained similar to \eqref{eq: hat_h_DR}.

\subsection{Linear JSCC (LJSCC) Scheme}
The classical JSCC feedback applies a simple linear mapping from the noisy pilot observation to the feedback symbols, using the so-called ``spreading'' or ``de-spreading'' matrix $\Wm_k$ that respectively corresponds to the dimension expansion ($\beta_{\rm fb}\geq \beta_{\rm tr}$) or reduction ($\beta_{\rm fb}< \beta_{\rm tr}$), see e.g., \cite[Section IV.B]{khalilsarai2023fdd} \cite[Section III.B]{kobayashi2011training}. Concretely, the codeword produced by the user $k$ after receiving the DL pilot measurements is given by 
\begin{align}\label{eq:LJSCC_code}
    \widetilde{\zv}_k = \Wm_k \yv^{\rm tr}_k, 
\end{align}
where $\Wm_k \in \mathbb{C}^{\beta_{\rm fb} \times \beta_{\rm tr}}$ is a pre-defined full-rank matrix, projecting from the dimension $ \beta_{\rm tr}$ into the dimension $\beta_{\rm fb}$. One possible realization of $\Wm_k$ is a random matrix, where each element follows a complex Gaussian distribution, i.e., $[\Wm_k]_{i, j} \sim \Cc\Nc(0, 1), ~\forall i \in [\beta_{\rm fb}], ~\forall j \in [\beta_{\rm tr}]$. The codeword should satisfy the power constraint in \eqref{eq:fb_power}, which is given as
\begin{align}
    {\widetilde{\zv}}^{\rm norm}_k = \sqrt{\nu_k} \widetilde{\zv}_k, 
\end{align}
where $\nu_k$ is a power scaling factor for user $k$, given as 
\begin{equation}\label{eq:nu}
    \nu_k = \frac{P_{\rm ul}}{\tr\left(\Wm_k(\Xm \Cm_{\hv_k} \Xm^{\herm} + \Id)\Wm_k^{\herm}\right)}.
\end{equation}
Then, the received feedback symbols at the BS is given by 
\begin{align}
    \widetilde{\yv}^{\rm fb}_k &= {\widetilde{\zv}}^{\rm norm}_k + \nv_k^{\rm ul},
\end{align}
which in total is a linear function of $\hv_k$. Hence, the BS can apply the linear MMSE estimation to obtain the channel estimate $\widehat{\hv}_k^{\rm ljscc}$, which is given by 
\begin{align}
    \widehat{\hv}_k^{\rm ljscc} &= \bE[\hv_k | \widetilde{\yv}^{\rm fb}_k]  \\
    &=\Cm_{\hv_k \widetilde{\yv}^{\rm fb}_k} \Cm_{\widetilde{\yv}^{\rm fb}_k}^{-1} \widetilde{\yv}^{\rm fb}_k,
\end{align}
where the corresponding covariance matrices are given as
\begin{align}
    \Cm_{\hv_k \widetilde{\yv}^{\rm fb}_k} &=  \sqrt{\nu} \Cm_{\hv_k} \Xm^{\herm} \Wm_k^{\herm} \\
    \Cm_{\widetilde{\yv}^{\rm fb}_k} &= \nu \Wm_k(\Xm \Cm_{\hv_k} \Xm^{\herm} + \Id_{\beta_{\rm tr}}) \Wm_k^{\herm} + \Id_{\beta_{\rm fb}}.
\end{align}
Finally, the channel estimation MSE of the LJSCC scheme is given as  
    \begin{align}
    D_k^{\rm ljscc}(\beta_{\rm fb}, {\mathsf{snr}}_{\rm ul}) =  \tr(\Cm_{\hv_k} -\Cm_{\widehat{\hv}_k^{\rm ljscc}}), 
\end{align}
where $\Cm_{\widehat{\hv}_k^{\rm ljscc}}$ denotes the covariance matrix of the channel estimates, given as 
\begin{align}
    \Cm_{\widehat{\hv}_k^{\rm ljscc}} &= \Cm_{\hv_k \widetilde{\yv}^{\rm fb}_k} \Cm_{\widetilde{\yv}^{\rm fb}_k}^{-1} \Cm_{\hv_k \widetilde{\yv}^{\rm fb}_k}^{\herm}.
\end{align}

\begin{remark}
    Note that one of the advantages of the LJSCC feedback scheme declared in \cite{khalilsarai2023fdd} is that it does not require the second-order statistics of DL channels at the users. We notice that the LJSCC scheme does require such knowledge, i.e., $\Cm_{\hv_k}$ in our case, to apply power normalization according to \eqref{eq:nu}. However, in practice, given fixed $\Xm$ and $\Wm_k$ (can be predefined and available at both user and BS sides), the power scaling factors $\nu_k$ can be periodically computed at the BS and sent to users, which should be only updated when the channel covariance is significantly changed (corresponding to the channel geometry coherence time which is much longer than the channel vector coherence time). Therefore, the overhead of knowing power scaling factors $\nu_k$ in the LJSCC scheme is much less than the acquisition of the full channel covariance, which still indicates the simplicity of the LJSCC scheme. \hfill $\lozenge$

\end{remark}

\subsection{Truncated Karkune-Lo\`eve (TKL) JSCC Scheme}
Although the LJSCC scheme requires less channel second-order statistics knowledge and simple signal processing, its performance could be worse especially under limited feedback dimension since it treats all dimensions equally. Therefore, in this paper, we propose a novel JSCC feedback that fully exploits the channel second-order statistics.    

In the proposed TKL scheme, we adopt the non-linear encoder function, where the received measurements are first mapped into its eigenspace and then truncated based on the eigenvalues, and finally the power allocation of the truncated signal is optimized to minimize the estimation error. Concretely, we start by finding a proper format of the eigenspace of the covariance matrix of ${\yv_k^{\rm tr}}$. First, we write the singular value decomposition (SVD) of $\Xm \Cm_{\hv_k}^{\frac{1}{2}}$  as 
\begin{align}\label{eq:svd}
    \Xm \Cm_{\hv_k}^{\frac{1}{2}}  = \Um_k \Sigmam_k \Qm_k^\herm,
\end{align}
where the main diagonal of $\Sigmam_k  \in \mathbb{C}^{\beta_{\rm tr} \times MN}$ contains the singular values in the descending order, and where $\Um_k \in \mathbb{C}^{\beta_{\rm tr} \times \beta_{\rm tr}}$ and $\Qm_k  \in \mathbb{C}^{MN \times MN}$ are the corresponding left-singular vectors and right-singular vectors, respectively. Then, with \eqref{eq:svd} the eigenvalue decomposition of the covariance matrix of ${\yv_k^{\rm tr}}$ is derived as
\begin{align}
    \Cm_{\yv_k^{\rm tr}} &= \Um_k \Sigmam_k \Sigmam_k^\herm \Um_k^\herm +\Id_{\beta_{\rm tr}}\nonumber \\
    &= \Um_k (\widebar{\Lambdam}_k + \Id_{\beta_{\rm tr}}) \Um_k^\herm, \label{eq:eigenspace}
\end{align}
where $\widebar{\Lambdam}_k \triangleq \Sigmam_k \Sigmam_k^\herm$.
Given \eqref{eq:eigenspace}, the TKL scheme transforms the received pilot signal into the eigenspaces and truncates at the first $\beta_{\rm fb}$ dominant subspace. Specifically, the feedback symbols can be obtained by  
\begin{align}\label{eq: codeword}
    \widehat{\zv}_k &= \diag(\sqrt{\alphav_k})\Sm_k  \Um_k^\herm \yv^{\rm tr}_k,
\end{align}
where $\Sm_k \in \{0,1\}^{\beta_{\rm fb} \times \betatr}$ is the selection matrix with $\Sm_{i, i} =1, \forall i \in [\beta_{\rm fb}] $ and zeros otherwise,\footnote{Note that the TKL scheme focus on the case with limited feedback dimension, i.e., $\betafb \leq \betatr$ and thus $\Sm_k$ is a fat matrix with $\Sm_k\Sm_k^\herm = \Id_{\betafb}$.} and where $\alphav_k \geq 0$ is the scaling vector to satisfy the power constraint \eqref{eq:fb_power}, given as
\begin{align}
    \bE[\|\widehat{\zv}_k\|_2^2] &=\tr( \diag (\sqrt{\alphav_k}) \Sm_k (\widebar{\Lambdam}_k + \Id_{\beta_{\rm fb}}) \Sm_k^\herm)  \\
    &= \sum_{i=1}^{\beta_{\rm fb}} \alpha_{k, i} (\widebar{\lambda}_{k, i} + 1) \leq P_{\rm ul}. \label{eq:TKL_power}
\end{align}
The received noisy feedback symbols at the BS are given as 
\begin{align}\label{eq: y_fb_TKL}
    \widehat{\yv}_k^{\rm fb} &= \widehat{\zv}_k + \nv_k^{\rm ul} \nonumber \\
    &= \diag(\sqrt{\alphav_k}) \Sm_k \Um_k^\herm (\Xm \hv_k + \nv_k^{\rm tr}) + \nv_k^{\rm ul},
\end{align}
which is still a linear function of $\hv_k$ as the LJSCC scheme. Therefore, the BS can apply linear MMSE estimation on $\widehat{\yv}_k^{\rm fb}$ to obtain the channel estimate $ \widehat{\hv}_k^{\rm tkl}$, which is given by
\begin{align}
    \widehat{\hv}_k^{\rm tkl} &= \bE[\hv_k | \widehat{\yv}^{\rm fb}_k] \nonumber \\
    &=\Cm_{\hv_k \widehat{\yv}^{\rm fb}_k} \Cm_{\widehat{\yv}^{\rm fb}_k}^{-1} \widehat{\yv}^{\rm fb}_k,
\end{align}
where $\Cm_{\hv_k \widehat{\yv}^{\rm fb}_k}$ and $\Cm_{\widehat{\yv}^{\rm fb}_k}$ are given by
\begin{align}
    \Cm_{\hv_k \widehat{\yv}^{\rm fb}_k} &= \Cm_{\hv_k}\Xm^\herm \Um_k \Sm_k^\herm \diag(\sqrt{\alphav_k}) \\ 
    \Cm_{\widehat{\yv}^{\rm fb}_k} &= \diag({\alphav_k})\Sm_k(\widebar{\Lambdam}_k + \Id_{\beta_{\rm tr}})\Sm_k^\herm  +\Id_{\beta_{\rm fb}}.
\end{align}
Accordingly, the channel estimation error in the TKL scheme is given by 
 \begin{align}\label{eq:TKL_MMSE}
    D_k^{\rm tkl}(\beta_{\rm fb}, \mathsf{snr}_{\rm ul}) =  \tr(\Cm_{\hv_k} -\Cm_{\widehat{\hv}_k^{\rm tkl}}), 
\end{align}
where $\Cm_{\widehat{\hv}_k^{\rm tkl}}$ denotes the covariance matrix of the channel estimate $\widehat{\hv}_k^{\rm tkl}$, given as
\begin{align}
    \Cm_{\widehat{\hv}_k^{\rm tkl}} =  \Cm_{\hv_k \widehat{\yv}^{\rm fb}_k} \Cm_{\widehat{\yv}^{\rm fb}_k}^{-1} \Cm_{\hv_k \widehat{\yv}^{\rm fb}_k}^{\herm}.
\end{align}
Now, the power allocation factor $\alphav_k$ is still unknown. We find the optimal power allocation by minimizing the MSE in \eqref{eq:TKL_MMSE} under the power constraint in \eqref{eq:TKL_power}, which is given as
\begin{subequations}\label{eq:opt_ppp}
\begin{align}
    \underset{\alphav_k\geq 0}{\text{minimize}}  \quad &  \tr(\Cm_{\hv_k} -\Cm_{\widehat{\hv}_k^{\rm tkl}}) \\
    \text{subject to} \quad&  \sum_{i=1}^{\beta_{\rm fb}} \alpha_{k, i} (\widebar{\lambda}_{k, i} + 1) \leq P_{\rm ul}. \label{eq:power_constraint}
\end{align}
\end{subequations}
The solution to \eqref{eq:opt_ppp} is cumbersome and thus deferred to Appendix~\ref{sec:lemma_TKL}.

{
\begin{remark}\label{remark:complexity}
    (Comparable complexity of TKL and LJSCC)
{
Since the compressed feedback dimension typically satisfies \(MN \gg \beta_{\rm fb}^2\), the SVD operation in \eqref{eq:svd} dominates the overall computational complexity. Hence, the total complexity of the TKL scheme is \(O(\beta_{\rm tr} (MN)^2)\).} However, the SVD computation depends solely on the channel's second-order statistics, which only need to be updated when these statistics undergo significant changes. The second-order channel statistics remain constant over a much longer period, known as the channel covariance/geometry coherence time, which is typically $100$ to $1000$ times longer than the channel coherence time \cite[Section I]{haghighatshoar2016massive}. 
    This means that the high computational SVD of TKL scheme need to be much less frequently performed compared to CSI estimation, so that their computational cost becomes negligible in long-term periods.
    At each time of CSI estimation, only a linear encoder function in \eqref{eq: codeword} needs to be applied that is with the same complexity order to the encoder function of LJSCC in \eqref{eq:LJSCC_code}.
    As a result, TKL has comparable low complexity to LJSCC on average. \hfill $\lozenge$
\end{remark}

}

{
\begin{remark}\label{remark:second-order}
(Availability of channel statistics) 
The assumption of available channel statistics is commonly made in many published works, not only in the massive MIMO systems, e.g.,  \cite{dietrichpilot2005, shen2017performance, khalilsarai2018fdd, yang2023plug}, but also in the cell-free systems, e.g., \cite{bjornson2020scalable, demir2021foundations, ye2022spectral, miretti2022team, gottsch2022subspace, gottsch2024fairness}. 
The DL channel covariance matrix can be estimated at the BS from a limited number of UL measurements \cite{song2020deep, yang2023structured}, and at the user from DL measurements \cite{ye2022spectral}.
It is noticed from Remark~\ref{remark:complexity} that the second-order channel statistics remain constant over a much longer time compared to the channel coherence time. Therefore, the overhead of estimating the covariance matrix is negligible in the long-term consideration, making the assumption of available channel statistics both feasible and reasonable.
\hfill $\lozenge$
\end{remark}
}

\subsection{Quality Scaling Exponent (QSE)}

In the literature, the quality scaling exponent (QSE) is commonly used to describe the high SNR behavior of channel estimate MSE. Concretely, the MSE between the true and estimate CSI decreases as $\Theta({\mathsf{snr}}_{\rm dl}^{-\alpha})$ when  ${\mathsf{snr}}_{\rm dl} \rightarrow \infty$ \cite{jindal2006mimo, yang2012degrees}, where the parameter $\alpha$ is the QSE.\footnote{The Big Theta notation follows the standard Bachmann-Landau order notation, see \cite[footnote 8]{khalilsarai2023fdd}.} The larger the QSE is, the faster the MSE decreases in the high SNR region. The QSEs of the DR, ECSQ, and LJSCC schemes have been derived in \cite{khalilsarai2023fdd}. We further derive the QSE for the proposed TKL scheme.
{\begin{theorem}[QSE for the TKL scheme]\label{theorem:QSE}
    For given training dimension $\beta_{\rm tr}$, feedback dimension $\beta_{\rm fb} \leq \beta_{\rm tr}$, and channels with covaraince rank $r_k = \text{rank}(\Cm_{\hv_k})$, the TKL scheme achieves a channel estimation error of $\Theta({\mathsf{snr}}_{\rm dl}^{-\alpha_{\rm tkl}})$ with probability one over the realizations of the pilot matrix $\Xm$, where the QSE $\alpha_{\rm tkl}$ is given by 
  \begin{align}\label{eq:QSE_TKL}
      \alpha_{\rm tkl} = \mathbb{1}\left\{ \beta_{\rm fb} \geq r_k\right\}.
  \end{align}
\end{theorem}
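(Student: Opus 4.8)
The plan is to diagonalize the feedback channel into independent scalar sub-channels, decompose the resulting MMSE into a sum over these, and then read off the high-SNR order in two regimes separated by whether every energy-carrying eigen-direction is \emph{both} observed in the pilots and retained by the feedback selection. Throughout I treat $\Xm$ as drawn from the i.i.d.\ Gaussian ensemble, so that all rank and subspace-intersection statements below hold with probability one.

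First I would substitute the SVD \eqref{eq:svd} into \eqref{eq: y_fb_TKL} and rotate the pilot observation by $\Um_k^\herm$. Writing $\hv_k = \Cm_{\hv_k}^{1/2}\gv_k$ with $\gv_k \sim \Cc\Nc(\mathbf 0,\Id_{MN})$ and $\widebar{\gv}_k \triangleq \Qm_k^\herm\gv_k \sim \Cc\Nc(\mathbf 0,\Id_{MN})$, the selected coordinates become $\beta_{\rm fb}$ decoupled scalar channels $\widehat y_{k,i}^{\rm fb} = \sqrt{\alpha_{k,i}}\,(\sigma_{k,i}\widebar g_{k,i} + \widetilde n_{k,i}) + n_{k,i}^{\rm ul}$, where $\sigma_{k,i}^2 = \widebar\lambda_{k,i}$ and $\widetilde n_{k,i}, n_{k,i}^{\rm ul}$ are independent unit-variance Gaussians, so $\widebar g_{k,i}$ is seen through an effective SNR $s_{k,i} = \alpha_{k,i}\widebar\lambda_{k,i}/(\alpha_{k,i}+1)$. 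Setting $\Bm_k \triangleq \Cm_{\hv_k}^{1/2}\Qm_k$ (so $\Bm_k\Bm_k^\herm = \Cm_{\hv_k}$ and $\hv_k = \sum_i \widebar g_{k,i}\bv_{k,i}$ with $\bv_{k,i}$ the columns of $\Bm_k$), the independence of the coordinate-wise estimation errors yields the clean decomposition
\begin{equation}
    D_k^{\rm tkl} = \sum_{i=1}^{MN} m_{k,i}\,\|\bv_{k,i}\|^2, \qquad m_{k,i} = \begin{cases} (1+s_{k,i})^{-1}, & i \le \beta_{\rm fb},\ \sigma_{k,i}>0,\\ 1, & \text{otherwise.}\end{cases}
\end{equation}

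Next I would establish the structural facts needed, all a.s.\ in $\Xm$: (i) $\Xm\Cm_{\hv_k}^{1/2}$ has exactly $\rho_k = \min(\beta_{\rm tr},r_k)$ nonzero singular values; (ii) for $i \le \rho_k$, writing $\qv_{k,i}$ for the $i$-th column of $\Qm_k$, we have $\|\bv_{k,i}\|^2 = \qv_{k,i}^\herm \Cm_{\hv_k}\qv_{k,i} > 0$, since $\qv_{k,i}\in\mathrm{range}(\Cm_{\hv_k}^{1/2}\Xm^\herm)\subseteq\mathrm{range}(\Cm_{\hv_k})$; and (iii) the unobservable space $\ker(\Xm)\cap\mathrm{range}(\Cm_{\hv_k})$ has dimension $\max\{r_k-\beta_{\rm tr},0\}$, so that when $\beta_{\rm tr}\ge r_k$ every zero-singular-value direction has $\|\bv_{k,i}\|^2 = 0$. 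Combining (i)--(iii), only the $\rho_k$ signal directions carry channel energy, and precisely those with index $\le \beta_{\rm fb}$ are estimated with nonzero SNR. Finally I would track the scaling: since $\Xm$ scales as $\sqrt{\snrdl}$ we have $\widebar\lambda_{k,i} = \snrdl\,\widebar\lambda_{k,i}^{(0)}$ with $\widebar\lambda_{k,i}^{(0)}$ fixed, while $P_{\rm ul} = \beta_{\rm fb}\kappa M\snrdl/K$ also grows linearly in $\snrdl$, so the constraint \eqref{eq:power_constraint} permits $\alpha_{k,i} = \Theta(1)$ on the signal directions, giving $s_{k,i} = \Theta(\snrdl)$ there.

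I would then conclude by cases. If $\beta_{\rm fb}\ge r_k$ then $\beta_{\rm tr}\ge r_k$, hence $\rho_k = r_k \le \beta_{\rm fb}$ and all $r_k$ energy-carrying directions are observed; a uniform feasible allocation gives $D_k^{\rm tkl} = O(\snrdl^{-1})$, while any feasible allocation obeys $s_{k,1} < \widebar\lambda_{k,1} = \Theta(\snrdl)$, forcing $m_{k,1}\|\bv_{k,1}\|^2 = \Omega(\snrdl^{-1})$ (equivalently $D_k^{\rm tkl}\ge D_k^{\rm mmse}$ by the data-processing chain $\hv_k\!\to\!\yv_k^{\rm tr}\!\to\!\widehat\yv_k^{\rm fb}$); thus $D_k^{\rm tkl} = \Theta(\snrdl^{-1})$ and $\alpha_{\rm tkl}=1$. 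If instead $\beta_{\rm fb}<r_k$, at least one energy-carrying direction is lost: when $\beta_{\rm tr}<r_k$ it lies in the unobservable subspace of (iii), and when $\beta_{\rm tr}\ge r_k$ it is a signal direction with index in $\{\beta_{\rm fb}+1,\dots,r_k\}$ that is not selected; in either event $m_{k,i}=1$ there, so $D_k^{\rm tkl}\ge \|\bv_{k,i}\|^2 = \Theta(1)$ and $\alpha_{\rm tkl}=0$. I expect the main obstacle to be the exact accounting of the zero-singular-value selected directions in step (iii): one must verify that they carry no channel energy precisely when $\beta_{\rm tr}\ge r_k$, since spurious constant terms would otherwise wrongly collapse the exponent to zero; this is where the a.s.\ transversality of $\ker(\Xm)$ and $\mathrm{range}(\Cm_{\hv_k})$ is essential.
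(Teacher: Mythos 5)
Your argument is correct, and it reaches the paper's conclusion by a genuinely different (and arguably more transparent) route. The paper never decouples the problem into scalar subchannels: it writes the TKL error as $\tr\bigl(\Lambdam_{\hv_k}(\Id - \Gm_k + \Gm_k(\Id+\Gm_k)^{-1}\Gm_k)\bigr)$ for a matrix $\Gm_k$ living in the $r_k$-dimensional channel eigenspace, sandwiches this between $\lambda^{\hv_k}_{r_k}\,g(\snrdl)$ and $\lambda^{\hv_k}_{1}\,g(\snrdl)$ with $g(\snrdl)=\sum_i (1+u_k^i)^{-1}$ over the eigenvalues $u_k^i$ of $\Gm_k$, shows the nonzero $u_k^i$ scale as $\Theta(\snrdl)$ after a high-SNR approximation of the power-allocation factors, and finally observes that $\Gm_k$ is full rank a.s.\ iff $\betafb\ge r_k$. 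Your version instead rotates into the right-singular-vector coordinates of $\Xm\Cm_{\hv_k}^{1/2}$ and obtains the \emph{exact} identity $D_k^{\rm tkl}=\sum_i m_{k,i}\|\bv_{k,i}\|^2$, which replaces the paper's trace inequality (which only preserves the order, not the constant) and its approximation $(\diag(\alphav_k)+\Id)^{-1}\approx\diag(\alphav_k)^{-1}$ with an explicit bookkeeping of which coordinates carry energy and which are observed; the price is that you must verify the transversality facts (i)--(iii) about $\ker(\Xm)$ and $\mathrm{range}(\Cm_{\hv_k})$, which you do correctly and which play the same role as the paper's a.s.\ statement that $\rank(\Gm_k)=\min\{\betafb,r_k\}$. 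Two small remarks: your handling of the upper bound via a \emph{uniform} feasible allocation is legitimate precisely because the TKL scheme optimizes over $\alphav_k$, so any feasible choice upper-bounds its MSE (worth stating explicitly); and in the subcase $\betafb=\betatr<r_k$ the ``lost'' energy is not attached to a single unselected index $i\le\rho_k$ but to the residual $\tr(\Cm_{\hv_k})-\sum_{i\le\rho_k}\|\bv_{k,i}\|^2>0$ spread over the null-space directions, which your appeal to fact (iii) covers but could be spelled out more carefully.
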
}
\begin{proof}
     See Appendix \ref{sec:QSE_proof}.
\end{proof}

\vspace{-5mm}
\section{Multiuser DL Sum Rate}
\vspace{-2mm}
\label{sec:DL_SE}
In this section, we consider the DL achievable ergodic sum-rate based on the estimated DL CSI. 
In the DL data transmission phase, the BS serves $K$ users simultaneously by sending the precoded signal $\xv[n] = \Vm[n] \sv[n]$ at each subcarrier, where $\sv[n] = [s_1[n], \dots, s_K[n]] \sim \mathcal{CN}(\mathbf{0}, \mathbf{I})$ are coded information bearing symbols for users, and where $\Vm[n] \in \mathbb{C}^{M \times K}= [\vv_1[n], \dots, \vv_K[n]]$ is the precoder matrix based on estimated CSI of $K$ users $\widehat{\Hm}[n] = [\widehat{\hv}_1[n], \dots, \widehat{\hv}_K[n]]$ that should satisfy the transmit power constraint\footnote{Note that in practice the transmit power is generally normalized instance by instance, which results in sub-optimal performance. We consider the power constraint to the expectation of the precoding power, which provides the most general result and also benefits the deviation for close-form ergodic rate expression that we will consider later.} 
\begin{align}\label{eq: V_power}
   \tr\left(\mathbb{E}[\Vm^\herm[n] \Vm[n]]\right) \leq \mathsf{snr}_{\rm dl}, \; \forall n \in [N].
\end{align}
The received signal by user $k$ at subcarrier $n$ is given by 
\begin{align}
    y_k[n] &= \hv_k[n]^\herm \Vm[n] \sv[n] + z_{\rm dl}[n] \\
    &= \hv_k[n]^\herm\vv_k[n] s_k[n] + \sum_{j \neq k} \hv_k^\herm[n]\vv_j[n] s_j[n] + z_{\rm dl}[n],
\end{align}
where $z_{\rm dl}[n] \sim \Cc\Nc(0,  1)$ is the DL AWGN.
Denoting the ergodic achievable rate for user $k$ at subcarrier $n$ as $R_k[n]$, we consider the averaged multi-user sum-rate over all subcarriers in [bits/s/Hz], which is given as
\begin{equation}\label{eq:sum_rate}
    R_{\rm avg} \!=\! {\frac{T}{NT}}\sum_{n\in \mathcal{N}_p^c}\sum_{k\in \mathcal{K}}R_k[n] \!+\! \frac{T-T_p}{NT}\sum_{n\in \mathcal{N}_p}\sum_{k\in \mathcal{K}}R_k[n], 
\end{equation}
where $\Nc_p^c$ is the complement of the pilot subcarrier index set, i.e., $\Nc_p^c = \Nc \setminus \Nc_p$. It is noticed that the ergodic rate depends on both the quality of the estimated CSI and the precoding design to mitigate interference among users and keep the desired signal strong. In this work, we consider two commonly used precoders, namely the maximum ratio transmission (MRT) precoder and the zero-forcing (ZF) precoder. Since the precoder of each subcarrier only depends on the CSI of that subcarrier, in the following, we focus on a \textbf{generic subcarrier} and abuse to use the writing $\hv_k, \Cm_{\hv_k}, \Cm_{\widehat{\hv}_k}$ by omitting the subcarrier index for simplicity. Note that the covariance matrices of the CSI and estimated CSI vectors of each subcarrier are the main diagonal $M\times M$ blocks of the corresponding covariance matrices of all subcarriers.


Based on \cite[Lemma 1 and 2]{caire2018ergodic}, the upper bound and lower bound (known as the use-and-then-forget (UatF) bound) to the ergodic achievable rate for user $k$ at a generic subcarrier are respectively given by 
\begin{align}
    R_k^{{\rm ub}} &= \bE\left[\log\left(1 + \frac{
    |\hv_k^\herm \vv_k|^2}{ \sum_{j \neq k}  |\hv_k^\herm \vv_j|^2 + 1}\right)\right], \label{eq:R_ub}\\
    R_k^{{\rm UatF}} &= \log\left(1 + \frac{|\mathbb{E}[\hv_k^\herm \vv_k]|^2}{ {\rm Var}(\hv_k^\herm \vv_k) + \sum_{j \neq k} \mathbb{E}[|\hv_k^\herm \vv_j|^2] + 1}\right).\label{eq:R_hard}
\end{align}
It is noticed that in all feedback schemes, for a fixed pilot matrix $\Xm$, the DL CSI for user $k$ can be derived from the channel estimate $\widehat{\hv}_k$ as
\begin{align}\label{eq:h_hat}
    \hv_k = \widehat{\hv}_k + \ev_k, 
\end{align}
where $\widehat{\hv}_k$ and $\ev_k$ are independent with $\widehat{\hv}_k \sim \Cc\Nc(\mathbf{0}, \Cm_{\widehat{\hv}_k})$ and $\ev_k \sim \Cc\Nc(\mathbf{0}, \Cm_{\ev_k} = \Cm_{\hv_k} - \Cm_{\widehat{\hv}_k})$. 
In the following lemmas, we provide closed-form expressions of the UatF  ergodic rate for the considered MRT and ZF precoders, respectively.  

\begin{lemma}[$R_k^{{\rm UatF}}$ under MRT precoder] \label{lemma: MRT}
Using MRT precoding, we have $\Vm = \sqrt{\eta}\widehat{\Hm}$, where the power scaling factor $\eta$ is obtained based on \eqref{eq: V_power} as 
\begin{align}
    \eta = \frac{\mathsf{snr}_{\rm dl}}{\tr\left(\mathbb{E}[\widehat{\Hm}^\herm \widehat{\Hm}]\right)} 
    = \frac{\mathsf{snr}_{\rm dl}}{\tr\left(\sum_{k=1}^K \Cm_{\widehat{\hv}_k}\right)}. \label{eq:eta_MRT}
\end{align}
The UatF ergodic rate under MRT precoder in closed-form is 
\begin{align}
    R^{{\rm MRT}}_{k} &= \log\left(1 +\frac{\eta\trace^2(\Cm_{\widehat{\hv}_k})}{\eta\tr(\Cm_{\widehat{\hv}_k} \Cm_{{\hv}_k})+ \eta \sum_{j \neq k} \trace( \Cm_{\widehat{\hv}_j} \Cm_{\hv_k})+ 1}\right), \\
    &=\log\left(1 +\frac{\trace^2(\Cm_{\widehat{\hv}_k})}{  \tr\left(\left(\sum_{k=1}^K \Cm_{\widehat{\hv}_k}\right) \Cm_{\hv_k}\right)+ \frac{1}{\eta}}\right).\label{eq:R_hard_MRT}
\end{align}
\end{lemma}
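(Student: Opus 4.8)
The plan is to start from the UatF expression \eqref{eq:R_hard}, insert the MRT precoder $\vv_k=\sqrt{\eta}\,\widehat{\hv}_k$ (so $\vv_j=\sqrt{\eta}\,\widehat{\hv}_j$), and then evaluate separately its three ingredients: the effective signal gain $\bE[\hv_k^\herm\vv_k]$, the self-interference term ${\rm Var}(\hv_k^\herm\vv_k)$, and the multiuser leakage $\sum_{j\neq k}\bE[|\hv_k^\herm\vv_j|^2]$. The only structural facts I would need are the orthogonal MMSE decomposition \eqref{eq:h_hat}, $\hv_k=\widehat{\hv}_k+\ev_k$ with $\widehat{\hv}_k$ and $\ev_k$ independent zero-mean Gaussian vectors of covariances $\Cm_{\widehat{\hv}_k}$ and $\Cm_{\ev_k}=\Cm_{\hv_k}-\Cm_{\widehat{\hv}_k}$, and the cross-user independence $\widehat{\hv}_j\perp\hv_k$ for $j\neq k$ (the estimate of user $j$ being a function of user $j$'s pilots and feedback noise only, which are independent of $\hv_k$ under the i.i.d. channel model).

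First I would pin down $\eta$ from the power budget \eqref{eq: V_power}: since $\Vm=\sqrt{\eta}\,\widehat{\Hm}$ we get $\tr(\bE[\Vm^\herm\Vm])=\eta\sum_{k=1}^{K}\bE[\|\widehat{\hv}_k\|^2]=\eta\sum_{k=1}^{K}\tr(\Cm_{\widehat{\hv}_k})$, and forcing equality yields \eqref{eq:eta_MRT}. For the signal gain I would write $\bE[\hv_k^\herm\widehat{\hv}_k]=\bE[\|\widehat{\hv}_k\|^2]+\bE[\ev_k^\herm\widehat{\hv}_k]=\tr(\Cm_{\widehat{\hv}_k})$, the second expectation vanishing by zero-mean independence of $\ev_k$ and $\widehat{\hv}_k$. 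This quantity is real and nonnegative, so the numerator is $|\bE[\hv_k^\herm\vv_k]|^2=\eta\,\tr^2(\Cm_{\widehat{\hv}_k})$.

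The crux is the variance. Expanding $\hv_k^\herm\widehat{\hv}_k=\|\widehat{\hv}_k\|^2+\ev_k^\herm\widehat{\hv}_k$ and taking $\bE[|\cdot|^2]$, the cross term again drops by independence, leaving $\bE[\|\widehat{\hv}_k\|^4]+\bE[|\ev_k^\herm\widehat{\hv}_k|^2]$. Here I would invoke the complex-Gaussian fourth-moment identity $\bE[\|\widehat{\hv}_k\|^4]=\tr^2(\Cm_{\widehat{\hv}_k})+\tr(\Cm_{\widehat{\hv}_k}^2)$, and, conditioning on $\widehat{\hv}_k$, $\bE[|\ev_k^\herm\widehat{\hv}_k|^2]=\bE[\widehat{\hv}_k^\herm\Cm_{\ev_k}\widehat{\hv}_k]=\tr(\Cm_{\widehat{\hv}_k}\Cm_{\ev_k})$. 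Subtracting $|\bE[\hv_k^\herm\widehat{\hv}_k]|^2=\tr^2(\Cm_{\widehat{\hv}_k})$ gives ${\rm Var}(\hv_k^\herm\widehat{\hv}_k)=\tr(\Cm_{\widehat{\hv}_k}^2)+\tr(\Cm_{\widehat{\hv}_k}\Cm_{\ev_k})$. The decisive step is then to substitute $\Cm_{\ev_k}=\Cm_{\hv_k}-\Cm_{\widehat{\hv}_k}$, so that the two $\tr(\Cm_{\widehat{\hv}_k}^2)$ terms cancel and the variance collapses to the clean form $\tr(\Cm_{\widehat{\hv}_k}\Cm_{\hv_k})$, giving ${\rm Var}(\hv_k^\herm\vv_k)=\eta\,\tr(\Cm_{\widehat{\hv}_k}\Cm_{\hv_k})$.

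For the leakage, using $\widehat{\hv}_j\perp\hv_k$ I would condition on $\hv_k$ to get $\bE[|\hv_k^\herm\widehat{\hv}_j|^2]=\bE[\hv_k^\herm\Cm_{\widehat{\hv}_j}\hv_k]=\tr(\Cm_{\widehat{\hv}_j}\Cm_{\hv_k})$, hence the leakage is $\eta\sum_{j\neq k}\tr(\Cm_{\widehat{\hv}_j}\Cm_{\hv_k})$. Assembling numerator and denominator reproduces the first displayed expression; dividing top and bottom by $\eta$ and merging the self- and cross-terms via $\tr(\Cm_{\widehat{\hv}_k}\Cm_{\hv_k})+\sum_{j\neq k}\tr(\Cm_{\widehat{\hv}_j}\Cm_{\hv_k})=\tr\!\big((\sum_{k}\Cm_{\widehat{\hv}_k})\Cm_{\hv_k}\big)$ yields \eqref{eq:R_hard_MRT}. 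I expect the main obstacle to be precisely the variance bookkeeping: correctly applying the Gaussian fourth-moment identity and then noticing the cancellation triggered by $\Cm_{\ev_k}=\Cm_{\hv_k}-\Cm_{\widehat{\hv}_k}$, which is what makes the self-interference reduce to $\tr(\Cm_{\widehat{\hv}_k}\Cm_{\hv_k})$; every other term is a routine trace manipulation.
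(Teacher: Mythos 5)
Your proposal is correct and follows essentially the same route as the paper's Appendix C: fix $\eta$ from the power constraint, decompose $\hv_k=\widehat{\hv}_k+\ev_k$, evaluate the three UatF terms, and let the substitution $\Cm_{\ev_k}=\Cm_{\hv_k}-\Cm_{\widehat{\hv}_k}$ collapse the variance to $\eta\,\tr(\Cm_{\widehat{\hv}_k}\Cm_{\hv_k})$. The only cosmetic difference is that you cite the complex-Gaussian fourth-moment identity $\bE[\|\widehat{\hv}_k\|^4]=\tr^2(\Cm_{\widehat{\hv}_k})+\tr(\Cm_{\widehat{\hv}_k}^2)$ directly, whereas the paper derives it explicitly via the eigendecomposition of $\Cm_{\widehat{\hv}_k}$ and $\bE[|z|^4]=2\lambda^2$.
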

\begin{proof}
    See Appendix \ref{sec:MRT_lemma}.
\end{proof}

\begin{lemma}[$R_k^{{\rm UatF}}$ under ZF precoder]\label{lemma: ZF}
    Using ZF precoding, the precoder matrix is a power normalized pseudo-inverse of the estimated channel matrix, i.e., $\Vm =  \sqrt{\widetilde{\eta}} \widehat{\Hm} (\widehat{\Hm}^{\herm} \widehat{\Hm})^{-1}$, where the power scaling factor $\widetilde{\eta}$ is obtained based on \eqref{eq: V_power} as 
\begin{align}
    \widetilde{\eta} &= \frac{\mathsf{snr}_{\rm dl}}{\tr\left(\mathbb{E}\left[\left(\widehat{\Hm}^\herm \widehat{\Hm}\right)^{-1}\right]\right)}. \label{eq:V_eta_ZF}
\end{align}
The UatF ergodic rate under ZF precoder can be written as  
\begin{align}
    {R}^{{\rm ZF}}_{k} &= \log\left(1 + \frac{1}{  \tr\left(\mathbb{E}\left[\widehat{\Hm} (\widehat{\Hm}^{\herm} \widehat{\Hm})^{-2} \widehat{\Hm}^{\herm}\right]\Cm_{\ev_k}\right) + \frac{1}{\widetilde{\eta}}}\right).  \label{eq:R_hard_ZF} 
\end{align} 
\end{lemma}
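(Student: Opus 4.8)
The plan is to evaluate the three moments that enter the UatF bound \eqref{eq:R_hard} — the conditional mean $\bE[\hv_k^\herm \vv_k]$, the desired-signal variance ${\rm Var}(\hv_k^\herm \vv_k)$, and the leakage $\sum_{j\neq k}\bE[|\hv_k^\herm \vv_j|^2]$ — by combining the defining algebraic property of ZF with the error decomposition \eqref{eq:h_hat}. First I would record the zero-forcing identity: since $\Vm = \sqrt{\widetilde\eta}\,\widehat{\Hm}(\widehat{\Hm}^\herm\widehat{\Hm})^{-1}$, we have $\widehat{\Hm}^\herm\Vm = \sqrt{\widetilde\eta}\,\Id_K$, i.e. $\widehat{\hv}_k^\herm\vv_j = \sqrt{\widetilde\eta}\,\delta_{kj}$ for all $k,j$. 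The power-scaling factor follows at once from the constraint \eqref{eq: V_power}: because $\Vm^\herm\Vm = \widetilde\eta\,(\widehat{\Hm}^\herm\widehat{\Hm})^{-1}$, imposing $\tr(\bE[\Vm^\herm\Vm]) = \mathsf{snr}_{\rm dl}$ reproduces \eqref{eq:V_eta_ZF}.

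Next I would substitute $\hv_k = \widehat{\hv}_k + \ev_k$ into each moment and use that $\ev_k$ is zero-mean and independent of $\widehat{\Hm}$ (hence of every $\vv_j$). For the mean, $\bE[\hv_k^\herm\vv_k] = \bE[\widehat{\hv}_k^\herm\vv_k] + \bE[\ev_k^\herm\vv_k] = \sqrt{\widetilde\eta} + 0$, so the numerator of \eqref{eq:R_hard} is $|\bE[\hv_k^\herm\vv_k]|^2 = \widetilde\eta$. For the variance, writing $\hv_k^\herm\vv_k = \sqrt{\widetilde\eta} + \ev_k^\herm\vv_k$ and noting that the cross term $2\sqrt{\widetilde\eta}\,\Re(\ev_k^\herm\vv_k)$ averages to zero by the same independence gives ${\rm Var}(\hv_k^\herm\vv_k) = \bE[|\ev_k^\herm\vv_k|^2]$. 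For the interference, the ZF identity annihilates the estimated part, $\hv_k^\herm\vv_j = \ev_k^\herm\vv_j$ for $j\neq k$, so that $\bE[|\hv_k^\herm\vv_j|^2] = \bE[|\ev_k^\herm\vv_j|^2]$.

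The crucial consolidation step is to observe that the variance term and the leakage terms together complete a sum over all $j$, so that the denominator of \eqref{eq:R_hard} equals $\sum_{j=1}^K \bE[|\ev_k^\herm\vv_j|^2] + 1 = \bE[\ev_k^\herm\Vm\Vm^\herm\ev_k] + 1$. Using $\Vm\Vm^\herm = \widetilde\eta\,\widehat{\Hm}(\widehat{\Hm}^\herm\widehat{\Hm})^{-2}\widehat{\Hm}^\herm$, conditioning on $\widehat{\Hm}$, and applying the quadratic-form identity $\bE_{\ev_k}[\ev_k^\herm\mathbf{A}\ev_k] = \tr(\mathbf{A}\Cm_{\ev_k})$ (valid by independence via the tower property), the denominator becomes $\widetilde\eta\,\tr(\bE[\widehat{\Hm}(\widehat{\Hm}^\herm\widehat{\Hm})^{-2}\widehat{\Hm}^\herm]\,\Cm_{\ev_k}) + 1$. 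Dividing numerator and denominator by $\widetilde\eta$ then yields exactly \eqref{eq:R_hard_ZF}.

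This argument is essentially mechanical once the ZF identity and the independence of $\ev_k$ from $\widehat{\Hm}$ are in place, so there is no serious analytic obstacle. The one point demanding care is the repeated appeal to independence: it is used both to kill the cross terms in the mean and variance and to justify pulling $\Cm_{\ev_k}$ out through the expectation, and since $\Vm$ is itself a nonlinear function of $\widehat{\Hm}$ one must keep the conditioning order straight throughout. A secondary caveat is well-definedness of the scheme: the derivation presupposes that $\widehat{\Hm}^\herm\widehat{\Hm}$ is invertible almost surely, which is the standard regularity condition for ZF and holds whenever the estimated channels $\widehat{\hv}_1,\dots,\widehat{\hv}_K$ are almost surely linearly independent.
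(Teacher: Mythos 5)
Your proposal is correct and follows essentially the same route as the paper's proof in Appendix D: the ZF orthogonality identity $\widehat{\hv}_i^\herm\vv_j=\sqrt{\widetilde\eta}\,\delta_{ij}$, the decomposition $\hv_k=\widehat{\hv}_k+\ev_k$ with independence of $\ev_k$ from the precoders to kill cross terms, and the consolidation of the variance and leakage terms into $\tr\bigl(\bE[\Vm\Vm^\herm]\Cm_{\ev_k}\bigr)=\widetilde\eta\,\tr\bigl(\bE[\widehat{\Hm}(\widehat{\Hm}^\herm\widehat{\Hm})^{-2}\widehat{\Hm}^\herm]\Cm_{\ev_k}\bigr)$. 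The only cosmetic difference is that you sum over all $j$ at once via $\Vm\Vm^\herm$ whereas the paper first writes each term as $\tr(\Cm_{\vv_j}\Cm_{\ev_k})$ and then collapses the sum; the content is identical.
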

\begin{proof}
    See Appendix \ref{sec:proof_ZF}.
\end{proof}
\begin{remark}
    Note that the expression in \eqref{eq:R_hard_ZF} is not in closed-form, since the terms $\mathbb{E}[(\widehat{\Hm}^\herm \widehat{\Hm})^{-1}]$ in \eqref{eq:V_eta_ZF} and $\mathbb{E}[\widehat{\Hm} (\widehat{\Hm}^{\herm} \widehat{\Hm})^{-2} \widehat{\Hm}^{\herm}]$ in \eqref{eq:R_hard_ZF} are required to be numerically calculated via Monte Carlo method. 
    It is still an open problem to obtain the closed-form expressions for both terms based on the corresponding covariance knowledge, i.e., $\{\Cm_{\widehat{\hv}_k}\}_{k=1}^K$.  \hfill $\lozenge$
\end{remark}


\section{Simulation Results}
\label{sec:simulation}
In our simulation, we consider that the BS is equipped with a uniform linear array (ULA) of $M=32$ antennas, simultaneously serving $K=6$ users over $N=32$ OFDM subcarriers. The resulting wideband channel dimension is $MN = 1024$. We consider an OFDM subcarrier spacing of $\Delta_f = 30~$kHz, and the resulting symbol duration equals to $T_{\rm u} = 1/\Delta_{f} \approx 33 ~\mu$s. Assuming that the maximum channel delay spread is equal to $\tau_{\rm max} = 7~\mu$s, and that the duration of the OFDM cyclic prefix is equal to the maximum channel delay spread, it results in an OFDM symbol duration of $T_{\rm s}= \tau_{\max} + T_{\rm u} = 40 ~\mu$s. Then, the coherence bandwidth is $B_{\rm c} = 1/T_{\rm s} \approx 143 ~$kHz, and the total bandwidth is $B = N \Delta f \approx 1~$MHz. The channel is assumed to be constant over a coherence time of $T_{\rm c}= 1~$ms, corresponding to $T= T_{\rm c}/T_{\rm s} = 25 ~$OFDM symbols in time. 
Hence the total time-frequency resource dimension is given by $TN = 25 \times 32 = 800$, where $\beta_{\rm tr}$ dimensions are dedicated to DL training, while the rest dimensions are used for data transmission. Note that the UL and DL SNR have the relation of $\mathsf{snr}_{\rm dl} = K\mathsf{snr}_{\rm ul}$ as explained in the system model.

For the channel modeling, we consider the multipath component channel with $L$ number of multipath. The instantaneous DL channel vector for the $k$-th user is given by 
\begin{align}\label{eq:channel_matrix}
    \hv_k = {\rm vec}\left(\sum_{\ell=1}^L  g_{k, \ell} \av(\theta_{k, \ell}) \bv^{\transp}(\tau_{k, \ell}) \right),
\end{align}
where $g_{k, \ell}  \sim \Cc\Nc(0, \gamma_{k,\ell})$ is the complex gain of the $\ell$-th path of user $k$ with its power $\gamma_{k, \ell}$, and where $\theta_{k, \ell}$ and $\tau_{k, \ell}$ are respectively the angle-of-departure (AoD) and the delay of the $\ell$-th path of user $k$, and where $\av(\theta_{k, \ell}) \in \mathbb{C}^M$ is the steering vector of the ULA, whose $m$-th element is $[\av(\theta_{k, \ell})]_{m} = e^{j  \pi  (m-1) \sin(\theta_{k, \ell})}, m \in [M]$ by assuming the antenna spacing equaling to half of the carrier wavelength, and $\bv(\tau_{k, \ell}) \in \mathbb{C}^{N}$ contains the phase rotations corresponding to the path delay $\tau_{k, \ell}$, whose $n$-th element is $[\bv(\tau_{k, \ell})]_n = e^{-j 2 \pi (n-1) {\Delta f} \tau_{k, \ell}},  n  \in [N]$.  
Based on the channel vector in \eqref{eq:channel_matrix}, the DL covariance matrix of user $k$ is given as
\begin{align}\label{eq: C_h}
    \Cm_{\hv_k} \!=\! \sum_{\ell=1}^L \gamma_{k, \ell} \left(\bv(\tau_{k, \ell}) \bv(\tau_{k, \ell})^\herm\right) \otimes\left(\av(\theta_{k, \ell}) \av(\theta_{k, \ell})^\herm\right). 
\end{align}

In all simulations, the AoDs, delay, and the path power are generated independently from uniform distributions, i.e., $\theta_{k, \ell} \sim {\rm Unif}[-60^{\circ}, 60^{\circ}]$, $\tau_{k, \ell} \sim {\rm Unif}[0, \tau_{\rm max}]$ and $\gamma_{k, \ell} \sim {\rm Unif}[0.4, 0.8]$ where the total path power is normalized as $\sum_{\ell=1}^L \gamma_{k, \ell} = 1$. 
We provide the simulation results for two scenarios: sparse scattering with $L=6$ paths per user, and rich scattering with $L=60$ paths per user.
Note that the rank of the channel covariance matrix is equal to the number of paths, i.e., $r_k = L$, and we focus on the regime where the training dimension is sufficient ($\beta_{\rm tr} \geq L$) while the feedback dimension is a bottleneck. In our simulation, we consider $\beta_{\rm tr}= 64$ training dimension with $N_p = 8$ selected subcarriers and $T_p= 8$ pilot sequences.
We consider the normalized MSE (NMSE) defined as $\frac{1}{K} \sum_{k=1}^K \frac{\mathbb{E}[\|\widehat{\hv}_k -\hv_k\|^2]}{\mathbb{E}[\|\hv_k\|^2]}$ as the channel estimation metric.
The numerical results are averaged over $10$ channel geometries and each with $1000$ channel realizations. 

\begin{remark}[Pilot arrangement in frequency]
    {We know that as long as the pilot dimension $\beta_{\rm tr}$ is larger than the channel rank, the MSE at the user side vanishes as the DL SNR increases. However, given a fixed $\beta_{\rm tr}$, one still needs to decide  $T_{\rm p}$ and $ N_{\rm p}$ with the constraint $\beta_{\rm tr} = T_{\rm p} N_{\rm p}$. We notice that the channel remains relatively flat over about $N_{\rm cb} \approx \frac{B_{\rm c}}{\Delta_f} \approx 4$ consecutive subcarriers. Thus, the entire bandwidth can be divided into $N_{\rm p} = \frac{N}{N_{\rm cb}} = 8$ coherence band, where for each coherence band we probe its central subcarrier and in total we have a comb-type pilot arrangement with $N_{\rm p} = T_{\rm p} = 8$. In Fig. \ref{fig:select_sub}, we compare the comb-type pilot arrangement to the single pilot case where only subcarrier-16 is probed. Note that in both cases we have $\beta_{\rm tr} = T_{\rm p} N_{\rm p} = 64$. It is obvious when only a single subcarrier is probed, the subcarriers ``very far'' from the pilot subcarrier suffer from a larger NMSE, while the comb-type pilots result in a much lower NMSE over all subcarriers.  \hfill $\lozenge$
     }
\end{remark}

\begin{figure}
        \centering
        \includegraphics[scale=0.4]{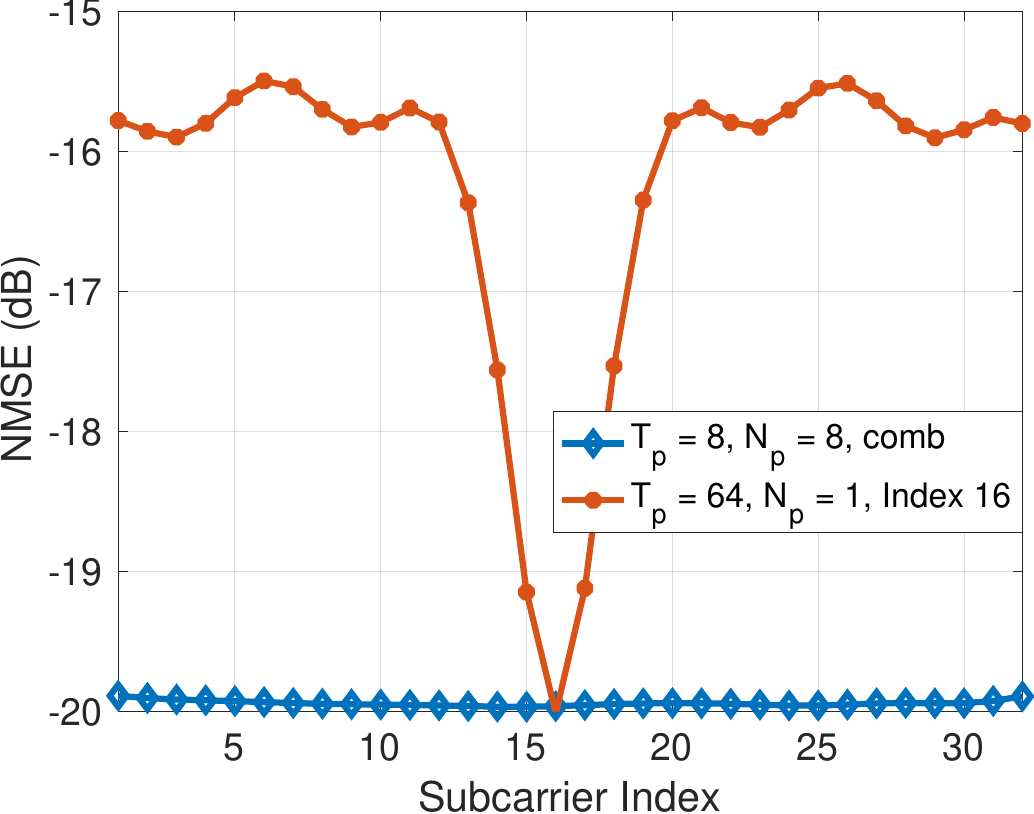}
        \caption{The average NMSE at the user over $100$ geometries of different subcarriers under single probed subcarrier (index 16) and under 8 probed subcarriers with comb-type pilot arrangement when SNR $= 10$ dB and $L = 6$.}
        \label{fig:select_sub}
        \vspace{-6mm}
    \end{figure}

\begin{figure*}[ht!]
\centering
        \begin{subfigure}[b]{0.4\textwidth}
        \includegraphics[width=1\columnwidth]{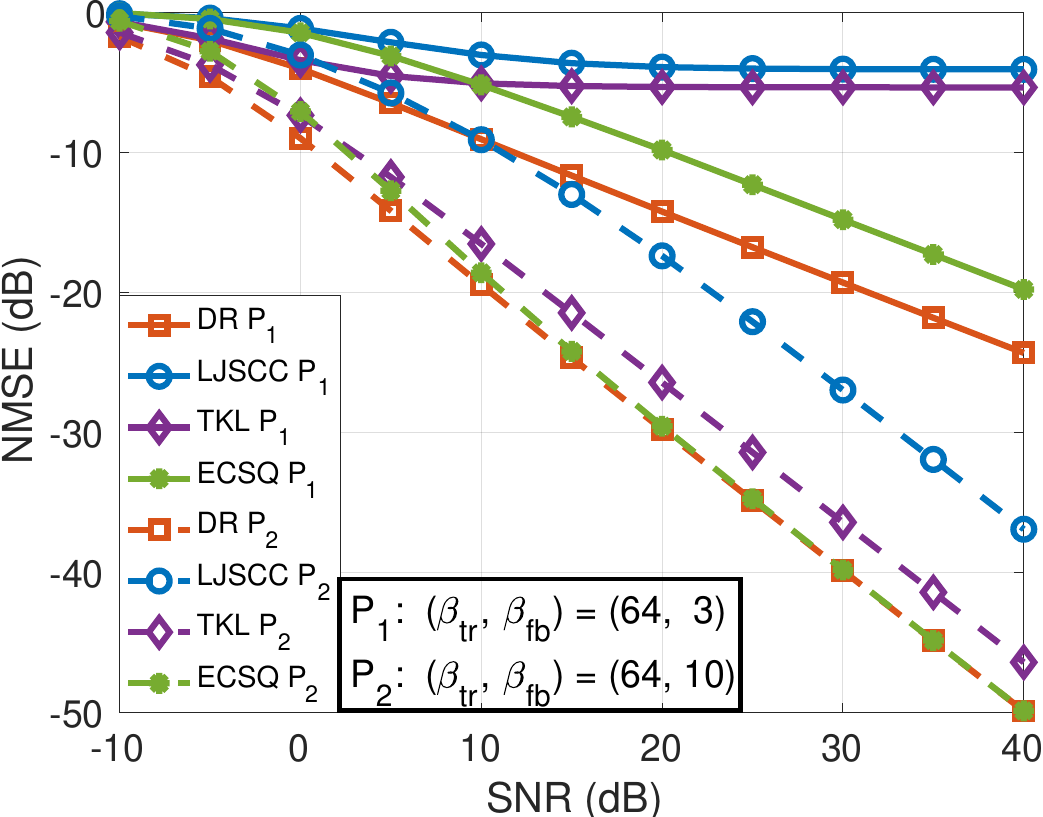}
        \caption{$L = 6$}
        \label{fig:L_6}
        \end{subfigure}
        ~
        \begin{subfigure}[b]{0.4\textwidth}
        \includegraphics[width=\columnwidth]{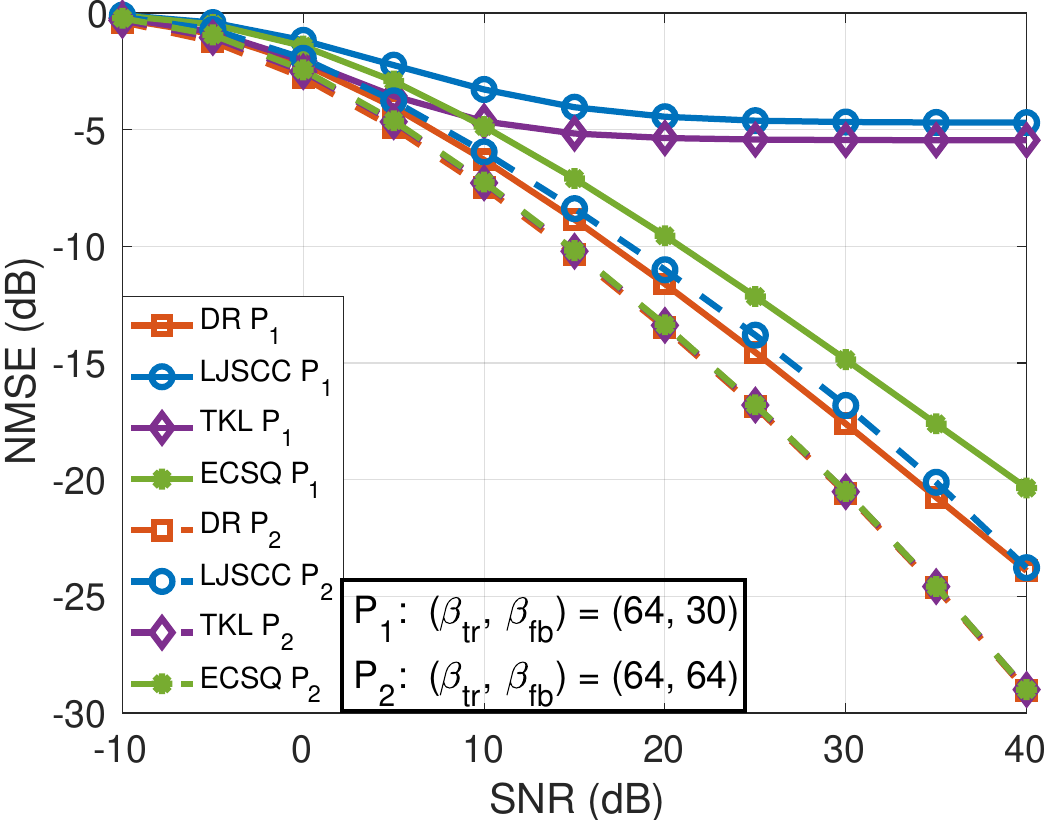}
        \caption{$L = 60$}
        \label{fig:L_60}
        \end{subfigure}
 \caption{NMSE (in dB) vs. DL SNR (in dB) under sparse (a) and rich (b) scattering channels. }
     \label{fig:NMSE}
     \vspace{-6mm}
\end{figure*}

\subsection{Normalized MSE vs. DL SNR for Channel Estimation}
The results under sparse and rich scattering channels are shown in Fig. \ref{fig:L_6} and \ref{fig:L_60}, respectively, where $P_1$ (in solid lines) and $P_2$ (in dashed lines) denote the cases with compressed ($\beta_{\rm fb} < L$) and sufficient ($\beta_{\rm fb} > L$) feedback dimensions, respectively. Note that the slopes in the high SNR region represent the QSEs of different schemes. From the results in both Fig. \ref{fig:L_6} and \ref{fig:L_60}, we observe that in the high SNR region ($20$dB to $40$dB), the slopes of the TKL scheme are almost zero in $P_1$ and almost one in $P_2$, which confirms the analytical result of the QSE of the TKL scheme in {{Theorem}} \ref{theorem:QSE}. Moreover, in $P_1$ with insufficient feedback, as SNR increases, the NMSE of the ECSQ scheme decreases while the NMSE of the TKL scheme becomes flat due to its zero QSE. In $P_2$ with sufficient feedback, the QSEs of all schemes are the same, and the proposed TKL scheme performs much better than the LJSCC scheme and even very close to the ECSQ scheme especially under rich scattering channels $L=60$.
Furthermore, in all cases, the proposed TKL scheme outperforms the LJSCC and ECSQ schemes under small SNR ($-10$ dB to $5$ dB).

\subsection{DL Sum Rate vs. SNR}
In this part, we evaluate the DL sum-rate of different feedback schemes under various SNR. The results of MRT and ZF precoding are shown in Fig. \ref{fig:DL_MRT_snr} and Fig. \ref{fig:DL_ZF_snr}, respectively.

\subsubsection{MRT precoding}
In the case of compressed feedback dimensions, as shown in Fig. \ref{fig:L_6_beta_fb_3_MRT} and Fig. \ref{fig:L_60_beta_fb_30_MRT}, when the DL SNR ranges from $-10$ dB to $10$ dB (a practical DL SNR range), the TKL scheme always outperforms the ECSQ scheme and show quite close performance to the DR scheme. Note that the TKL scheme is even aligned with the DR scheme when $L=60$ in Fig. \ref{fig:L_60_beta_fb_30_MRT} when the SNR is low ($-10$ dB to $5$ dB), indicating the advantages of the TKL scheme. As the SNR increases, since the QSE of the ECSQ scheme is larger than that of the TKL scheme, the sum rate of the ECSQ scheme increases faster and reaches the DR scheme in the high SNR region ($30$ dB to $40$ dB), while the TKL scheme becomes flat in the high SNR region. 
Also note that when the SNR ranges from $-10$ dB to $0$ dB, the sum rate for the ECSQ is almost zero, since the rate constraint ($\beta_{\rm fb} C_{\rm ul}$) is less than the penalty of the replaced scalar quantization, i.e., $1. 508 \sum_{i=1}^{MN} \mathbb{1}\left\{\lambda_i^{\uv_k}> 0\right\}$, resulting in no information transfer during feedback. Therefore, for the small UL rate (small $\beta_{\rm fb}$ or SNR), the ECSQ scheme encounters a bottleneck due to the replaced scalar quantization compared to the DR scheme.
In the case of sufficient feedback dimension, as shown in Fig. \ref{fig:L_6_beta_fb_10_MRT} and Fig. \ref{fig:L_60_beta_fb_64_MRT}, the four schemes perform in a similar trend and match each other. The TKL scheme performs slightly better than the ECSQ scheme especially when the SNR is low ($-10$dB to $5$dB).   

\subsubsection{ZF precoding}
In the case of compressed feedback dimension, when the SNR is in the range of $-10$ dB to $10$ dB, the TKL scheme performs better than the other practical schemes and gets close to the DR scheme, even matching the DR scheme when $L=60$ in Fig. \ref{fig:L_60_beta_fb_30_ZF}. As the SNR increases, since ZF precoding is preferred over MRT precoding in the high SNR region in practice, and the MSEs of the channel estimation for the DR and ECSQ schemes also decrease faster, the sum-rates for the two schemes increase. On the contrary, since the QSEs of the TKL and LJSCC schemes are all zero, the channel estimation MSE becomes constant. Therefore, the channel estimation error dominates in the case of compressed feedback dimensions, i.e., no matter how large the DL SNR becomes, the sum-rates for the TKL and LJSCC schemes become flat. Moreover, in the case of sufficient feedback dimension, as illustrated in Fig. \ref{fig:L_6_beta_fb_10_ZF} and Fig. \ref{fig:L_60_beta_fb_64_ZF}, the four schemes except the LJSCC scheme coincide with each other when the DL SNR ranges from $-10$ dB to $40$ dB. Taking a closer look, when the SNR is $0$ dB, the TKL scheme performs slightly better than the ECSQ scheme.

\begin{figure*}[ht!]
\centering
        \begin{subfigure}[b]{0.4\textwidth}
        \includegraphics[width=1\columnwidth]{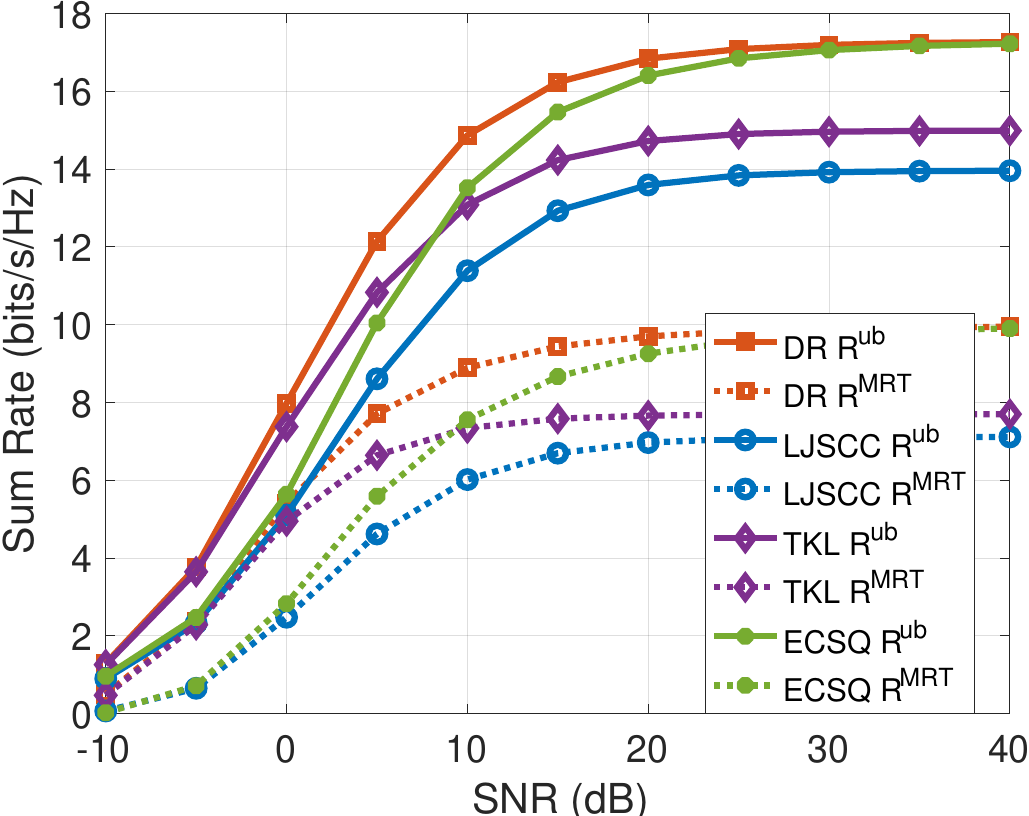}
        \caption{$L = 6$ and $\beta_{\rm fb} = 3$.}
        \label{fig:L_6_beta_fb_3_MRT}
        \end{subfigure}
        ~
        \begin{subfigure}[b]{0.4\textwidth}
        \includegraphics[width=\columnwidth]{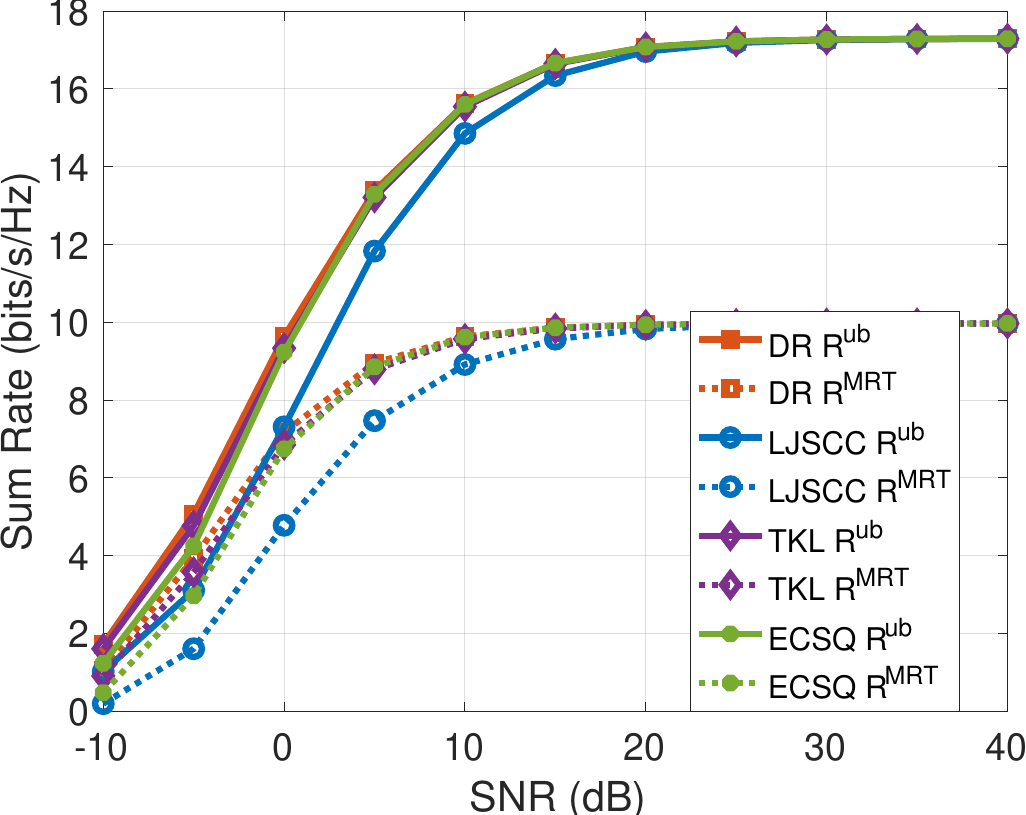}
        \caption{$L = 6$ and $\beta_{\rm fb} = 10$.}
        \label{fig:L_6_beta_fb_10_MRT}
        \end{subfigure}
        ~
        \begin{subfigure}[b]{0.4\textwidth}
        \includegraphics[width=1\columnwidth]{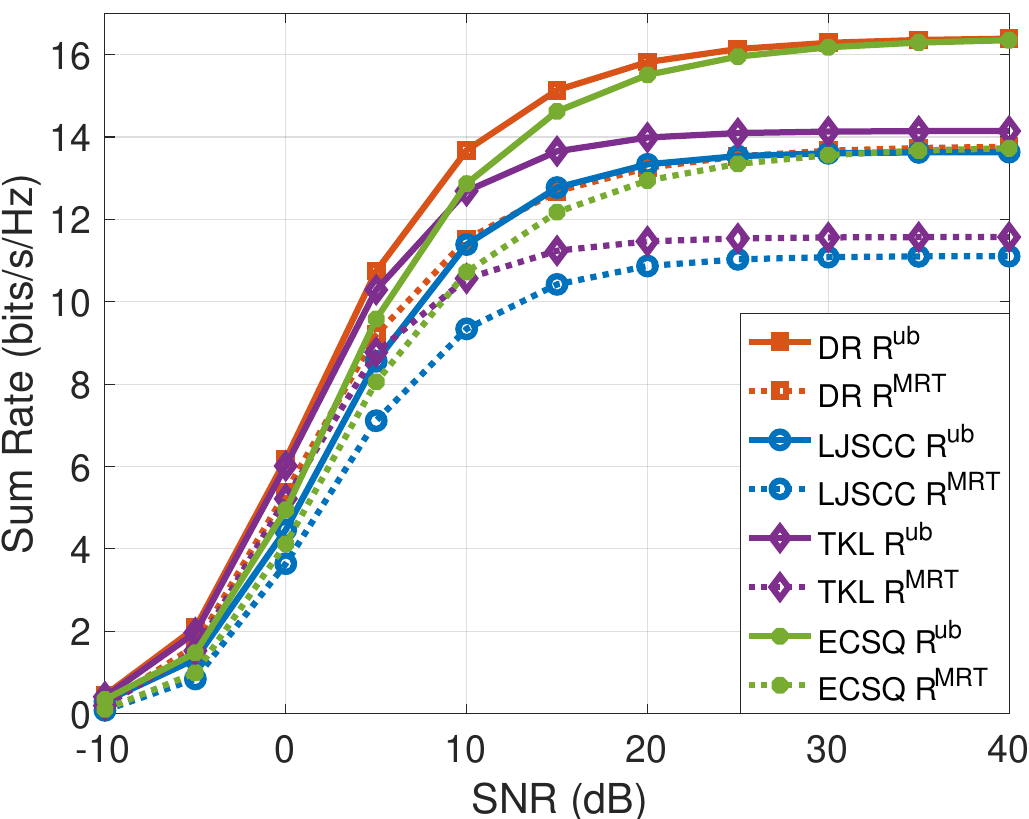}
        \caption{$L = 60$ and $\beta_{\rm fb} = 30$.}
        \label{fig:L_60_beta_fb_30_MRT}
        \end{subfigure}
        ~
        \begin{subfigure}[b]{0.4\textwidth}
        \includegraphics[width=\columnwidth]{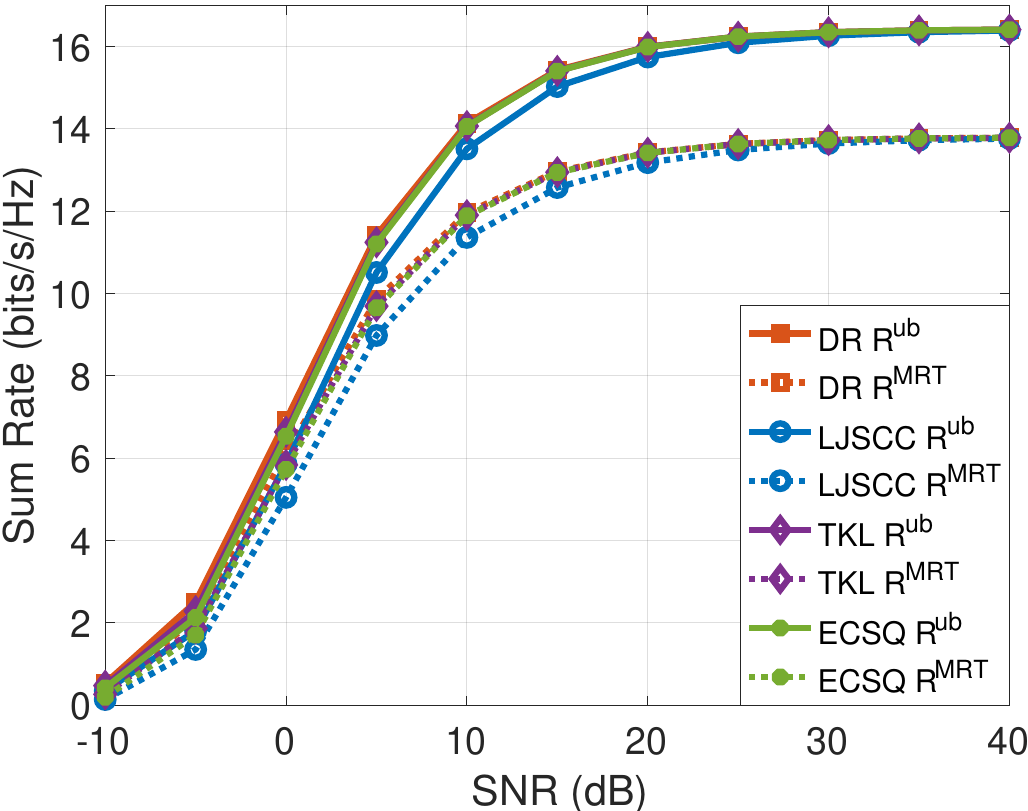}
        \caption{$L = 60$ and $\beta_{\rm fb} = 64$.}
        \label{fig:L_60_beta_fb_64_MRT}
        \end{subfigure}
\caption{DL sum-rate vs. DL SNR under MRT precoding.}
     \label{fig:DL_MRT_snr}
     \vspace{-6mm}
\end{figure*}

\begin{figure*}[ht!]
\centering
        \begin{subfigure}[b]{0.4\textwidth}
        \includegraphics[width=\columnwidth]{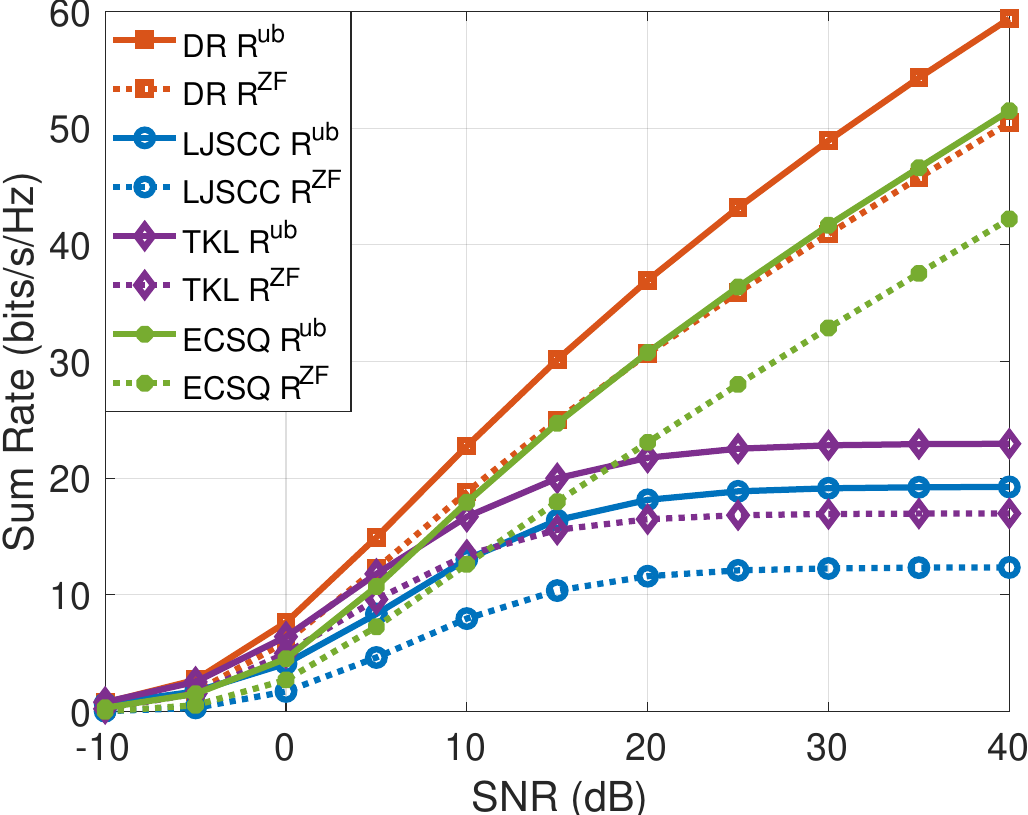}
        \caption{$L = 6$ and $\beta_{\rm fb} = 3$.}
        \label{fig:L_6_beta_fb_3_ZF}
        \end{subfigure}
        ~
        \begin{subfigure}[b]{0.4\textwidth}
        \includegraphics[width=\columnwidth]{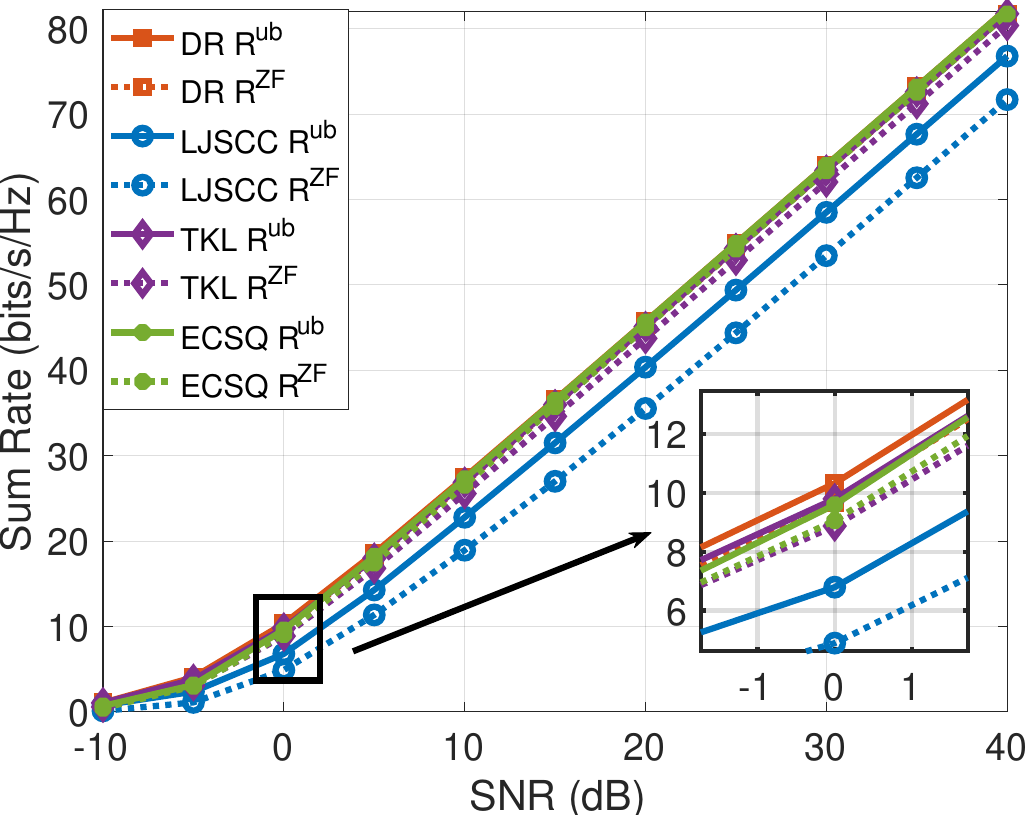}
        \caption{$L = 6$ and $\beta_{\rm fb} = 10$.}
        \label{fig:L_6_beta_fb_10_ZF}
        \end{subfigure}
        ~
        \begin{subfigure}[b]{0.4\textwidth}
        \includegraphics[width=1\columnwidth]{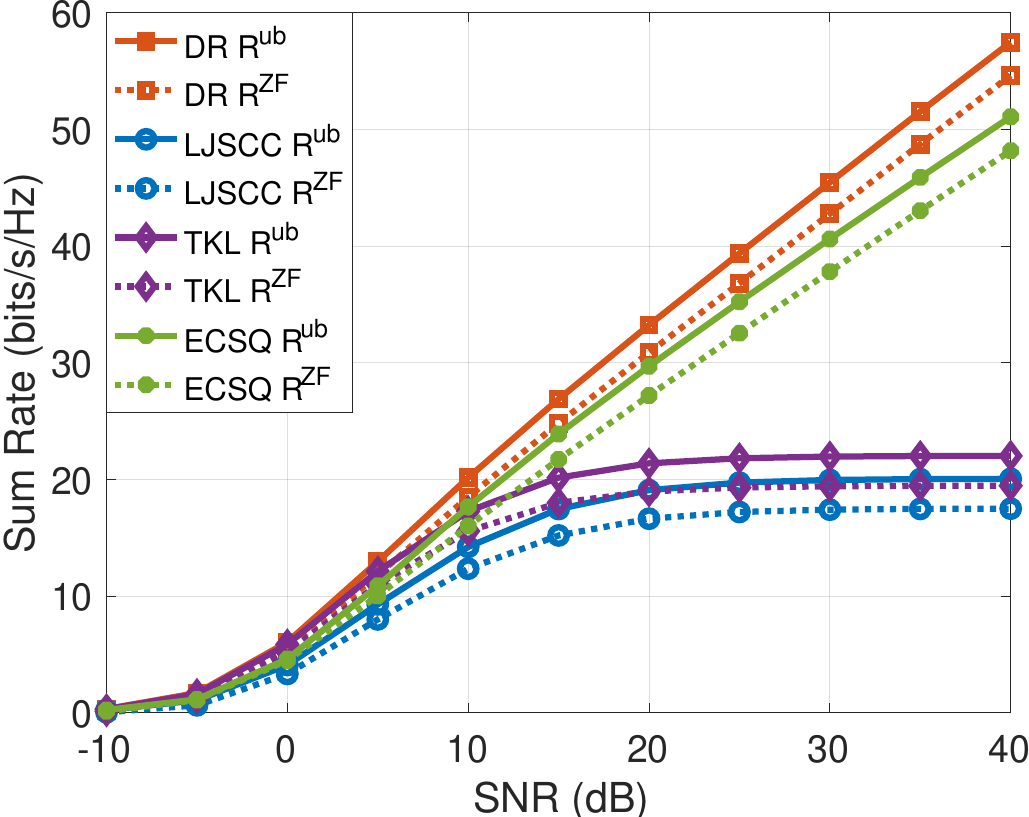}
        \caption{$L = 60$ and $\beta_{\rm fb} = 30$.}
        \label{fig:L_60_beta_fb_30_ZF}
        \end{subfigure}
        ~
        \begin{subfigure}[b]{0.4\textwidth}
        \includegraphics[width=\columnwidth]{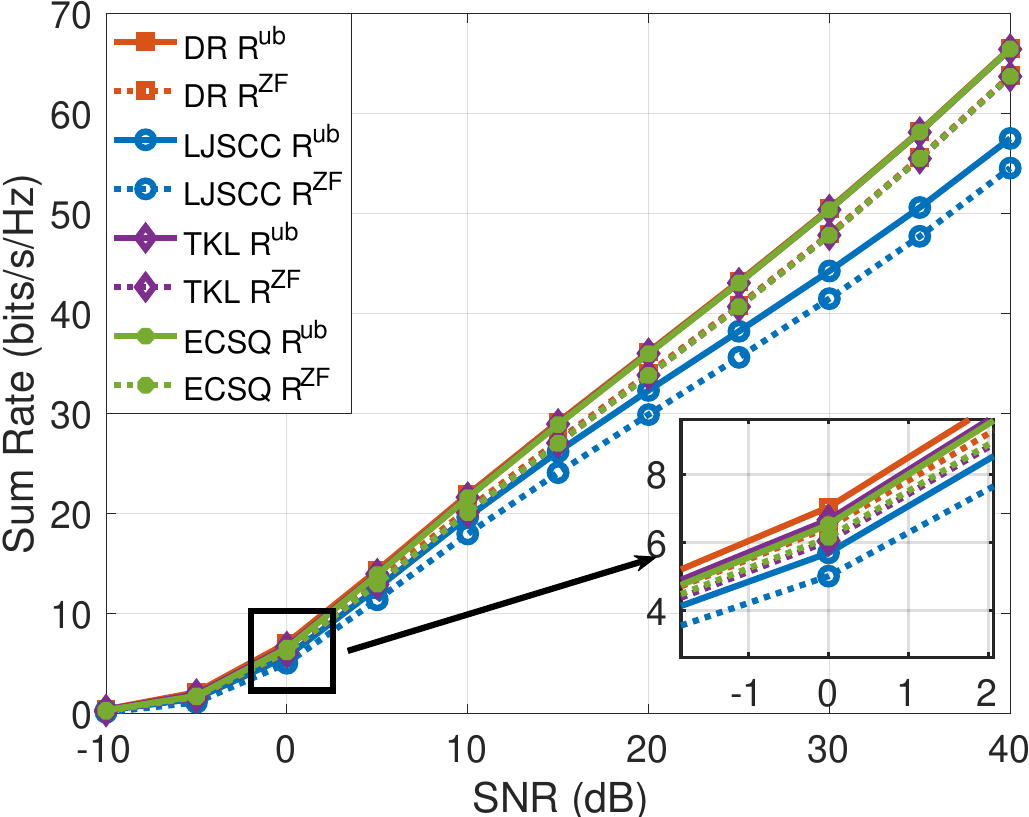}
        \caption{$L = 60$ and $\beta_{\rm fb} = 64$.}
        \label{fig:L_60_beta_fb_64_ZF}
        \end{subfigure}
 \caption{DL sum-rate vs. DL SNR under ZF precoding.}
     \label{fig:DL_ZF_snr}
\end{figure*}

\subsection{DL Sum Rate vs. Feedback Dimension $\beta_{\rm fb}$}
Next, we evaluate the effect of the feedback dimension $\beta_{\rm fb}$ under a fixed DL SNR. Specifically, 
the results of MRT precoding and ZF precoding are shown in Fig. \ref{fig:DL_ZF_beta_fb}, where we plot how the two DL sum-rates under MRT (ZF) precoding ($R^{\rm ub}$ and $R^{\rm MRT}$ ($R^{\rm ZF}$)) vary as $\beta_{\rm fb}$ changes for the sparse channels with $L=6$ and the rich-scattering channels with $L=60$ when ${\rm SNR}= \{5, 10\}$ dB. Since $R^{\rm ub}$ and $R^{\rm UatF}$ share the same trend, we do not present $R^{\rm ub}$ for a clear view.
From Fig. \ref{fig:DL_ZF_beta_fb} we observe that in both sparse and rich-scattering channels, it is evident that the TKL scheme outperforms the other practical schemes as $\beta_{\rm fb}$ changes, especially under the compressed feedback dimensions. The gap between the DR scheme and the TKL schemes becomes smaller as $\beta_{\rm fb}$ increases. Also, as $\beta_{\rm fb}$ increases and $\beta_{\rm fb} \geq L$, the ECSQ scheme gradually approaches the TKL scheme.

\begin{figure*}[ht!]
\centering
        \begin{subfigure}[b]{0.23\textwidth}
        \includegraphics[width=1\columnwidth]{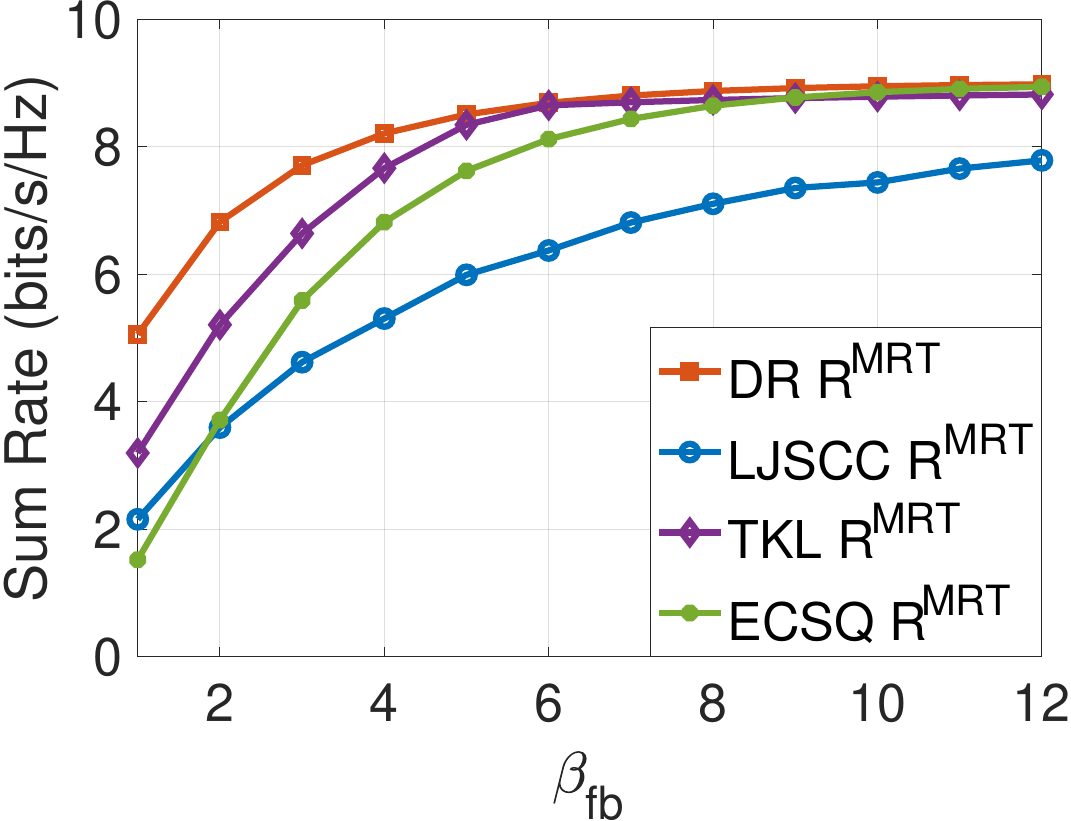}
        \caption{$L = 6$ and ${\rm SNR}= 5~$dB.}
        \label{fig:L_6_snr_5_MRT}
        \end{subfigure}
        ~
        \begin{subfigure}[b]{0.23\textwidth}
        \includegraphics[width=\columnwidth]{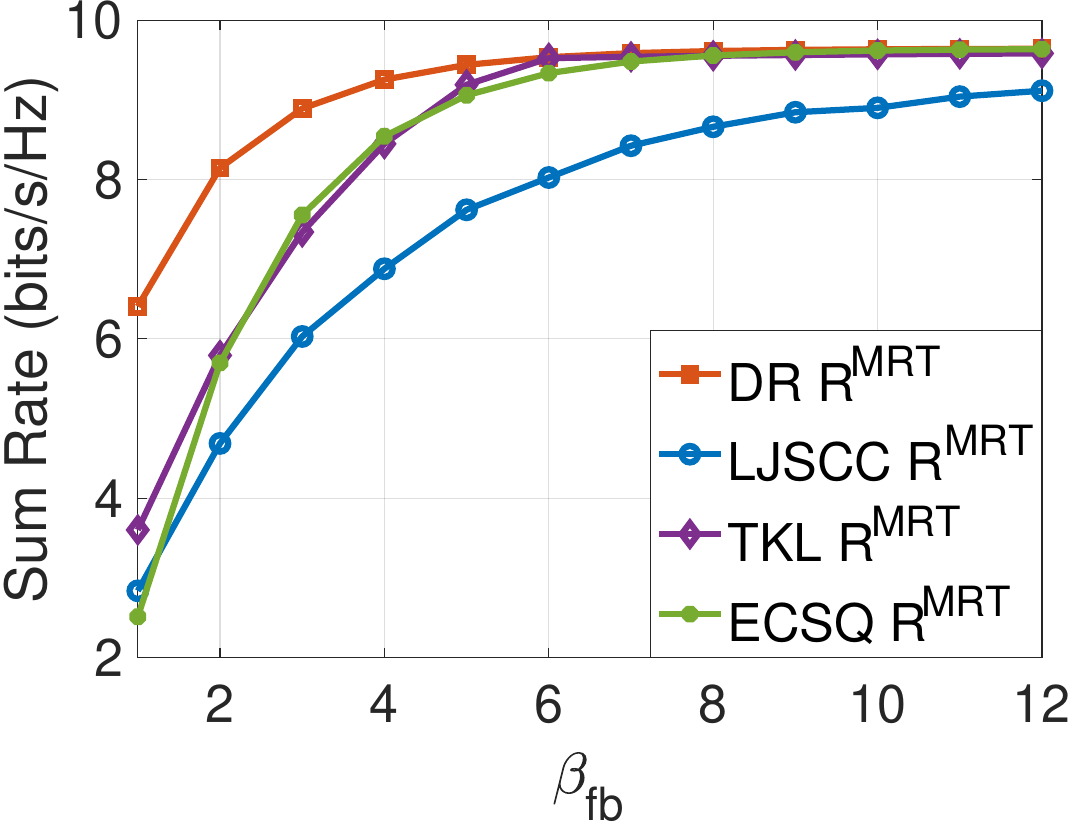}
        \caption{$L = 6$ and ${\rm SNR}= 10~$dB.}
        \label{fig:L_6_snr_10_MRT}
        \end{subfigure}
        ~
        \begin{subfigure}[b]{0.23\textwidth}
        \includegraphics[width=1\columnwidth]{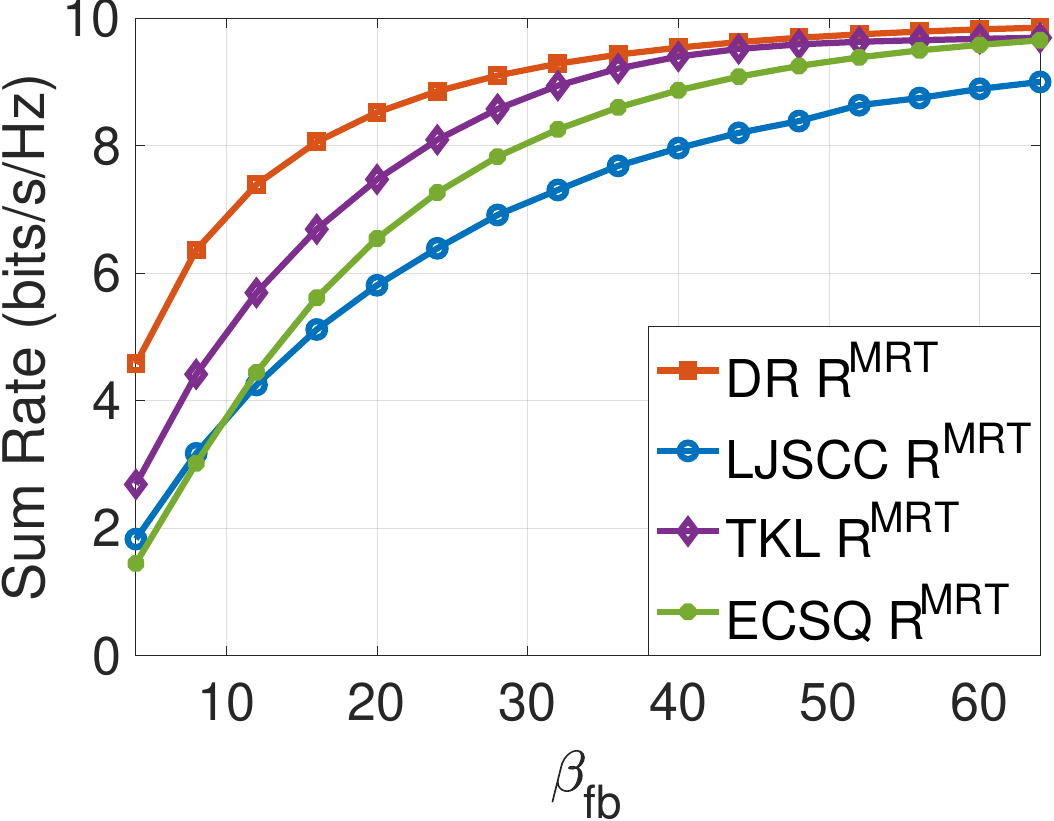}
        \caption{$L = 60$ and ${\rm SNR}= 5~$dB.}
        \label{fig:L_60_snr_5_MRT}
        \end{subfigure}
        ~
        \begin{subfigure}[b]{0.23\textwidth}
        \includegraphics[width=\columnwidth]{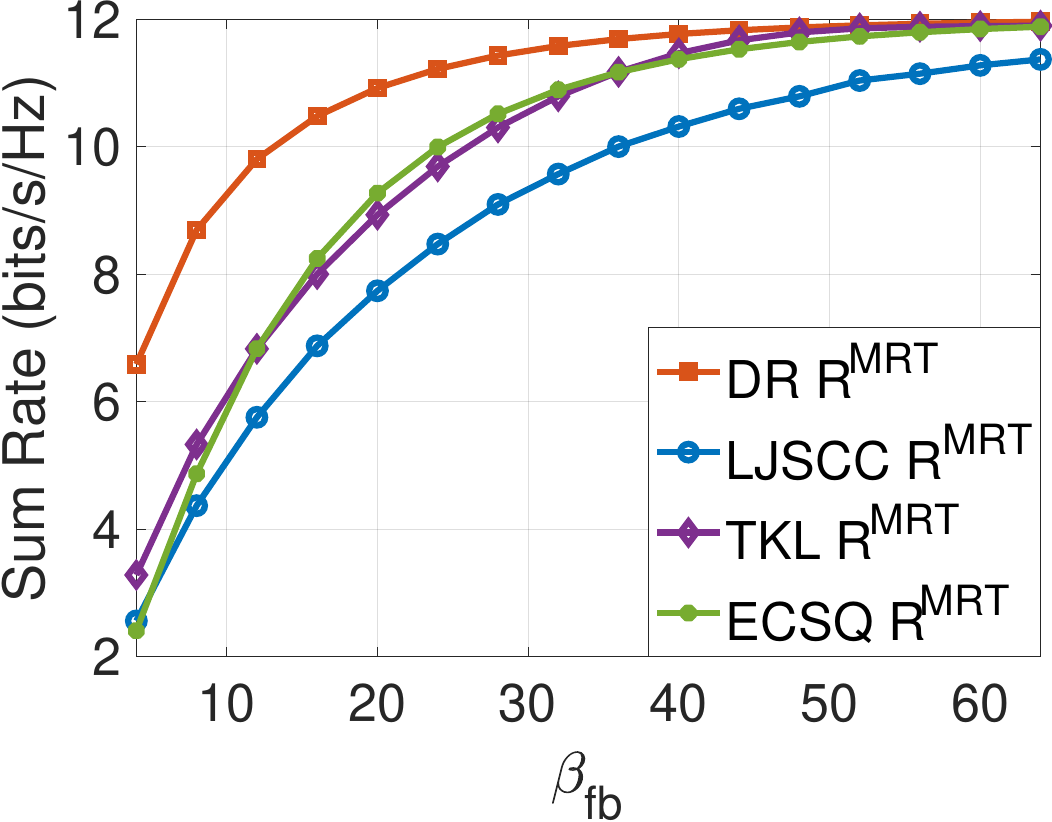}
        \caption{$L = 60$ and ${\rm SNR}= 10~$dB.}
        \label{fig:L_60_snr_10_MRT}
        \end{subfigure}
        ~
        \begin{subfigure}[b]{0.23\textwidth}
        \includegraphics[width=1\columnwidth]{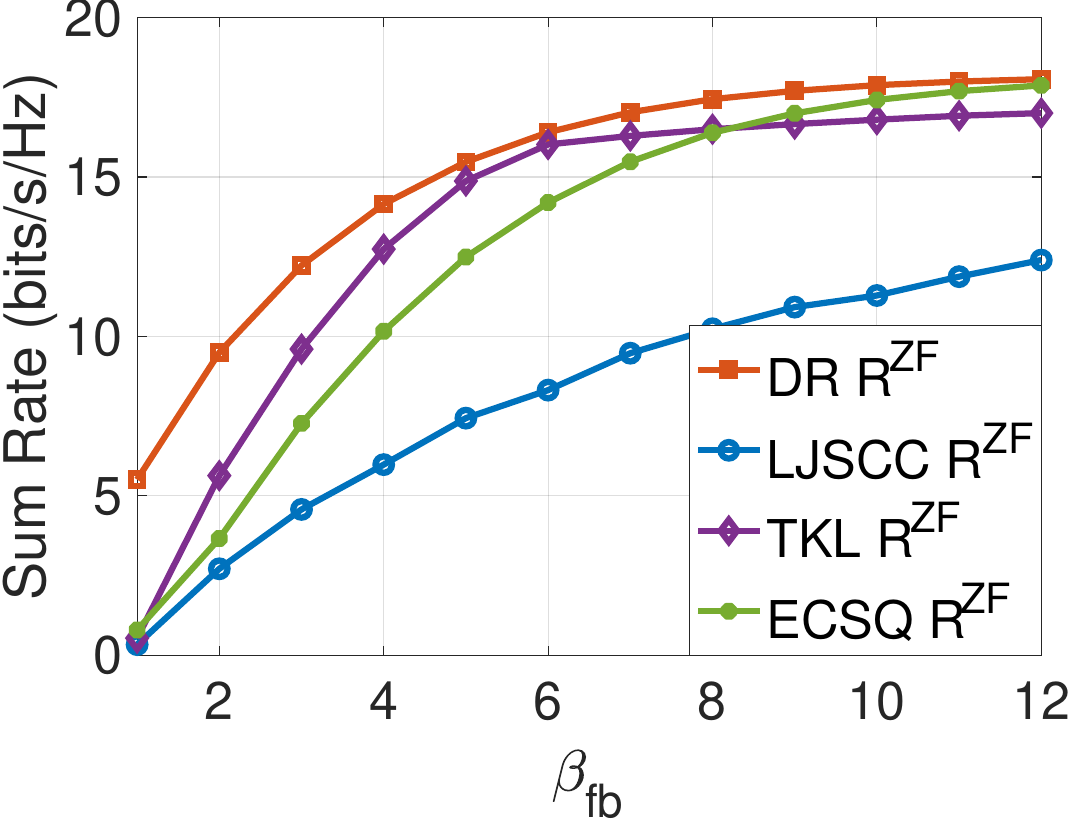}
        \caption{$L = 6$ and ${\rm SNR}= 5~$dB.}
        \label{fig:L_6_snr_5_ZF}
        \end{subfigure}
        ~
        \begin{subfigure}[b]{0.23\textwidth}
        \includegraphics[width=\columnwidth]{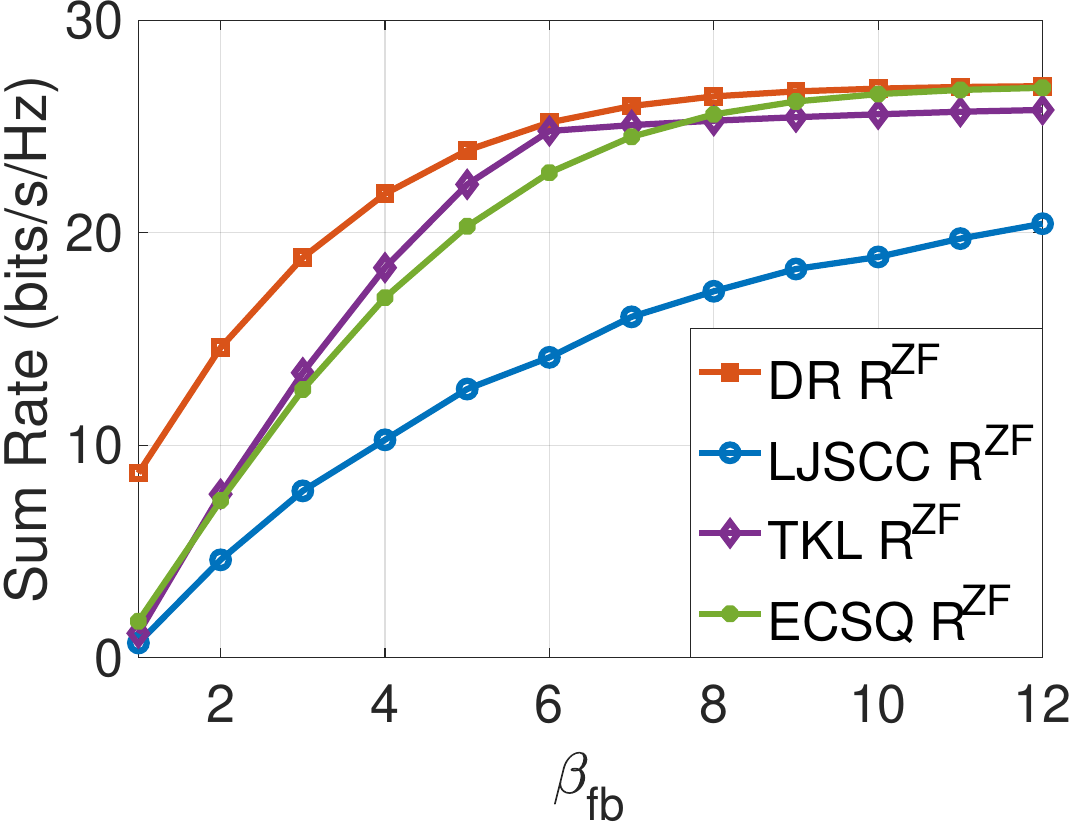}
        \caption{$L = 6$ and ${\rm SNR}= 10~$dB.}
        \label{fig:L_6_snr_10_ZF}
        \end{subfigure}
        ~
        \begin{subfigure}[b]{0.23\textwidth}
        \includegraphics[width=1\columnwidth]{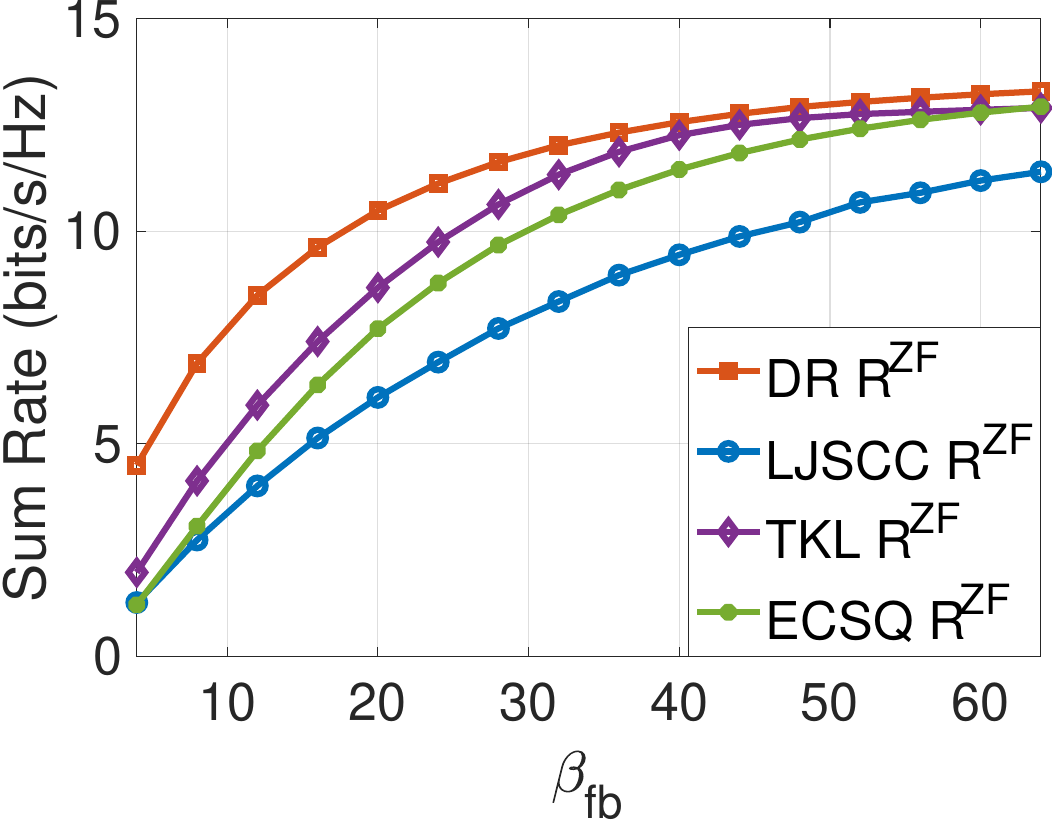}
        \caption{$L = 60$ and ${\rm SNR}= 5~$dB.}
        \label{fig:L_60_snr_5_ZF}
        \end{subfigure}
        ~
        \begin{subfigure}[b]{0.23\textwidth}
        \includegraphics[width=\columnwidth]{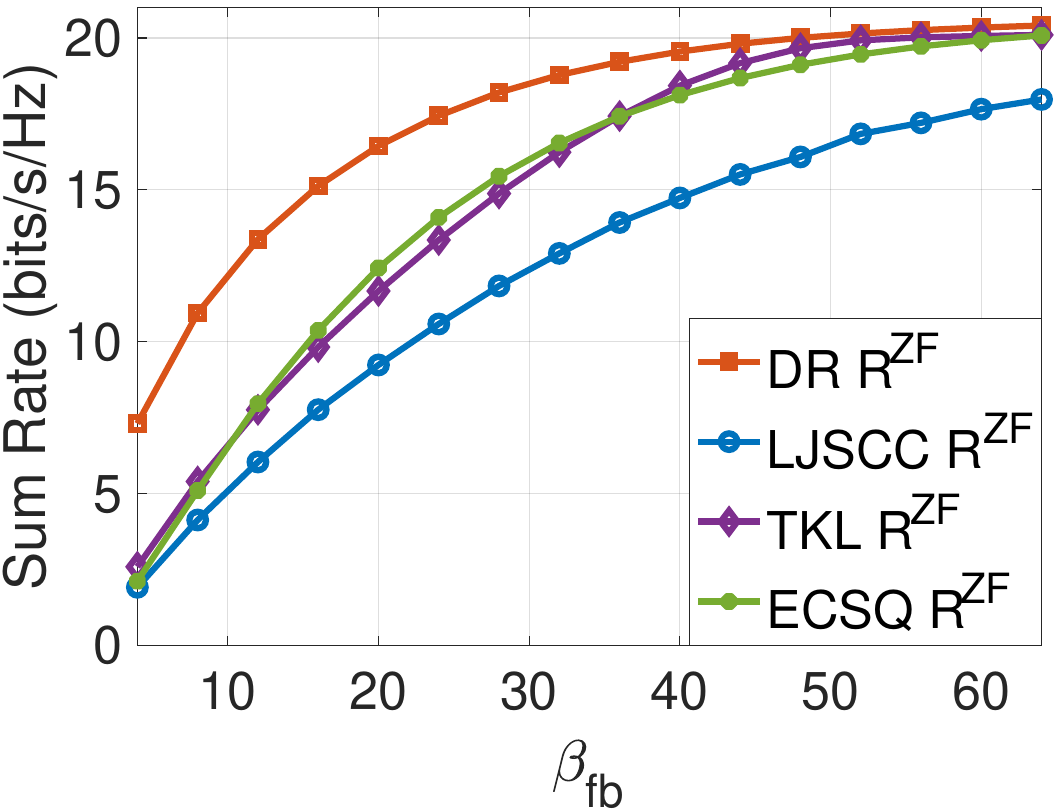}
        \caption{$L = 60$ and ${\rm SNR}= 10~$dB.}
        \label{fig:L_60_snr_10_ZF}
        \end{subfigure}
        \caption{DL sum-rate vs. feedback dimension $\beta_{\rm fb}$ under MRT (a-d) and ZF (e-h) precoding.}
     \label{fig:DL_ZF_beta_fb}
     \vspace{-6mm}
\end{figure*}

\subsection{Effects of Number of Users $K$}
\begin{figure*}[ht!]
\centering
        \begin{subfigure}[b]{0.4\textwidth}
        \includegraphics[width=1\columnwidth]{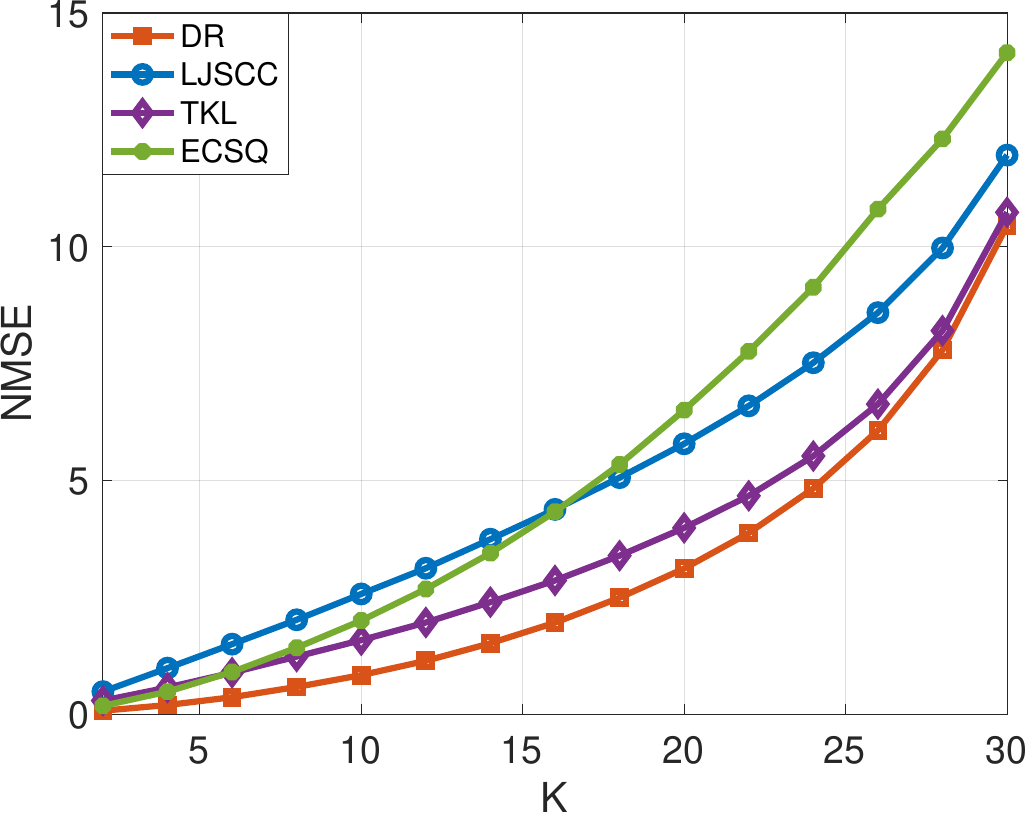}
        \caption{{$L = 6$, $\beta_{\rm fb} = 3$.}}
        \label{fig:L_6_beta_fb_3_user_nmse}
        \end{subfigure}
        ~
        \begin{subfigure}[b]{0.4\textwidth}
        \includegraphics[width=\columnwidth]{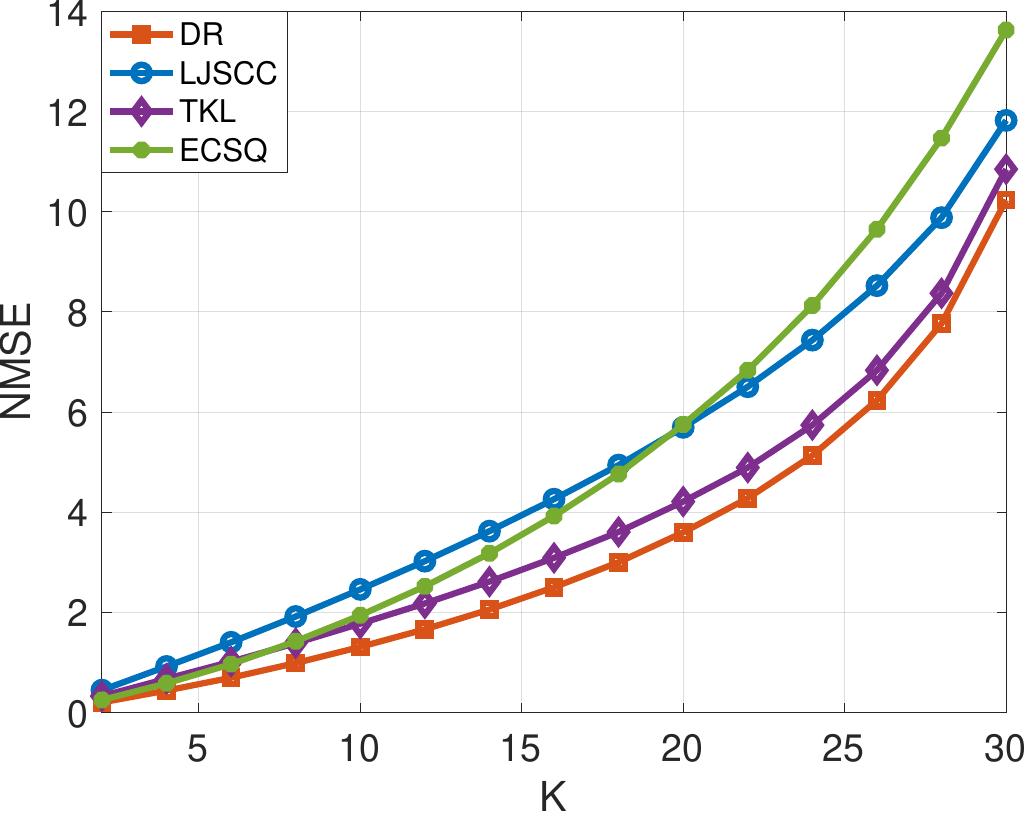}
        \caption{{$L = 60$, $\beta_{\rm fb} = 30$.}}
        \label{fig:L_60_beta_fb_30_user_nmse}
        \end{subfigure}
        ~
        \begin{subfigure}[b]{0.4\textwidth}
        \includegraphics[width=1\columnwidth]{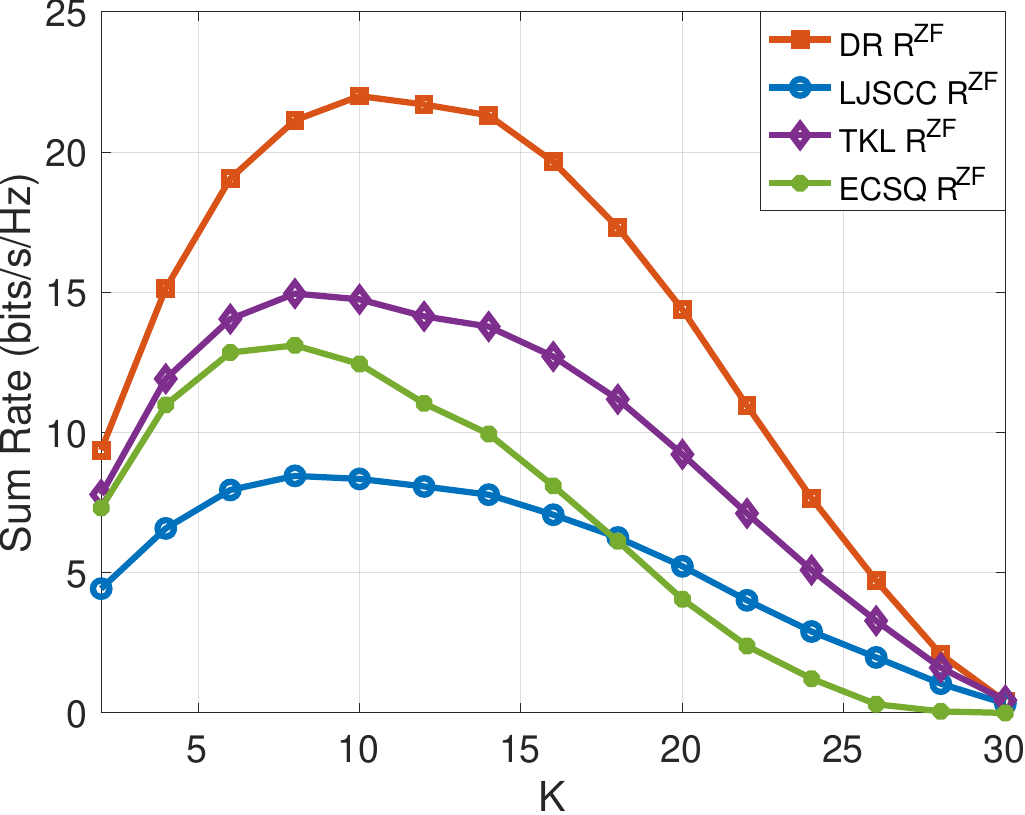}
        \caption{{$L = 6$, $\beta_{\rm fb} = 3$.}}
        \label{fig:L_6_beta_fb_3_user}
        \end{subfigure}
        ~
        \begin{subfigure}[b]{0.4\textwidth}
        \includegraphics[width=\columnwidth]{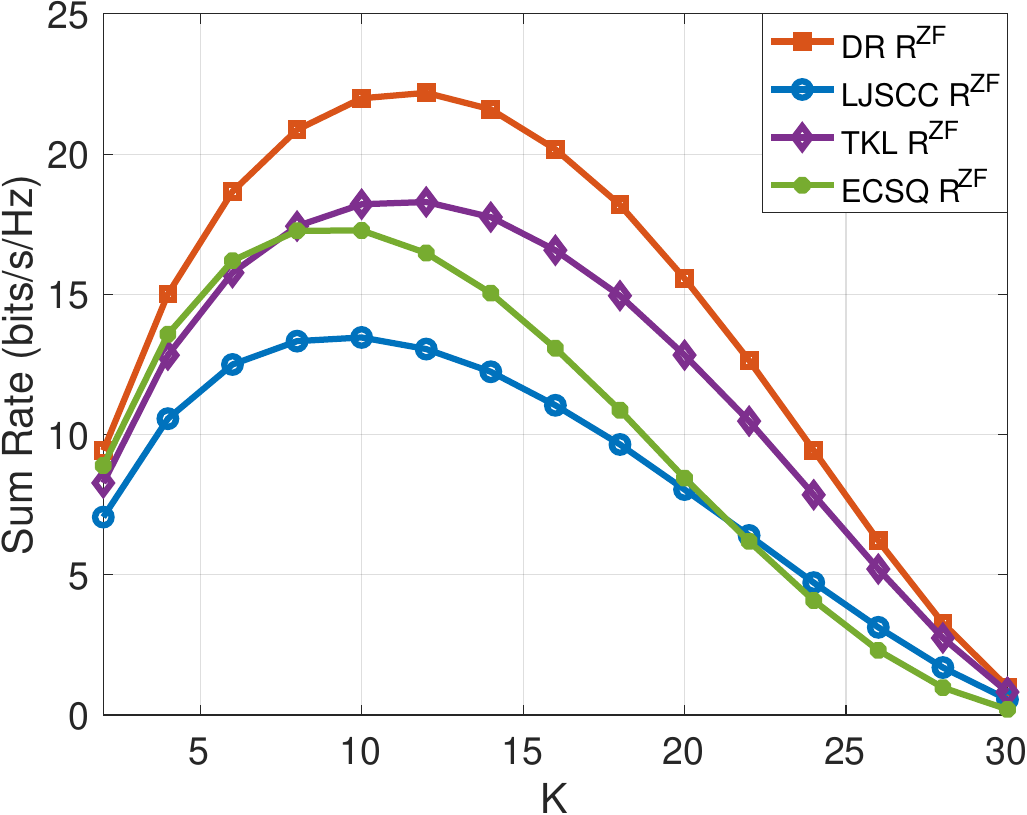}
        \caption{{$L = 60$, $\beta_{\rm fb} = 30$.}}
        \label{fig:L_60_beta_fb_30_user}
        \end{subfigure}
\caption{{Effect of the number of users $K$ under ZF detector and $\text{SNR} = 10$ dB on NMSE (a-b) and DL sum-rate (c-d).}}
     \label{fig:DL_user_number}
     \vspace{-6mm}
\end{figure*}


{
Next, we investigate the effect of the number of users \(K\) on the NMSE and DL sum rate under the interesting case with a compressed feedback dimension and a practical DL SNR of $10$ dB. We assume the ZF detector at the BS for UL MAC channel, resulting in a multiuser efficiency \(\kappa = 1 - K/M\) \cite{verdu1999spectral}. The results are depicted in Fig.~\ref{fig:DL_user_number} for NMSE in Fig.~\ref{fig:L_6_beta_fb_3_user_nmse}, \ref{fig:L_60_beta_fb_30_user_nmse} and for DL sum rate under ZF precoding in Fig.~\ref{fig:L_6_beta_fb_3_user}, \ref{fig:L_60_beta_fb_30_user}.
From Fig.~\ref{fig:L_6_beta_fb_3_user_nmse} and \ref{fig:L_60_beta_fb_30_user_nmse} we observe that with a larger $K$ the channel NMSE also becomes monotonically larger. This is due to higher interference, i.e., lower multiuser efficiency $\kappa$, resulting in a lower UL channel capacity for each user (see \eqref{eq:C_ul}) and thus the higher channel estimation error. 

Although more users cause worse channel NMSE, it is not necessary to lead a lower DL sum rate, since the summation would have a contribution from more users, see \eqref{eq:sum_rate}. From Fig.~\ref{fig:L_6_beta_fb_3_user} and \ref{fig:L_60_beta_fb_30_user} we see concave curves (rather than the monotonic curves in Fig.~\ref{fig:L_6_beta_fb_3_user_nmse} and \ref{fig:L_60_beta_fb_30_user_nmse} for channel NMSE) of DL sum rate over the number of users $K$, 
where with a small number of users, e.g., $K$ is from $1$ to $10$, the DL sum rate rises with increasing $K$, while under $K>10$ the sum rate dramatically drops and eventually converges to almost zero at $K=30$. More specifically, when \(K\) is small, inter-user interference is limited and \(\kappa\) is close to one. As a result, having more users only slightly reduces the channel estimation accuracy for each user, resulting in still satisfied channel estimates, and a growing DL sum rate. However, serving a large number of users leads to dominant inter-user interference in the UL MAC channel, resulting in severe estimation errors and very low individual communication rates so that the sum rate decreases. In the extreme case with \(K = M\) (i.e., the number of users equals the number of antennas), we have \(\kappa = 0\) under ZF detector, i.e., no UL feedback is available, and the sum rate drops to zero.
Additionally, as can be seen in all figures in Fig.~\ref{fig:DL_user_number}, the TKL scheme consistently outperforms the ECSQ and LJSCC schemes, approaching the performance of the DR scheme over a wide range of number of users \(K\). 

{\subsection{Comparison with the Deep Learning-based Method}
Finally, we compare the four presented feedback schemes (the TKL, DR, LJSCC and ECSQ schemes) with the state-of-the-art (SOTA) deep learning-based method, TransNet \cite{cui2022transnet}, under the assumption that each user has direct access to DL CSI and an infinite SNR or infinite bits for UL feedback. For a fair comparison, we generate synthetic training datasets as described in Section \ref{sec:simulation}, consisting of 100,000 channel geometries, each with one channel realization. The TransNet model is retrained using the same hyperparameters as in \cite{cui2022transnet}, including batch size, number of epochs, and learning rate. The evaluation is performed with the following settings: (1) \(L = 6, \beta_{\rm fb} = 3\); (2) \(L = 6, \beta_{\rm fb} = 10\); (3) \(L = 60, \beta_{\rm fb} = 30\); and (4) \(L = 60, \beta_{\rm fb} = 64\). 
With infinite SNR or unlimited feedback bits in the UL channel, the JSCC scheme can transmit the complex-valued codeword of dimension \(\beta_{\rm fb}\) back to the BS without noise, while the SSCC scheme can transmit the user's observations without quantization error.
Consequently, the channel estimation error for the DR and ECSQ schemes is zero across all simulated scenarios. Moreover, when the number of multipaths does not exceed the feedback dimension (\(L \leq \beta_{\rm fb}\)), the TKL and LJSCC schemes also achieve zero channel estimation error. However, to prevent numerical instability in simulations, we assume a high but finite UL SNR of 100 dB for all four schemes.
As shown in Table \ref{tab:nmse_comparison}, TransNet shows much worse performance than any of the covariance-based methods in our paper.  
The key limitation of TransNet is its inability to adapt to specific channel geometries, as it is trained on an ensemble of randomly generated geometries. Moreover, retraining TransNet for each new geometry would be entirely impractical. In contrast, methods that incorporate covariance knowledge achieve performance close to that of the DR scheme.
In addition, learning the channel covariance matrix only requires a small number of channel snapshots \cite{yang2023structured}, highlighting the {dramatic superiority} of our proposed TKL scheme with respect to DNN-based schemes. 
}


\begin{table}[h]
    \centering
    \begin{tabular} {l | r r r r r }
    \Xhline{2\arrayrulewidth}
    Setting  & TransNet & TKL & DR & LJSCC & ECSQ \\
    \hline
    $L=6, \betafb=3$  & $-1.1$ & $\mathbf{-5.6}$ & $-58.1$ & $-3.9$ & $-53.6$\\
    $L=6, \betafb=10$  & $-2.4$ & $\mathbf{-116.8}$ & $-156.4$ & $-112.1$ & $-144.3$ \\
    $L=60, \betafb=30$  & $-2.1$   & $\mathbf{-7.0}$& $-59.2$ & $-4.6$& $-54.7$ \\
    $L=60, \betafb=64$  & $-4.2$  & $\mathbf{-115.1}$ & $-159.4$ & $-102.2$ & $-119.4$ \\
   \Xhline{2\arrayrulewidth}
    \end{tabular}
    \caption{{Comparison of NMSE (dB) with the TransNet \cite{cui2022transnet}.}}
    \label{tab:nmse_comparison}
    \vspace{-5mm}
\end{table}

}

\section{Conclusion}
In this paper, we have studied the DL channel estimation problem in the FDD multiuser massive MIMO system. We first provide a lower bound on the CSIT estimation error of any studied JSCC or SSCC feedback schemes based on the remote distortion rate theory. Focusing on the practically demanded case with compressed feedback and minimum delay, we proposed a JSCC feedback scheme (TKL feedback) that exploits the second-order statistics of the channel and analyzed its channel estimation MSE decay behavior in high SNR. For easier evaluation of DL spectral efficiency based on the different feedback schemes, we derived closed-form rate expression under MRT precoding. Our extensive numerical results show that the proposed JSCC scheme outperforms a simple JSCC scheme, and a standard SSCC scheme in terms of both the MSE of the channel estimation and the resulting spectral efficiency under the range of practical SNR. In addition, it outperforms a deep learning-based scheme in an idealized scenario. 
We notice that, in general, the JSCC paradigm breaks the paradigm of layer separation and thus has not found wide applications beyond the theoretical research for general source transmission problems (e.g., video, audio, etc). We believe that CSI feedback is an ideal application for JSCC and advocate that future standards will adopt such type of feedback schemes.
\appendices

 $\vspace{-7mm}$
\section{The solution of \eqref{eq:opt_ppp}}
\label{sec:lemma_TKL}
To solve \eqref{eq:opt_ppp}, we first rewrite the objective more simply. Noticing that the objective can be written as $\trace(\Cm_{\hv_k}) - \trace(\Cm_{\widehat{\hv}_k^{\rm tkl}})$ and the first term is constant to the variable $\alphav_k$, we thus focus on the second term $-\trace(\Cm_{\widehat{\hv}_k^{\rm tkl}}) = -\trace(\Cm_{\hv_k \widehat{\yv}^{\rm fb}_k} \Cm_{\widehat{\yv}^{\rm fb}_k}^{-1} \Cm_{\hv_k \widehat{\yv}^{\rm fb}_k}^{\herm})$. Further noticing that $\Cm_{\widehat{\yv}^{\rm fb}_k} = \diag({\alphav_k})\Sm_k(\widebar{\Lambdam}_k + \Id_{\beta_{\rm tr}})\Sm_k^\herm  +\Id_{\beta_{\rm fb}}
    = \diag\left({\alpha_{k, 1}(\widebar{\lambda}_{k, 1} + 1) + 1}, ..., {\alpha_{k, \beta_{\rm fb}}(\widebar{\lambda}_{k, \beta_{\rm fb}} + 1) + 1}\right)$
and defining $\boldsymbol{\nu}_k = [\nu_{k,1},\dots,\nu_{k,\beta_{\rm fb}}]$ with $\nu_{k,j} \triangleq \frac{\alpha_{k, j} }{\alpha_{k, j}(\widebar{\lambda}_{k, j} + 1) + 1}$ and  $\Deltam_k \triangleq \Sm_k^\herm \diag\left(\boldsymbol{\nu}\right)\Sm_k =
    \diag(\nu_{k,1}, \dots, \nu_{k, \beta_{\rm fb}}, \underbrace{0,\dots, 0}_{\betatr - \beta_{\rm fb}})$,
we have 
\begin{align}
    &\trace(\Cm_{\hv_k \widehat{\yv}^{\rm fb}_k} \Cm_{\widehat{\yv}^{\rm fb}_k}^{-1} \Cm_{\hv_k \widehat{\yv}^{\rm fb}_k}^{\herm})\\ 
        =\;&\trace(\Cm_{\hv_k}\Xm^\herm \Um_k \Deltam_k \Um_k^\herm \Xm \Cm_{\hv_k}) \\
        =\;&\trace(\Cm_{\hv_k}^{\frac{1}{2}}\Cm_{\hv_k}^{\frac{1}{2}}\Xm^\herm \Um_k \Deltam_k \Um_k^\herm \Xm \Cm_{\hv_k}^{\frac{1}{2}}\Cm_{\hv_k}^{\frac{1}{2}}) \\
        =\;& \trace\left(\Cm_{\hv_k}^{\frac{1}{2}} \Vm_k \Sigmam_k^\herm \Deltam_k  \Sigmam_k  \Vm_k^\herm \Cm_{\hv_k}^{\frac{1}{2}}\right) \\
        =\;& \trace\left(  \Vm_k^\herm \Cm_{\hv_k} \Vm_k \Sigmam_k^\herm \Deltam_k  \Sigmam_k \right) \\
        \stackrel{(a)}{=}& \sum_{i=1}^{\beta_{\rm fb}} \frac{\rho_{k,i} \alpha_{k, i} \widebar{\lambda}_{k,i}}{\alpha_{k, i}(\widebar{\lambda}_{k, i} + 1) + 1} \triangleq f(\alphav_k),
\end{align} 
where in (a) we denote $\rho_{k,i} \triangleq \vv_{k, i}^\herm \Cm_{\hv_k} \vv_{k, i}$ with $\vv_{k, i}$ as the $i$-th column of $\Vm_k$ and noting that $[\Vm_k^\herm \Cm_{\hv_k} \Vm_k]_{i,i} = \vv_{k, i}^\herm \Cm_{\hv_k} \vv_{k, i}$. Then, the optimization problem can be equivalently rewritten as
\begin{subequations}\label{eq:opt_new}
\begin{align}
    \underset{\alphav_k\geq0}{\text{minimize}} \;\;\;\;\quad \quad &  -f(\alphav_k) \\
    \text{subject to} \quad \qquad&  \eqref{eq:power_constraint}.
\end{align}
\end{subequations}
The constraints are obviously convex. We show the convexity of the objective by showing that its Hessian matrix is positive semi-definite. It is clear that $\frac{\partial^2 -f(\alphav_k)}{\partial \alpha_{k,i}\alpha_{k,j}} = 0, \forall i \neq j$, and  
\begin{equation}
    \frac{\partial^2 -f(\alphav_k)}{\partial \alpha_{k,i}^2} = \frac{2\rho_{k,i}\bar{\lambda}_{k,i}(\bar{\lambda}_{k,i}+1)(\alpha_{k,i}(\bar{\lambda}_{k,i}+1)+1)}{(\alpha_{k,i}(\bar{\lambda}_{k,i} + 1) + 1)^4}\geq 0.
\end{equation}
Therefore, the Hessian matrix is a diagonal matrix with all non-negative diagonal elements and thus a positive semi-definite matrix. In total, the problem in \eqref{eq:opt_new} is a convex problem. Then, we solve the problem using the Lagrangian method.
Given Lagrangian multiplier $\gamma$, the corresponding Lagrangian function is given as 
\begin{align}\label{eq:lag_func}
    L = -\sum_{i=1}^{\betafb}\frac{\rho_{k,i} \alpha_{k, i}\widebar{\lambda}_{k, i} }{\alpha_{k,i}(\widebar{\lambda}_{k, i} + 1) + 1} + \gamma\left( \sum_{i=1}^{\beta_{\rm fb}} \alpha_{k, i}(\widebar{\lambda}_{k, i} + 1) - P_{\rm ul}\right).
\end{align}

Letting the derivative of $L$ regarding $\alpha_{k,i}$ be zero  
\begin{align}
    \frac{\partial L}{\partial \alpha_{k,i}} = \frac{ - \rho_{k,i}\widebar{\lambda}_{k, i}}{(\alpha_{k,i}(\widebar{\lambda}_{k, i}+ 1)+1)^2} + \gamma(\widebar{\lambda}_{k, i} + 1) \overset{!}{=} 0,
\end{align}
we have 
\begin{align}\label{eq:alpha_star}
    \alpha^\star_{k, i} = \frac{\left[\sqrt{ \frac{\rho_{k,i}\widebar{\lambda}_{k, i}}{\gamma^\star(\widebar{\lambda}_{k, i} + 1)}} -1 \right]_+ }{1 + \widebar{\lambda}_{k, i}}, 
\end{align}
where $\gamma^\star$ is chosen such that 
\begin{align}\label{eq:gamma_results}
    \sum_{i=1}^{\beta_{\rm fb}} \left[ \sqrt{ \frac{\rho_{k,i} \widebar{\lambda}_{k, i}}{\gamma^\star(\widebar{\lambda}_{k, i} + 1)}} -1\right]_+ =P_{\rm ul}.
\end{align}
We obtain the optimal $\gamma^\star$ by solving \eqref{eq:gamma_results} using a binary search. Finally, given the optimal $\gamma^\star$, the optimal power allocation $\alphav_k^\star$ is obtained by \eqref{eq:alpha_star}.

\section{The proof of {{Theorem}} \ref{theorem:QSE}}
\label{sec:QSE_proof}
We first denote the eigen-decomposition of the channel covariance as $\Cm_{\hv_k} = \Um_{\hv_k} \Lambdam_{\hv_k} \Um_{\hv_k}^\herm $, where $\Um_{\hv_k} \in \bC^{MN\times r_k}$ is the tall matrix with eigenvectors and $\Lambdam_{\hv_k}\in \bR^{r_k\times r_k}$ is a diagonal matrix with $r_k$ non-zero eigenvalues $\lambda^{\hv_k}_1 \geq \lambda^{\hv_k}_2\geq\dots \geq \lambda^{\hv_k}_{r_k} $ in descending order on its diagonal. By further denoting $\Psim\triangleq\Um_k\Sm_k^\herm {\rm diag}(\sqrt{\alphav_k})$, we find that the resulting terms of $\Cm_{\hv_k \widehat{\yv}^{\rm fb}_k}$ and $ \Cm_{\widehat{\yv}^{\rm fb}_k}$ have the same formulations as the corresponding terms of the ``analog feedback'' scheme in \cite{khalilsarai2023fdd}. Therefore, following the similar steps in \cite[Appendix B]{khalilsarai2023fdd}, the MSE of channel estimation can be derived as 
\begin{align}\label{eq:MSE_TKL}
    D_k^{\rm tkl}(R) &= \mathbb{E}[\|\hv_k - \widehat{\hv}_k^{\rm tkl}\|^2] \nonumber \\
    &=  \tr\left(\Lambdam_{\hv_k} \left(\Id - \Gm_k + \Gm_k(\Id + \Gm_k)^{-1} \Gm_k\right)\right),
\end{align}
where $\Gm_k$ is defined as 
\begin{align}
    \Gm_k &\triangleq \Lambdam_{\hv_k}^{\frac{1}{2}} \Um_{\hv_k}^\herm \Xm^\herm \Psim (\Psim^\herm \Psim + \Id)^{-1} \Psim^\herm \Xm \Um_{\hv_k}\Lambdam_{\hv_k}^{\frac{1}{2}}.
\end{align}   
Using the same trace inequality on \eqref{eq:MSE_TKL} as in \cite[(29) in Appendix A]{khalilsarai2023fdd}, one can show that 
\begin{align}\label{eq:trace_inequality}
   \lambda^{\hv_k}_{r_k} g({\mathsf{snr}}_{\rm dl}) \leq  D_k^{\rm tkl}(R) \leq \lambda^{\hv_k}_1 g({\mathsf{snr}}_{\rm dl}),
\end{align}
where $g({\mathsf{snr}}_{\rm dl}) \triangleq \tr\left( \Id - \Gm_k + \Gm_k(\Id+ \Gm_k)^{-1} \Gm_k\right)$. 
Then, we show the trend of $g(\cdot)$ in large $\snrdl$.
For a given pilot matrix $\Xm$, denote the eigenvalues of $ \Gm_k$ by $u_k^i, i=[r_k]$, $ g({\mathsf{snr}}_{\rm dl})$ is given by
\begin{align}
    g({\mathsf{snr}}_{\rm dl}) &= r_k - \sum_{i=1}^{r_k} u^i_k + \sum_{i=1}^{r_k} \frac{{u^i_k}^2}{u^i_k + 1} \\
    &=\sum_{i=1}^{r_k} \frac{1}{u^i_k + 1}. \label{eq:gu}
\end{align}
We now derive an approximation of $\Gm_k$ in large $\snrdl$ for further analysis, given as
\begin{align}
    \Gm_k &=  \Lambdam_{\hv_k}^{\frac{1}{2}} \Um_{\hv_k}^\herm \Xm^\herm \Um_k\Sm_k^\herm {\rm diag}(\sqrt{\alphav_k}) ({\rm diag}({\alphav_k})+ \Id)^{-1}  \nonumber \\
    & \quad \quad \quad \quad  {\rm diag}(\sqrt{\alphav_k}) \Sm_k \Um_k^\herm \Xm \Um_{\hv_k}\Lambdam_{\hv_k}^{\frac{1}{2}}   \\
    & \approx \Lambdam_{\hv_k}^{\frac{1}{2}} \Um_{\hv_k}^\herm \Xm^\herm \Um_k\Sm_k^\herm \Sm_k \Um_k^\herm  \Xm  \Um_{\hv_k}\Lambdam_{\hv_k}^{\frac{1}{2}}, \label{eq:G_value}
\end{align}
where the approximation in \eqref{eq:G_value} holds when ${\mathsf{snr}}_{\rm dl} \rightarrow \infty$, which results in $\alphav_k \rightarrow \infty$ based on \eqref{eq:gamma_results}, and $({\rm diag}({\alphav_k})+ \Id)^{-1} \approx {\rm diag}({\alphav_k})^{-1}$. 
Checking the items in \eqref{eq:G_value}, we find out: (a) $\Lambdam_{\hv_k}$ and $ \Um_{\hv_k}$ are independent of ${\mathsf{snr}}_{\rm dl}$; (b) the training matrix can be written as $\Xm = \sqrt{{\mathsf{snr}}_{\rm dl}} \Xm^o$, where $\Xm^o$ is randomly generated and independently from ${\mathsf{snr}}_{\rm dl}$; and (c) $\Sm_k^\herm\Sm_k$ is a diagonal matrix with first $\betafb$ diagonal elements as ones and all other elements as zeros. Therefore, the behavior of the non-zero eigenvalues of $\Gm_k$ in high $\snrdl$ is  
\begin{equation}\label{eq:uk}
    u_k^i = \Theta({\mathsf{snr}}_{\rm dl}), \quad \forall u_k^i \neq 0. 
\end{equation}
Using \eqref{eq:gu} and \eqref{eq:uk}, we deduce that if $\Gm_k$ is full rank, i.e., $u_k^i \neq 0, \forall i \in [r_k]$, we have $g(\snrdl)=\Theta(\snrdl^{-1})$ and then $D_k^{\rm tkl}(R) = \Theta(\snrdl^{-1})$ using \eqref{eq:trace_inequality}. Conversely, if $\Gm_k$ has at least one zero eigenvalue, i.e., $\exists u_k^i = 0$, we have $g(\snrdl) = \Theta(1)$ from \eqref{eq:gu} and then $D_k^{\rm tkl}(R) = \Theta(1)$ using \eqref{eq:trace_inequality}. In short:
\begin{equation}
    D_k^{\rm tkl}(R) = \begin{cases}
        \Theta(\snrdl^{-1}), & u_k^i \neq 0, \forall i \in [r_k],\\
        \Theta(1), & \text{otherwise}.
    \end{cases}
\end{equation}

Finally, we check the rank of $\Gm_k$. It is clear from \eqref{eq:G_value} that when $\betafb \geq r_k$, $\Gm_k$ is full rank with probability one, and thus $D_k^{\rm tkl}(R) = \Theta(\snrdl^{-1})$. Conversely, if $\betafb < r_k$, $\Gm_k$ has a rank at most equal to $\betafb$ for any realization of the training matrix $\Xm$, leading to  $u_k^i=0$ for some $i$ and thus $D_k^{\rm tkl}(R) = \Theta(1)$. This completes the proof.

\section{The proof of Lemma \ref{lemma: MRT}}
\label{sec:MRT_lemma}
Given $\vv_k = \sqrt{\eta} \widehat{\hv}_k, ~\forall k \in [K]$ in MRT precoding, in order to satisfy the power constraint in \eqref{eq: V_power}, it is evident that the power scaling factor $\eta$ can be obtained from \eqref{eq:eta_MRT}.
Next, we derive the three terms $\mathbb{E}[\hv_k^\herm \vv_k]$, ${\rm Var}(\hv_k^\herm \vv_k)$ and $\mathbb{E}[|\hv_k^\herm \vv_j|^2]$ for $j \neq k$ in \eqref{eq:R_hard} as follows. First, we have
\begin{align}
        \mathbb{E}[\hv_k^\herm \vv_k] &= \sqrt{\eta} \mathbb{E}[{\hv_k^\herm \widehat{\hv}_k}] \nonumber \\
        &= \sqrt{\eta} \mathbb{E}[{(\widehat{\hv}_k + \ev_k)^\herm\widehat{\hv}_k}] \label{eq:h_v_k_0} \\
        &= \sqrt{\eta}\mathbb{E}[\widehat{\hv}_k^\herm \widehat{\hv}_k]\nonumber \\
        &=\sqrt{\eta} \trace(\Cm_{\widehat{\hv}_k}), \label{eq:h_v}
\end{align}
where \eqref{eq:h_v_k_0} holds due to \eqref{eq:h_hat}. Then, we have
\begin{align}
        &{\rm Var}(\hv_k^\herm \vv_k)\nonumber \\
        &= \mathbb{E}[(\hv_k^\herm \vv_k -\mathbb{E}[\hv_k^\herm \vv_k])(\hv_k^\herm \vv_k -\mathbb{E}[\hv_k^\herm \vv_k])^\herm] \nonumber \\
        &= \mathbb{E}[{\hv_k^\herm \vv_k\vv_k^\herm \hv_k}] - \eta \trace^2(\Cm_{\widehat{\hv}_k} ) \nonumber \\
        &=\eta \left(\mathbb{E}[{(\widehat{\hv}_k + \ev_k)^\herm \widehat{\hv}_k \widehat{\hv}_k^\herm (\widehat{\hv}_k + \ev_k)}] - \trace^2(\Cm_{\widehat{\hv}_k} ) \right)\nonumber \\
        &= \eta \left(\mathbb{E}[{\widehat{\hv}_k^\herm \widehat{\hv}_k \widehat{\hv}_k^\herm \widehat{\hv}_k}] + \mathbb{E}[{\ev_k^\herm \widehat{\hv}_k \widehat{\hv}_k^\herm \ev_k}] -  \trace^2(\Cm_{\widehat{\hv}_k} )\right) \label{eq:var_hv_0} \\
        &= \eta\left( \trace(\Cm_{\widehat{\hv}_k}^2) +\trace^2(\Cm_{\widehat{\hv}_k}) + \trace(\Cm_{\widehat{\hv}_k}  \Cm_{\ev_k})  - \trace^2(\Cm_{\widehat{\hv}_k} )\right) \label{eq:var_hv_1}\\
        &=\eta \left(\trace(\Cm^2_{\widehat{\hv}_k}) + \trace(\Cm_{\widehat{\hv}_k}  \Cm_{\ev_k})\right)\nonumber \\
        &= \eta \trace(\Cm_{\widehat{\hv}_k} \Cm_{{\hv}_k}),\label{eq:var_h_v}
\end{align}
where \eqref{eq:var_hv_0} holds since $\ev_k$ and $\widehat{\hv}_k$ are uncorrelated, and where the term $ \mathbb{E}[{\widehat{\hv}_k^\herm \widehat{\hv}_k \widehat{\hv}_k^\herm \widehat{\hv}_k}]$  from \eqref{eq:var_hv_0} to \eqref{eq:var_hv_1} is derived as follows. Let $\Cm_{\widehat{\hv}_k} = \Um_{\widehat{\hv}_k} \Lambdam_{\widehat{\hv}_k} \Um_{\widehat{\hv}_k}^{\herm}$ be the eigenvalue decomposition and denote $\widetilde{\hv}_k \triangleq \Um_{\widehat{\hv}_k}^{\herm} \widehat{\hv}_k$, we have $\widetilde{\hv}_k \sim \Cc\Nc(\mathbf{0}, \Lambdam_{\widehat{\hv}_k})$.
Further denoting $\widetilde{h}_{k, i}$ as the $i$-th element of $\widetilde{\hv}_{k}$ and $\lambda_{\widehat{\hv}_k, i}$ as the $i$-th eigenvalue of $\Cm_{\widehat{\hv}_k}$, we have 
\begin{align}
        &\mathbb{E}[{\widehat{\hv}_k^\herm \widehat{\hv}_k \widehat{\hv}_k^\herm \widehat{\hv}_k}]\nonumber \\ 
        &=\mathbb{E}[\|\widetilde{\hv}_k\|^2 \|\widetilde{\hv}_k\|^2] \nonumber \\
        &= \mathbb{E}\left[\left(\sum_{i=1}^{MN} |\widetilde{h}_{k, i}|^2\right)^2\right] \label{eq:4_h_1} \\
        &= \sum_{i=1}^{MN} \mathbb{E}[|\widetilde{h}_{k, i}|^4] + \sum_{i=1}^{MN} \sum_{j=1, j\neq i}^{MN} \mathbb{E}[|\widetilde{h}_{k, i}|^2|\widetilde{h}_{k, j}|^2] \nonumber \\
        &= \sum_{i=1}^{MN} 2 \lambda_{\widehat{\hv}_k, i}^2 + \sum_{i=1}^{MN} \sum_{j=1, j\neq i}^{MN} \lambda_{\widehat{\hv}_k, i} \lambda_{\widehat{\hv}_k, j} \label{eq:4_h_0} \\
        &= \sum_{i=1}^{MN} 2 \lambda_{\widehat{\hv}_k, i}^2 + \sum_{i=1}^{MN} \lambda_{\widehat{\hv}_k, i} \left(\left(\sum_{j=1}^{MN}  \lambda_{\widehat{\hv}_k, j}\right) - \lambda_{\widehat{\hv}_k, i}\right) \nonumber \\
        &= \sum_{i=1}^{MN} 2 \lambda_{\widehat{\hv}_k, i}^2 + \sum_{i=1}^{MN} \lambda_{\widehat{\hv}_k, i} \sum_{j=1}^{MN}  \lambda_{\widehat{\hv}_k, j} - \sum_{i=1}^{MN} \lambda_{\widehat{\hv}_k, i}^2 \nonumber \\
        &= \trace(\Cm_{\widehat{\hv}_k}^2) + \trace^2(\Cm_{\widehat{\hv}_k}),
\end{align}
where \eqref{eq:4_h_0} holds based on [Appendix A.2.4]\cite{mimo2016marzetta}, which states that given a random variable $z \sim \Cc\Nc(0, \lambda)$, the fourth-order moment is given as $\mathbb{E}[|z|^4] = 2\lambda^2$.
Finally, we have
\begin{align}
        \mathbb{E}[|\hv_k^\herm \vv_j|^2] &= \eta\mathbb{E}[|(\widehat{\hv}_k + \ev_k)^\herm \widehat{\hv}_j|^2] \nonumber \\
        &=\eta \mathbb{E}[\widehat{\hv}_j^\herm ( \widehat{\hv}_k + \ev_k) (\widehat{\hv}_k + \ev_k)^\herm \widehat{\hv}_j] \nonumber \\
        &= \eta \left(\mathbb{E}[\widehat{\hv}_j^\herm \widehat{\hv}_k \widehat{\hv}_k^\herm \widehat{\hv}_j] + \mathbb{E}[\widehat{\hv}_j^\herm \ev_k \ev_k^\herm \widehat{\hv}_j] \right)\nonumber \\
        &= \eta\trace(\Cm_{\widehat{\hv}_j} \Cm_{\widehat{\hv}_k} + \Cm_{\widehat{\hv}_j}\Cm_{\ev_k} ) \nonumber \\
        &= \eta\trace( \Cm_{\widehat{\hv}_j} \Cm_{\hv_k}), \label{eq: h_v_j}
\end{align}
where \eqref{eq: h_v_j} holds due to the fact that $\Cm_{\hv_k} = \Cm_{\widehat{\hv}_k} + \Cm_{\ev_k}$.
Combining \eqref{eq:h_v}, \eqref{eq:var_h_v} and \eqref{eq: h_v_j}, the closed-form expression of the UatF rate under MRT precoding is given as \eqref{eq:R_hard_MRT}.

\section{the proof of Lemma \ref{lemma: ZF}}\label{sec:proof_ZF}
Given $\Vm  = \sqrt{\widetilde{\eta}}  \widehat{\Hm} (\widehat{\Hm}^{\herm} \widehat{\Hm})^{-1}$ in ZF precoding, according to the power constraint in \eqref{eq: V_power}, it is evident to obtain the result in \eqref{eq:V_eta_ZF}. 
Next, we derive three terms $\mathbb{E}[\hv_k^\herm \vv_k]$, ${\rm Var}(\hv_k^\herm \vv_k)$ and $\mathbb{E}[|\hv_k^{\herm} \vv_j|^2]$ for $j \neq k$ in \eqref{eq:R_hard}.
Based on the properties of ZF precoding, the inner product of $\widehat{\hv}_i$ and $\vv_j$ is given by 
\begin{equation}\label{eq:prop_ZF}
\widehat{\hv}_i^{\herm} \vv_j =
    \begin{cases}
        \sqrt{\widetilde{\eta}} & \text{if } i = j,\\
        0 & \text{if } i \neq j,
    \end{cases} ~\forall~ i, j \in [K].
\end{equation}
First, we have
\begin{align}
        \mathbb{E}[\hv_k^\herm \vv_k] &= \mathbb{E}[ (\widehat{\hv}_k + \ev_k)^{\herm} \vv_k] \nonumber \\
        &= \mathbb{E}[ \widehat{\hv}_k^{\herm} \vv_k] + \mathbb{E}[ \ev_k^{\herm} \vv_k] \nonumber \\
        &= \sqrt{\widetilde{\eta}}, \label{eq:h_v_zf}
\end{align}
where \eqref{eq:h_v_zf} holds according to \eqref{eq:prop_ZF}. Then, we have
\begin{align}
        &{\rm Var}(\hv_k^\herm \vv_k) = \mathbb{E}[(\hv_k^\herm \vv_k -\mathbb{E}[\hv_k^\herm \vv_k])(\hv_k^\herm \vv_k -\mathbb{E}[\hv_k^\herm \vv_k])^\herm] \nonumber \\
        &= \mathbb{E}[{\hv_k^\herm \vv_k\vv_k^\herm \hv_k}] - (\mathbb{E}[\hv_k^\herm \vv_k])^2 \nonumber \\
        &=  \mathbb{E}[{(\widehat{\hv}_k + \ev_k)^\herm \vv_k\vv_k^\herm (\widehat{\hv}_k + \ev_k)}] - \widetilde{\eta} \nonumber \\
        &= \mathbb{E}[{\ev_k^\herm \vv_k\vv_k^\herm \ev_k}]  +  \mathbb{E}[{\widehat{\hv}_k^\herm \vv_k\vv_k^\herm \widehat{\hv}_k}] - \widetilde{\eta} \label{eq:var_h_v_zf_0} \\
        &=  \mathbb{E}[{\ev_k^\herm \vv_k\vv_k^\herm \ev_k}] \nonumber \\
        &= \tr(\Cm_{\vv_k} \Cm_{\ev_k}), \label{eq:var_h_v_zf}
\end{align}
where $\Cm_{\vv_k} \triangleq \bE[\vv_k\vv_k^\herm]$ and  \eqref{eq:var_h_v_zf_0} holds since $\ev_k$ and $\vv_k$ are uncorrelated. Finally, we have
\begin{align}
        \mathbb{E}[|\hv_k^{\herm} \vv_j|^2] &= \mathbb{E}[|( \widehat{\hv}_k + \ev_k)^{\herm} \vv_j|^2] \nonumber \\
        &= \trace\left(\mathbb{E}[( \widehat{\hv}_k + \ev_k)^{\herm} \vv_j \vv_j^{\herm}( \widehat{\hv}_k + \ev_k)]\right)  \nonumber \\
        &= \trace\left(\mathbb{E}[\ev_k^{\herm} \vv_j \vv_j^{\herm}\ev_k]\right) \label{eq:h_k_v_j_zf_0} \\
        &= \trace(\Cm_{\vv_j} \Cm_{\ev_k}), \label{eq:h_k_v_j_zf}
\end{align}
where \eqref{eq:h_k_v_j_zf_0} holds based on \eqref{eq:prop_ZF}.
Combing \eqref{eq:h_v_zf}, \eqref{eq:var_h_v_zf} and \eqref{eq:h_k_v_j_zf}, the UatF rate under ZF precoding is given by
\begin{align}\label{eq:R_zf_org}
    R_k^{{\rm ZF}} = \log\left(1 + \frac{\widetilde{\eta} }{\tr(\Cm_{\vv_k} \Cm_{\ev_k}) + \sum_{j \neq k} \trace(\Cm_{\vv_j} \Cm_{\ev_k}) + 1}\right),
\end{align}
where the denominator in \eqref{eq:R_zf_org} can be further simplified as  
\begin{align}
     &\tr(\Cm_{\vv_k} \Cm_{\ev_k}) + \sum_{j \neq k} \trace(\Cm_{\vv_j} \Cm_{\ev_k}) = \trace\left(\left(\sum_{k=1}^K \Cm_{\vv_k}\right) \Cm_{\ev_k}\right) \nonumber \\
    &=  \trace\left(\left(\sum_{k=1}^K \mathbb{E}[\vv_k \vv_k^{\herm}]\right)\Cm_{\ev_k}\right) \nonumber \\
     &=  \trace\left(\mathbb{E}\left[\Vm \Vm^{\herm}\right]\Cm_{\ev_k}\right) \nonumber \\
     &=\widetilde{\eta} \trace\left(\mathbb{E}\left[\widehat{\Hm} (\widehat{\Hm}^{\herm} \widehat{\Hm})^{-1}(\widehat{\Hm}^{\herm} \widehat{\Hm})^{-1} \widehat{\Hm}^{\herm}\right]\Cm_{\ev_k}\right) \nonumber \\
     &=\widetilde{\eta} \trace\left(\mathbb{E}\left[\widehat{\Hm} (\widehat{\Hm}^{\herm} \widehat{\Hm})^{-2} \widehat{\Hm}^{\herm}\right]\Cm_{\ev_k}\right) \label{eq:denominator}.
\end{align}
Putting \eqref{eq:denominator} back to \eqref{eq:R_zf_org}, we have the UatF rate expressions under ZF precoding as in  \eqref{eq:R_hard_ZF}.

{\small
	\bibliographystyle{IEEEtran}
	\bibliography{references}
}

\end{document}